\newcommand {\cA}{{\mathcal{A}}}
\newcommand {\cD}{{\mathcal{D}}}
\newcommand {\cX}{{\mathcal{X}}}
\newcommand {\cY}{{\mathcal{Y}}}
\newcommand {\tC}{\overline{SC}}
\newcommand {\bC} {{\bf C}}
\newcommand {\bN} {{\bf N}}
\newcommand {\bX} {{\bf X}}
\newcommand {\bff} {{\bf f}}
\newcommand {\bu} {{\bf u}}
\newcommand {\bx} {{\bf x}}
\newcommand {\by} {{\bf y}}
\newcommand {\bm} {{\bf m}}
\newcommand{\avg}{{\rm avg}}
\newcommand {\Z} {{\mathbbm Z}}
\newcommand {\N} {{\rm I\kern-1.5pt N}}
\newcommand {\R} {{\rm I\kern-2.5pt R}}
\newtheorem{theorem}{Theorem}
\newtheorem{assm}{Assumption}
\newcommand{\beqa}{\begin{eqnarray}}
\newcommand{\eeqa}{\end{eqnarray}}
\newcommand{\beqan}{\begin{eqnarray*}}
\newcommand{\eeqan}{\end{eqnarray*}}
\newcommand{\beq}{\begin{equation}}
\newcommand{\eeq}{\end{equation}}
\newcommand{\bfl}{\begin{flushleft}}
\newcommand{\efl}{\end{flushleft}}
\newcommand{\myb}{\hspace{-0.1in}}
\newcommand{\myeq}{& \hspace{-0.1in} = & \hspace{-0.1in}}
\newcommand{\lb}{\nonumber \\}
\newcommand{\myarr}{\begin{array}{lll}}
\newcommand{\mygeq}{& \myb \geq & \myb}
\newcommand{\myleq}{& \myb \leq & \myb}
\newcommand{\myl}{& \myb < & \myb}
\newcommand{\bitem}{\begin{itemize}}
\newcommand{\eitem}{\end{itemize}}
\newcommand{\benum}{\begin{enumerate}}
\newcommand{\eenum}{\end{enumerate}}
\newcommand{\E}[1]{{\mathbbm E}\left[ #1 \right]}
\newcommand{\bP}[1]{{\mathbbm P}\left[ #1 \right]}
\newcommand{\myhb}{\hspace{-0.3in}}
\newcommand{\myf}{\hspace{0.1in}}
\def\QED{~\rule[-1pt]{5pt}{5pt}\par\medskip}
\newenvironment{proof}{{\bf Proof: \ }}{ \hfill \QED}
\newcommand{\ER}{Erd$\ddot{\rm o}$s-R$\acute{\rm e}$nyi }
\begin{document}

\title{Interdependent Security with Strategic Agents and Cascades of 
Infection}

\author{Richard J. La\thanks{This work was 
supported in part by the National Science Foundation
under Grant CCF 08-30675 and a grant from National Institute
of Standards and Technology.} 
\thanks{Author is with the Department of Electrical \& 
Computer Engineering (ECE) and the Institute for Systems 
Research (ISR) at the University of Maryland, College Park.
E-mail: hyongla@umd.edu}
}

\maketitle

\begin{abstract}
We investigate cascades in networks consisting of strategic agents
with interdependent security. We assume that the {\em strategic} agents have 
choices between i) investing in protecting themselves, ii) purchasing 
insurance to transfer (some) risks, and iii) taking no actions. 
Using a population game model, we study how various system parameters, 
such as node degrees, infection propagation rate, and the probability 
with which infected nodes transmit infection to neighbors, affect nodes'
choices at Nash equilibria and the resultant price of anarchy/stability. 
In addition, we examine how the probability that a single infected node can
spread the infection to a significant portion of the entire network, called 
{\em cascade probability}, behaves with respect to system parameters. 
In particular, we demonstrate that, at least for some parameter regimes, the 
cascade probability increases with the average degree of nodes. 
\end{abstract}

\begin{IEEEkeywords}
Cascade, contagion, interdependent security, population game, price of anarchy. 
\end{IEEEkeywords}

\IEEEpeerreviewmaketitle

\section{Introduction}	\label{sec:Introduction}

Recently, the topic of {\em interdependent security} (IDS) 
\cite{HealKun2004} has gained 
much attention from research communities. IDS arises naturally in many
areas including cybersecurity, airline security, and smart power grid,
just to name a few. Ensuring adequate security of such critical 
infrastructure and systems has emerged as one of most important
engineering and societal challenges today. 

There are several key difficulties in tackling IDS in large networks. 
First, as the name suggests, the security 
of individual entities is dependent on those of others. Second,
these entities are often strategic and are interested only in their own
objectives with little or no regards for the well being of the others.
Third, any attempt to capture and study detailed interactions among 
a large number of  
(strategic) entities suffers from the {\em curse of dimensionality}. 

Although there are no standard metrics on which experts agree for 
measuring or quantifying system-level security, 
one popular approach researchers take to measure the security of a 
network is to see
how easily an infection can spread through a network. In particular,
researchers often study the probability with which an infection will
propagate to a significant or nonnegligible fraction of the
network, starting with a {\em single} infected node in the network, 
which we call {\em cascade probability}. 

We study the cascade probability in a network composed of
{\em strategic} 
agents or nodes representing, for instance, 
organizations (e.g., companies) or network domains. The edges in 
the network are not necessarily physical edges. Instead, they could 
be {\em logical, operational} or {\em relational} edges (e.g., business 
transactions or information sharing). The degree of a 
node is defined to be the number of neighbors or incident edges it 
has in the network.\footnote{We 
assume that the network is modeled as an {\em undirected} 
graph in the paper. If the network was modeled as a {\em directed}
graph instead, the degree distribution of players we are interested 
in would be that of in-degrees.}  

In our setting, there are malicious entities, called {\em attackers}, 
which launch attacks against the nodes in the network, for example, in 
hopes of infecting the machines or gaining unauthorized access to 
information of victims. Moreover, when the attacks are successful, 
their victims also unknowingly launch {\em indirect} attacks on 
their neighbors. For this reason, when a node is vulnerable to attacks, 
it also heightens the risk of its neighbors as well, thereby 
introducing a {\em negative network externality} and influencing the 
choices of its neighbors. Network externality is also known as 
{\em network effect} \cite{ShapiroVarian}.

Faced with the possibility of being attacked either directly by 
malicious attackers or indirectly by their neighbors, nodes may
find that it is in their own interests to invest in protecting 
themselves against possible attacks, e.g., firewalls, network 
intrusion detection tools, incoming traffic monitoring, etc. 
Moreover, they may also consider purchasing 
insurance to mitigate their financial losses in 
case they fall victim to successful attacks. To capture these
choices available to nodes, 
in our model each node can select from three admissible 
actions -- Protection ($P$), Insurance ($I$) and No Action ($N$). 
When a node picks $N$, it assumes all of the risk from damages
or losses brought on by successful attacks. 

In practice, a node may be able to both invest in protecting itself 
and purchase insurance at the same time. However, because insurance 
merely {\em transfers} risk from the insured to the insurer, a purchase 
of insurance by a node that also invests in protection does not
affect the preferences of other nodes. Therefore, 
not modeling the possibility of simultaneous investment in protection and 
insurance by a node does not change other nodes' decisions to protect 
themselves. Moreover, both overall social costs, i.e., the sum of losses 
due to attacks and investments in protection, and cascade probability 
depend only on which nodes elect to invest in protection. Therefore, 
leaving out the
choice of simultaneous protection and insurance does not 
alter our main findings 
on the price of anarchy/stability (POA/POS) \cite{KoutPapa} and cascade 
probability, which are explained shortly. 

As mentioned earlier, a major hurdle to studying IDS in a large network 
consisting of many nodes is that it is difficult, if not impossible, 
to model the details of interactions among all nodes. To skirt this
difficulty, we employ a {\em population game} model 
\cite{Sandholm}. A population
game is often used to model the interactions between many players, 
possibly from different populations. 
While the population game model is clearly a simplification of a complicated 
reality, we believe that our findings based on this {\em scalable}
model offer helpful insights into more realistic scenarios.  

For our study, we adopt a well known solution concept, {\em Nash 
equilibrium} (NE) of the population game, as an approximation to 
nodes' behavior in practice. 
Our goal is to investigate how the {\em network effects} present
in IDS shape the POA/POS
and cascade probability as different
system parameters (e.g., node degree distribution and 
infection propagation rate) are varied.

The POA (resp. POS) is defined to be the {\em largest} (resp. 
{\em smallest}) ratio between the social cost at an NE and the smallest
achievable social cost. The POS can be viewed as the minimum price one 
needs to pay for {\em stability} among the players so that no player
would have an incentive to deviate from its strategy unilaterally. 
Both POS and POA have recently gained much attention as a means to 
measure the inefficiency of NEs for different games 
(e.g., \cite{KoutPapa, AGT, Rough2005}).

Our main findings and contributions can be summarized as follows:
\benum

\item There exists a threshold on degree of populations so that only the
populations with degree greater than or equal to the threshold invest in 
protection. This degree threshold decreases with an increasing 
propagation rate of infection and the probability of indirect attacks 
on neighbors.

\item In general, there may not be a unique NE of a population game. 
However, the size of each population investing in protection is 
identical at all NEs. Consequently, the overall social cost 
and cascade probability are the same for all NEs, and the POA and the 
POS are identical.

\item We provide an upper bound on the POA/POS, which is a function of 
the {\em average} degree of populations and increases superlinearly
with the average degree in many cases. Moreover, it is tight in the 
sense that we can find scenarios for which the POA is equal to the bound. 

\item In many cases, the population size investing in protection 
tends to climb with the average degree, the infection propagation 
rate, and the probability of indirect attack on neighbors. 
Somewhat surprisingly, the cascade probability also increases
at the same time as the average degree or indirect
attack probability rises. 

We suspect that this observation is a consequence of the following: 
As more of the population invests in protection, it 
produces {\em higher positive network externalities} 
on other unprotected nodes. These greater
positive externalities in turn cause {\em free riding} by 
some nodes with larger degrees which would have chosen to protect 
when the parameters were smaller. These vulnerable nodes 
with larger degrees then provide better venues for 
an infection to spread, escalating the cascade probability as a 
result.  

\eenum

We point out that our analysis of cascades is carried out under a 
simplifying assumption that local neighborhoods of nodes are 
tree-like. While this assumption is reasonable for {\em sparse} 
networks, it may not hold in some of real-world networks that have 
been shown to exhibit much stronger {\em local clustering} than 
many of random graph models~\cite{Seshadhri2012, WattsStrogatz1998}. 
For such networks with higher clustering among neighbors and 
cycles in local neighborhoods, our findings may not be directly 
applicable. 

In addition, we note that actual security investments in 
practice will likely vary considerably from one realized network
to another, even when the node degrees are identical. Moreover, 
in some cases, the exact network topology may be unknown and hard 
to obtain. For these reasons, it is difficult and, perhaps, 
uninformative to study the effects of node degree distribution 
on the basis of a limited number of random networks. Instead, we
aim to capture the {\em mean} behavior of nodes without fixing the 
network topology. It is our hope that even this simple model will 
help us understand the {\em qualitative} nature of aggregate behavior
of nodes and shed some light on how the underlying structure of 
interdependency in security shapes their security decisions and 
resulting network-level security in more realistic settings.

To the best of our knowledge, our work presented here (along with 
\cite{La_TON2013}, in which we explore {\em local} network
security seen by individual nodes and a structural relation 
between an NE and a social optimum) 
is the first study to investigate
the effects of network properties and other system parameters on 
interdependent security in networks of {\em strategic} entities. 
Although our study is based on a
population game model that does not capture {\em microscopic}
strategic 
interactions among individual nodes, we believe that it approximates
the {\em macroscopic} behavior of the nodes and our findings 
shed some light on how the underlying network topology and other 
system parameters may influence the choices of nodes in practice and 
shape the resulting network security.

The rest of the paper is organized as follows. We summarize some
of most closely related studies in Section~\ref{sec:Related}. 
Section \ref{sec:Model} outlines the population game model we adopt 
for our analysis and presents the questions of interest to us. 
Section \ref{sec:MainResults} discusses our main analytical results 
on the properties of NEs and the POA/POS, which are complemented
by numerical results in Section \ref{sec:Numerical}. We conclude in 
Section \ref{sec:Conclusion}.

\section{Related literature}		\label{sec:Related}

Due to a large volume of literature related to security and cascades of 
infection, an attempt to summarize the existing studies will be an
unproductive exercise. 
Instead, we only select several key studies that are most relevant 
to our study and discuss them briefly. Furthermore, for a summary of
related literature on IDS, we refer an interested reader to 
\cite{La_TON2013, Laszka}
and references therein. 
Here, we focus on the literature related to cascades and contagion.

First, Watts in his seminal paper \cite{Watts2002} 
studied the following question: Consider a network with $n$ nodes whose 
degree distribution is given by ${\bf p} = (p_k; \ k \in 
\Z_+)$, where $\Z_+ := \{0, 1, 2, \ldots\}$. 
Suppose that we randomly choose a single node
and infect it. Given this, what is the probability that a large
number of nodes will be infected, starting with the single 
infected node, i.e., there is a (global) cascade of infection? 
Obviously, the answer to this question depends on how the infection 
spreads. In Watts' model, each node $i$ has a random threshold $\Theta_i 
\in [0, 1]$, and it becomes infected once the fraction of its neighbors 
that are infected exceeds $\Theta_i$. 

In his analysis, rather than deriving a global cascade condition for 
finite networks, he considers an infinite network in which each 
node has degree $k$ with probability $p_k$, independently of others. 
Using a generating function approach, he then 
studies the condition under which the largest cluster of vulnerable 
nodes percolates, which he calls the {\em cascade condition}. Here, 
a node is vulnerable if its 
threshold is smaller than the inverse of its degree. 

A somewhat surprising finding in his study is that 
as the average or mean degree of nodes increases, 
the network goes through two critical (phase) transitions: Initially, when the 
average degree is very small, the network is stable in that the cascade 
probability is (near) zero. As the average degree climbs, 
after the first transition the network experiences cascades 
with nonnegligible probability. However, as the average degree rises 
further, at some point, the network becomes stable again and cascades
do not occur frequently, i.e., the cascade probability becomes very 
small once again. 

Gleeson and Cahalane \cite{Gleeson2007} extended the work of Watts.
In their model, they assumed that a certain fraction of total population 
is infected at the beginning, and showed that the existence of 
cascades exhibits high sensitivity to the size of initially infected 
population. 

Coupechoux and Lelarge examined the influence of clustering 
in social networks on diffusions and contagions in random networks
\cite{CoupLelarge1, CoupLelarge2}. In particular,
they proposed a new random graph model where some nodes are 
replaced by cliques of size equal to the degrees of the nodes. Their key findings
include the observation that, in the symmetric threshold model, the effects of 
clustering on contagion threshold depend on the mean node degree; for small mean 
degrees, clustering impedes contagion, whereas for large mean degrees, contagion\
is facilitated by clustering.

An observation similar to Watts' finding has been reported in different fields, 
including financial markets where banks and financial institutions 
(FIs) are interconnected through their overlapping investment 
portfolios and other (credit) exposures
\cite{Beale2011, Caccioli2011, Caccioli2012, GaiKapadia}. 
In a simple model \cite{Caccioli2011}, two 
FIs are connected if they share a common asset in 
their investment portfolios, and the average degree of FIs
depends on the number of available assets and how
diverse their portfolios are, i.e., how many assets each FI owns.
An interesting finding is that when the number of overlapping assets of 
FIs is small, the market is stable in that it 
can tolerate a failure of a few FIs without affecting other FIs 
significantly. As they begin to diversify their 
portfolios and spread
their investments across a larger set of assets, the market becomes
unstable in that a failure of even one or two FIs triggers a domino
effect, causing many other FIs to collapse shortly after. 
However, when they 
diversify their investment portfolios even further and include a very 
large set of assets, the market becomes stable again.

Watts' model has also been extended to scenarios where nodes are connected
by more than one type of network, e.g., social network vs. 
professional network \cite{Brummitt2012, Yagan2012}. 
For example, Ya$\breve{\rm g}$an and 
Gligor \cite{Yagan2012} investigated scenarios where
nodes are connected via two
or more networks with varying edge-level influence. In their model, 
each node switches from ``good'' to ``infected'' when
$\sum c_i \cdot m_i / \sum c_i \cdot k_i$ exceeds some
threshold, where $k_i$ and $m_i$ are the total number of neighbors
and infected neighbors, respectively, 
of the node in the $i$th network, and $c_i$ 
reflects the relative influence of the edges in the $i$th network.  
Their main finding related to the impact of average degree is similar 
in nature to that of Watts~\cite{Watts2002}. 

In another related study, Beale et al. \cite{Beale2011} studied the 
behavior of strategic banks interested in minimizing their own 
probabilities of failure. They showed that banks can lower own probability
of failure by diversifying their risks and spreading across assets. But, 
if banks follow similar diversification strategies, it can cause a (nearly) 
simultaneous collapse of multiple banks, thereby potentially compromising 
the stability of the whole financial market. This finding points to a tension 
between the stability of individual banks and that of the financial system. 
Although the authors did not attempt to quantify the loss of stability, 
this degradation in system stability is closely related to well 
known inefficiency of NEs \cite{Dubey, AGT, Pound}.  

We point out an important difference between the findings in the studies 
by Watts and others \cite{Caccioli2011, Caccioli2012, Watts2002} and ours: 
In our model, the nodes are strategic and can actively protect themselves 
when it is in their own interests to do so. 
In such scenarios, as the 
average degree increases, in many cases the network becomes more vulnerable 
in that the cascade probability rises despite that more nodes protect 
themselves (Section~\ref{sec:Numerical}). This somewhat counterintuitive 
observation is a sharp departure 
from the findings of \cite{Watts2002, Yagan2012}. 

This discrepancy is mainly caused by the following. In the model 
studied by Watts and others, the thresholds of nodes are given by 
independent and identically distributed (i.i.d.) random variables 
(rvs), and their distribution does {\em not} depend on the average
degree or node degrees. Due to this independence of the distribution of
thresholds on degrees, as the average degree increases, 
a larger number of neighbors need to be infected before a node
switches to an ``infected" state. For this reason, nodes become
less vulnerable. Since only a single node is infected at the 
beginning, diminishing vulnerability of nodes makes it harder 
for the infection to propagate to a large portion
of the network. 

In contrast, in a network comprising strategic players 
with {\em heterogeneous} degrees, at least for some parameter
regimes, we observe {\em free riding} by nodes with smaller
degrees. A similar free riding is also observed in the 
context of information reliability \cite{Varian}. Interestingly, 
as we show in Section~\ref{sec:MainResults}, when the average 
degree rises, both the fraction of protected population and 
the {\em degree threshold} mentioned in Section \ref{sec:Introduction} 
tend to climb, at least in some parameter space of interest 
(Section~\ref{sec:Numerical}). 

We suspect that the upturn in degree threshold is a consequence of
stronger positive network externalities produced by the investments in 
protection by an {\em increasing} number of higher degree nodes; 
greater positive externalities cause some nodes with larger degrees, 
which would protect themselves when the average degree was smaller, 
to free ride instead. As stated in 
Section~\ref{sec:Introduction}, these unprotected nodes with 
increasing degrees allow an initial infection to propagate 
throughout the network more easily, leading to larger cascade
probability. 

In \cite{La_TON2013}, we carry out a related study with  
some emphasis on cybersecurity. 
However, its model is different from that 
employed here: \cite{La_TON2013} assumes that infections spread 
only to immediate neighbors, whereas the current model allows 
infections to transmit multiple hops. On the other hand, instead of 
binary security choices assumed here, in \cite{La_TON2013} we allow
$M$ ($M \geq 1$) different 
protection levels nodes can select from, in order to capture 
varying cybersecurity measures they can pick.  
Also, an insurer may require a minimum level of protection before 
a node can purchase insurance, and the insurance premium may depend 
on the node's protection level. 

More importantly, besides the differences in their models, there are 
major disparities in the main focus and key findings of these two 
studies. While both studies aim to understand how network 
security is influenced by system parameters, 
we examine in \cite{La_TON2013} network security 
from the {\em viewpoint of a node with a fixed degree}  
as the node degree distribution varies. 
A main finding of \cite{La_TON2013} is that, as the degree 
distribution of neighbors becomes (stochastically) larger, under a set
of assumptions, the average risk seen from neighbors tends to 
{\em diminish} at NEs. In this sense, from the standpoint of
a node with a fixed degree, the network security improves and, 
as a result, the security investments of
nodes with a fixed degree decline. 
Finally, \cite{La_TON2013} also investigates the structural 
relation between an NE and a social optimum that minimizes the 
overall social cost, with the goal of identifying a possible means of 
{\em internalizing} the externalities produced by nodes
\cite{Varian_Microeconomics}.

\section{Model and problem formulation}	\label{sec:Model}

The nodes\footnote{We will use the words {\em nodes} and {\em 
players} interchangeably in the remainder of the manuscript.} 
in a network representing private companies or organizations 
are likely to be interested only in their own objectives. Thus,  
we assume that they are strategic and model 
their interactions as a {\em noncooperative game},
in which players are the nodes in the network. 

We focus on scenarios where the number of nodes is very large.
Unfortunately, 
as stated before, modeling detailed interactions among many nodes 
and analyzing ensuing games is challenging, if possible at all.
A main difficulty is that 
the number of possible strategy profiles we need to consider grows
exponentially with the number of players, and characterizing the NEs 
of games is often demanding even with a modest number 
of players. Moreover, even when the NEs can be computed, it is often 
difficult to draw insight from them. 

For analytical tractability, we employ a {\em population game} model 
\cite{Sandholm}. 
Population games provide a unified framework and tools for studying
{\em strategic interactions} among {\em a large number of agents} 
under following assumptions \cite{Sandholm}. 
First, the choice of an individual agent 
has very little effect on the payoffs of other agents. 
Second, there are finitely many populations of agents, and each 
agent is a member of exactly one population. 
Third, the payoff of each agent depends only on the {\em 
distribution} of actions chosen by members of each population.
In other words, if two agents belonging to the same population
swap their actions, it does not change the payoffs of other
agents. For a detailed discussion of population games, 
we refer an interested reader to the manuscript by 
Sandholm~\cite{Sandholm}.

Our population game model does not capture the microscopic {\em 
edge level} interactions between every pair of neighbors.
Instead, it attempts to capture the {\em mean} behavior of nodes
with varying degrees,   
{\em without} assuming any given network. An advantage of this model 
is that it provides a {\em scalable} model that enables us to study 
the effects of various system parameters on the overall system 
security {\em regardless of} the network size. Moreover, the spirit
behind our population game model is in line with that of Watts'
model \cite{Watts2002} and its extensions (e.g., \cite{Gleeson2007, 
Yagan2012}). 

The notation we adopt throughout the paper is listed in 
Table~\ref{tab:notation}. 

\begin{table}[h]
\begin{center}
 
\begin{tabular}{c|l}
\hline 
$\cA$ & (pure) action space ($\cA = \{I, N, P\}$) \\
$\bC$ & cost function of population game \\
$\bC_{d,a}(\bx)$ & cost of a node from pop. $d$ playing action $a$  \\
$\cD$ & set of node degrees ($\cD = \{1, 2, \ldots, D_{\max}\}$) \\
$D_{\max}$ & maximum degree among nodes \\
$Ins(\bx, d)$ & insurance payout to an insured node from pop. $d$ \\
$K$ & maximum hop distance an infection can propagate \\
$L_P$ & expected loss from an attack for a protected node \\
$L_U$ & expected loss from an attack for a unprotected node \\
$\Delta L$ & $L_U - L_P$ \\
$\bN(\bm, K, \beta_{IA})$ & a Nash equilibrium for given 
	$\bm$, $K$ and $\beta_{IA}$ \\
$c_P$ & cost of protection \\
$c_I$ & insurance premium \\
$d_{\avg}$ & average or mean degree of nodes \\
$d^{NE}$ & degree threshold at a Nash equilibrium \\
$e(\bx)$ & risk exposure at social state $\bx$ \\
$f_d$ or $f_d(\bm)$ & fraction of pop. with degree $d$ \\
$g_{d,a}$ & fraction of pop. $d$ playing action $a$  
	($g_{d,a} = x_{d,a} / m_d$)\\
$\bm$ & pop. size vector ($\bm = (m_d; \ d \in \cD)$) \\
$m_d$ & mass or size of pop. $d$  \\
$p^{i}_{P}$ & prob. of infection for protected nodes  \\
$p^{i}_{U}$ & prob. of infection for unprotected nodes \\
$\Delta p$ & $p^i_{U} - p^i_P$ \\
$w_d$ or $w_d(\bm)$ & weighted fraction of pop. with degree $d$ \\
$\bx$ & social state ($\bx = (\bx_d; \ d \in \cD)$) \\
$\bx_d$ & pop. state of pop. $d$ ($\bx_d = (x_{d, a}; \ a \in \cA)$) \\
$x_{d, a}$ & size of pop. $d$ playing action $a$ \\
$\by^\star(\bm, K, \beta_{IA})$ & a social optimum for given $\bm$, $K$ 
	and $\beta_{IA}$ \\
$\beta_{IA}$ & prob. of indirect attack on a neighbor \\
$\gamma(\bx)$ & prob. that a node will experience an indirect attack \\
	& \myf from a neighbor when the neighbor is attacked \\
$\tau_{DA}$ & prob. that a node experiences a direct attack \\
$\xi_{{\rm cov}}$ & fraction of insurance coverage over deductible \\
\hline
\end{tabular}

\caption{Notation ({\rm pop. $=$ population, prob. $=$ probability}).}
\label{tab:notation}
\end{center}
\end{table}

\subsection{Population game} 	\label{subsec:PG}

We assume that the maximum degree among all nodes is $D_{\max} 
< \infty$. For each $d \in \{1, 2, \ldots, D_{\max}\} 
=: \cD$, population $d$ consists of all nodes with common degree 
$d$.\footnote{Since population $d$, $d \in \cD$, 
comprises all nodes with degree $d$, we also refer to $d$ as 
the degree of population $d$ hereafter. In addition, 
we implicitly assume that there is no isolated node with $d = 0$; 
since isolated nodes do not interact with any other nodes, they are 
of little interest to us.}
We denote the {\em mass} or {\em size} of population $d$ by 
$m_d$, and ${\bf m} := \big( m_d; \ d \in {\cal D} \big)$ is the 
population size vector that tells us the 
sizes of populations with different degrees. Note that $m_d$ does
{\em not} necessarily  represent the {\em number} of agents in 
population $d$; instead, an implicit modeling assumption is
that each population consists of so many agents that a population 
$d \in {\cal D}$ can be approximated as a {\em continuum} of 
{\em mass} or {\em size} $m_d \in (0, \infty)$.\footnote{The
degree-based model we adopt in the study is often known as the 
Chung-Lu model \cite{ChungLu} or the configuration model
\cite{MolloyReed1995, MolloyReed1998}.} 

All players have the same action space ${\cal A} := \{ I, N, P \}$
consisting of three actions -- Insurance ($I$), No Action ($N$),
and Protection ($P$).\footnote{There are other studies where
the investment in security is restricted to a binary case, 
e.g., \cite{BolotLelarge2008, 
KunHeal2003, LelargeBolot2009}. In addition, 
while various insurance contracts may be available on the market 
in practice, as mentioned earlier, since insurance does not affect
the preferences of other players, we believe that the qualitative 
nature of our findings will hold even when different insurance 
contracts are offered.} 
Investment in protection effectively 
reduces potential damages or losses, hence, the {\em risk} for the player. 
In contrast, as mentioned before, insurance simply {\em shifts} 
the risk from the insured to the insurer, without affecting the overall 
societal cost \cite{RA}. For this reason, we focus on
understanding how underlying network properties and 
other system parameters govern the choices 
of players to protect themselves as a function of their degrees
and ensuing social costs. 

{\bf i. Population states and social state --}
We denote by ${\bf x}_d = \big( x_{d, a}; \ a \in \cA \big)$, where
$\sum_{a \in \cA} x_{d, a} = m_d$, the {\em population state} of 
population $d$. The elements $x_{d, a}$, $a \in \cA$, represent
the mass or size of population $d$ which employs action $a$. 
Define ${\bf x} := \big( {\bf x}_d ; \ d \in \cD \big)$ to be the
{\em social state}. Let $\cX_d := \big\{ {\bf x}_d \in \R_+^3 
\ | \  \sum_{a \in \cA} x_{d, a} = m_d \big\}$, where $\R_+ :=
[0, \infty)$, and $\cX := \prod_{d \in \cD} \cX_d$. 
\\ \vspace{-0.1in}

{\bf ii. Costs --} The cost function of the game is denoted by 
${\bf C}: \cX \to \R^{3 D_{\max}}$. For each admissible social 
state ${\bf x} \in \cX$, the cost of a player from population 
$d$ playing action $a \in {\cal A}$ is equal to 
${\bf C}_{d, a}({\bf x})$. In addition to the cost of investing
in protection or purchasing insurance, the costs 
depend on (i) expected losses from attacks and (ii) insurance
coverage when a player is insured. 

In order to explore how network effects and system parameters
determine the preferences of players, we model 
two different types of attacks players experience -- {\em direct} 
and {\em indirect}. While the first type of attacks are not dependent
on the network, the latter depends critically on the underlying 
network and system parameters, 
thereby allowing us to capture the desired {\em network 
effects} on players' choices. 

{\em a) Direct attacks: }
We assume that malicious attacker(s) launch an attack on each 
node with probability $\tau_{DA}$, independently of other 
players.\footnote{Our model can be altered to capture the 
intensity or frequencies of attacks instead, with appropriate
changes to cost functions of the players.} We call
this a {\em direct} attack. When a player experiences a direct
attack, its (expected) cost depends on whether or not 
it is {\em protected}; if the player is protected, 
its cost is given by $L_P$. Otherwise, its cost is equal
to $L_U ( > L_P )$. 

These costs can be interpreted in many different ways. We
take the following interpretation in this paper. 
Assume that each attack leads to a successful
{\em infection} with some probability that depends on the action
chosen by the player. When the player plays $P$, an attack
is successful with probability $p^i_P$, in which case the cost
to the player is given by some rv $C_P$. Otherwise, 
the probability of successful infection is $p^i_U$ and the 
player's cost is given by rv $C_U$, whose distribution may be
different from that of $C_P$. Then, the {\em 
expected} cost due to an infection when attacked is equal to
$p^i_P \cdot \E{C_P}$ when a player is protected and 
$p^i_U \cdot \E{C_U}$ otherwise. 
One can view these expected costs $p^i_P \cdot \E{C_P}$ and 
$p^i_U \cdot
\E{C_U}$ as $L_P$ and $L_U$, respectively, in our model. 
Throughout the paper, we assume $0 \leq p_P^i < p_U^i \leq 1$
and denote the difference $L_U - L_P$ by $\Delta L > 0$. 

{\em b) Indirect attacks: }
Besides the direct attacks by malicious attackers, a player may
also experience {\em indirect} attacks from its neighbors that are
victims of successful attacks and are infected. In order to 
control the manner in which infections spread in the network via 
indirect attacks, we introduce two parameters. First, we assume that an
infected node will launch an indirect attack on
each of its neighbors with probability $\beta_{IA} \in (0, 1]$
independently of each other. We call $\beta_{IA}$ indirect attack 
probability (IAP). Second,
an infection due to a successful {\em direct} attack can propagate
only up to $K \in \N := \{ 1, 2, 3, \ldots \}$ 
hops from its victim.\footnote{This
parameter $K$ can instead be viewed as an average hop 
distance infections spread
with appropriate changes to the cost function.} 
The IAP $\beta_{IA}$ primarily affects the {\em local} 
spreading behavior, whereas the parameter 
$K$ influences how quickly an infection can spread before appropriate 
countermeasures are taken, e.g., a release of patches or vaccines. 
Clearly, as $K$ increases, the infection can potentially spread to a 
larger portion of the network.  

In our model, we assume that the IAP is the same 
whether the infected
node is protected or not, which is reasonable in some cases, e.g., 
spread of computer viruses or worms. But, in some other scenarios, 
this assumption may not hold. For instance, 
in a disease epidemic scenario, e.g., flu, those who are vaccinated
are not only less likely to contract the disease, but also more likely to
recover faster than those who are not vaccinated, 
thereby reducing the odds of transmitting it to others around
them even if they become infected.

Based on the above assumptions, we proceed to derive the cost function 
${\bf C}$ for our population game. Let us denote the mapping 
that yields the {\em degree distribution} of 
populations by ${\bf f}: \R_+^{D_{\max}} \to [0, 1]^{D_{\max}}$, 
where 
\beqan
f_d({\bf m}) = \frac{ m_d }{ \sum_{d' \in \cD} m_{d'} }, 
	\ {\bf m} \in \R_+^{D_{\max}} \mbox{ and } d \in \cD, 
\eeqan 
is the fraction of total population with degree $d$. 
Similarly, define ${\bf w}: \R_+^{D_{\max}} \to [0, 1]^{D_{\max}}$, 
where 
\beqa
w_d({\bf m}) 
\myeq \frac{ d \cdot m_d }{ \sum_{d' \in \cD} d' \cdot m_{d'} }, 
\ {\bf m} \in \R_+^{D_{\max}} \mbox{ and } d \in \cD. 
	\label{eq:bw}
\eeqa
It is clear from the above definition that ${\bf w}$ gives us
the {\em weighted} degree distribution of populations, where
the weights are the degrees. 

Clearly, both ${\bf f}$ and ${\bf w}$ are scale invariant. 
In other words, ${\bf f}({\bf m}) = {\bf f}(\phi \cdot {\bf m})$ 
and ${\bf w}({\bf m}) = {\bf w}( \phi \cdot {\bf m})$ for all 
$\phi > 0$. 
When there is no confusion, we write $\bff$ and ${\bf w}$ in place of
$\bff({\bf m})$ and ${\bf w}({\bf m})$, respectively. 

We explain the role of the mapping ${\bf w}$ briefly. 
Suppose that we fix a social state ${\bf x} \in \cX$ and  
choose a player. The probability that a randomly picked neighbor 
of the chosen player belongs to population $d \in \cD$ is approximately 
$w_d$ because it is proportional to the degree $d$ 
\cite{Callaway, Watts2002}.\footnote{A more careful analysis of the
degree distribution of a randomly selected neighbor is carried out 
in \cite{Makowski2013}, which suggests that it is somewhat
different from what we use here as an approximation. However, 
for large networks without isolated nodes, this discrepancy in 
distributions should be small.}
Hence, the probability that the neighbor has 
degree $d$ and plays action $a \in \cA$ is roughly $w_d \cdot 
x_{d,a} / m_d$. We will use these approximations throughout the
paper. 

Let $\Gamma_k(\bx), \ k \in \N,$ denote the {\em expected}
number of indirect attacks a node, say $i$, experiences through a 
{\em single} neighbor, say $j$, due to successful
{\em direct} attacks on nodes that are $k$ hops away
from node $i$. 
Based on the above observation, we approximate $\Gamma_k(\bx)$
as follows under the assumption that the $K$-hop 
neighborhood of a node can be approximated using
a tree-like 
structure.\footnote{As stated in Section~\ref{sec:Introduction}, 
this assumption may not hold in some of real-world networks as
reported in \cite{Seshadhri2012, WattsStrogatz1998}.
But, we make this assumption to facilitate our analysis. The same 
assumption is introduced in \cite{Gleeson2007, Watts2002, Yagan2012} 
as well.}  For notational
ease, we denote the fraction of population $d \in \cD$ that adopts
action $a \in \cA$ (i.e., $x_{d,a} / m_d$) 
by $g_{d,a}$ and the fraction of unprotected population $d
\in \cD$ (i.e., $(x_{d, N} + x_{d,I}) / m_d = g_{d, N} + g_{d,I}$) 
by $g_{d,U}$ hereafter.  

First, for $k = 1$, 
\beqa
\Gamma_1({\bf x}) 
\myeq \tau_{DA} \cdot \gamma({\bf x}), 
	\label{eq:Gamma_1}
\eeqa
where
\beqa
\gamma({\bf x}) 
\myeq \beta_{IA} \left( \sum_{d \in \cD} w_d  \left( 
 	g_{d,P} \ p_P^i 
	+ g_{d,U} \ p_U^i 
		\right) \right) \lb
\myeq \beta_{IA} \left( p_U^i - \frac{\Delta p}{d_{\avg} \cdot 
	\sum_{d' \in \cD} m_{d'}} 
	\sum_{d \in \cD} d \cdot x_{d,P} \right), 
	\label{eq:gamma}
\eeqa
$\Delta p := p_U^i - p_P^i > 0$, and $d_{\avg} := \sum_{d \in 
\cD} d \cdot f_d$ is the average or mean degree of the 
populations. Note that, from the above assumption,  
$\sum_{d \in \cD} w_d \cdot g_{d,P}$ (resp. $\sum_{d \in \cD}
w_d \cdot g_{d, U}$) 
is the probability that a randomly chosen neighbor is protected (resp.
unprotected). 
By its definition, $\gamma(\bx)$ is the probability that a 
node will see an indirect attack from a (randomly selected) neighbor 
in the event that the neighbor experiences an attack first. 
Similar models have been used extensively in the 
literature (e.g., \cite{Watts2002, Yagan2012}). 
Thus, if the degree of node $i$ 
is $d_i \in \cD$, the expected number of indirect attacks node $i$ 
suffers as a one-hop
neighbor of the victims of successful direct attacks can be 
approximated using $d_i \cdot \Gamma_1(\bx)$. 

Other $\Gamma_k(\bx), k \in \{2, \ldots, K\}$, can be computed in 
an analogous fashion. Suppose that the neighbor $j$ of node
$i$ has degree $d_j \in \cD$. Then, by similar reasoning, 
the expected number of indirect attacks node $j$ suffers as an 
immediate neighbor of the victims of successful direct attacks
other than node $i$ is $(d_j-1) \Gamma_1(\bx)$.
Hence, the expected number of indirect attacks node $i$ sees 
as a two-hop neighbor of the victims of successful direct attacks through 
a single neighbor is given by 
\beqan
&& \myhb \beta_{IA} \left( \sum_{d \in \cD} w_d
	\left( g_{d,P} \ p_P^i 
	+ g_{d,U} \ p_U^i \right)
	\times (d-1) \Gamma_1(\bx) \right) \lb
\myeq \Gamma_1(\bx) \cdot \beta_{IA} \left( \sum_{d \in \cD} w_d (d-1)
	\left( g_{d,P} \ p_P^i 
	+  g_{d,U} \ p_U^i \right)  \right).
\eeqan
Following a similar argument and making use of assumed tree-like
$K$-hop neighborhood structure, we have the following recursive
equation for $k \in \{2, 3, \ldots, K\}$:
\beqa
&& \myhb \Gamma_k(\bx) \lb
\myeq \Gamma_{k-1}(\bx) \cdot \beta_{IA} \left( \sum_{d \in \cD} w_d 
	(d - 1)
	\left( g_{d,P} \ p_P^i 
	+ g_{d, U} \ p_U^i \right) \right) \lb
\myeq \Gamma_{k-1}(\bx) \ \lambda(\bx)
	= \Gamma_1(\bx) \ \lambda(\bx)^{k-1},
	\label{eq:Gamma_k}
\eeqa
where 
\beqa
\myb \lambda(\bx)
& \myb := & \myb \beta_{IA} \left( \sum_{d \in \cD} w_d (d-1)
	\left( g_{d,P} \ p_P^i 
	+ g_{d,U} \ p_U^i \right) \right) \lb
\myeq \beta_{IA} \left( \sum_{d \in \cD} w_d (d-1)
	\left( p_U^i - g_{d,P} \Delta p \right) \right).
	\label{eq:lambda}
\eeqa

Define 
\beqa
e(\bx) 
\myeq \frac{1}{\tau_{DA}} \sum_{k=1}^K \Gamma_k(\bx)
	= \gamma(\bx) \sum_{k=1}^K \lambda(\bx)^{k-1}
	\label{eq:exposure}  \\
\myeq \left\{ \begin{array}{cl}
	\gamma(\bx) \frac{1 - \lambda(\bx)^K}{1 - \lambda(\bx)}
		& \mbox{if } \lambda(\bx) \neq 1, \\
	K \cdot \gamma(\bx) & \mbox{if } \lambda(\bx) = 1. 
	\end{array} \right. 
	\nonumber
\eeqa
We call $e(\bx)$ the (risk) {\em exposure} from a neighbor at social 
state $\bx$. It captures the expected total number of indirect 
attacks a player experiences through a {\em single} (randomly chosen) 
neighbor {\em given} that all nodes suffer a direct
attack with probability one (i.e., $\tau_{DA} = 1$).

We point out two observations regarding the risk exposure. Recall that 
$g_{d, P} = x_{d, P} / m_d$, $ d \in \cD$, denotes the {\em fraction} 
of population $d$ which is protected. 
First, from its definition in (\ref{eq:exposure}) and 
eqs. (\ref{eq:Gamma_1}) - (\ref{eq:lambda}), 
the exposure is determined by $(x_{d, P}; \ d \in \cD)$ 
or, equivalently, $(g_{d,P}; \ d \in \cD)$, without having to know
$(x_{d, I}; \ d \in \cD)$ or $(x_{d, N}; \ d \in \cD)$; each summand
in (\ref{eq:exposure}) can be computed from $\gamma(\bx)$ and 
$\lambda(\bx)$, both of which are determined by $(x_{d, P}; \ d \in \cD)$
or $(g_{d, P}; \ d \in \cD)$ according to (\ref{eq:gamma}) and 
(\ref{eq:lambda}). 
Second, the risk exposure is strictly decreasing in each $x_{d,P}, \ 
d \in \cD$; due to the minus sign in front of $x_{d,P}$ in 
(\ref{eq:gamma}) and $g_{d,P}$ in (\ref{eq:lambda}), 
$\gamma(\bx)$ (resp. $\lambda(\bx)$) is strictly decreasing 
(resp. nonincreasing) in $x_{d,P}$, $d \in \cD$.

We assume that the costs of a player due to multiple 
successful attacks are
additive and that the players are risk neutral.\footnote{While we
assume that the players are risk neutral to simplify the proofs of
our analytical findings in Section~\ref{sec:MainResults}, 
risk aversion can be modeled by altering the cost function and 
similar qualitative findings can be reached at the expense
of more cumbersome proofs; when they are risk averse, we expect the
percentage of population investing in protection or purchasing 
insurance to increase, the extent of which will depend on the level
of risk aversion.} Hence, the expected cost of a player from indirect attacks 
is proportional to $e(\bx)$ and its degree. Based on this
observation, we adopt the following cost function for our 
population game: For any given social state ${\bf x} \in \cX$, the 
cost of a player with degree $d \in \cD$ playing $a \in \cA$ is given 
by 
\beqa
&& \myhb {\bf C}_{d, a}({\bf x}) \lb
\myeq \left\{ \begin{array}{cl} 
	\tau_{DA} \left( 1 + d \cdot e({\bf x}) \right) L_P + c_P
		& \mbox{if } a = P, \\
	\tau_{DA} \left( 1 + d \cdot e({\bf x}) \right) L_U
		& \mbox{if } a = N, \\
	\tau_{DA} \left( 1 + d \cdot e({\bf x}) \right) L_U + c_I
		- Ins({\bf x}, d)
		& \mbox{if } a = I, \\
\end{array} \right. 
	\label{eq:Cost}
\eeqa
where $c_P$ and $c_I$ denote the cost of protection and insurance premium, 
respectively, and $Ins: \cX \times \cD \to \R$ is a mapping that determines
(expected) insurance payout as a function of social state and degree. 
Note that $\tau_{DA} \left( 1 + d \cdot e(\bx) \right)$ is the expected 
number of attacks seen by a node with degree $d$, including both direct and
indirect attacks. 

We assume that the insurance payout for an insured player of degree 
$d \in \cD$ is given by 
\beqa
\ Ins({\bf x}, d)
\myeq \min \big( Cov_{\max}, \ \xi_{{\rm cov}} ({\bf C}_{d,N}({\bf x}) 
	- ded)^+ \big), \lb 
&& \hspace{1.5in} \bx \in \cX, 
	\label{eq:insurance}
\eeqa
where $Cov_{\max}$ is the maximum loss/damage covered by the insurance
policy, $ded$ is the deductible amount, $\xi_{{\rm cov}} \in (0, 1]$ 
is the {\em coverage level}, i.e., the fraction of total damage over 
the deductible amount covered by the 
insurance (up to $Cov_{\max}$), and $(z)^+$ denotes 
$\max(0, z)$. Recall that ${\bf C}_{d,N}(\bx)$ is the cost a node of 
degree $d$ sees from attacks when unprotected. 
As one might expect, the difference in costs
between actions $N$ and $I$ is equal to the insurance 
premium minus the insurance payout, i.e., $c_I - 
Ins({\bf x}, d)$. Moreover, it is clear from (\ref{eq:gamma})
- (\ref{eq:insurance}) that the cost of a player depends on both
its own security level (i.e., protection vs. no protection) and 
those of other players through the exposure $e(\bx)$.

\subsection{Solution concept - Nash equilibria}

We employ a popular solution concept for our study, namely Nash 
equilibria. A social state ${\bf x}^\star$ is an NE if 
it satisfies the condition that, for all $d \in \cD$ and $a \in \cA$, 
\beqa
x_{d,a}^\star > 0 \mbox{ implies } 
{\bf C}_{d,a}({\bf x}^\star) 
\myeq \min_{a' \in \cA} {\bf C}_{d, a'}({\bf x}^\star). 
	\label{eq:NE}
\eeqa
The existence of an NE in a population game is always guaranteed
\cite[Theorem 2.1.1, p. 24]{Sandholm}.

We discuss an important observation that facilitates our study. 
From (\ref{eq:gamma}) - (\ref{eq:insurance}), the cost 
function also has a scale invariance property, i.e., 
${\bf C}(\bx) = {\bf C}(\phi \cdot \bx)$ for all $\phi > 0$. 
This offers a scalable model that permits us to examine the effects 
of various system parameters (e.g., degree distribution of nodes, 
parameter $K$ and 
IAP $\beta_{IA}$) on NEs without suffering from 
the curse of dimensionality even when the population sizes are very 
large: Suppose that ${\cal NE}^\star$ denotes the 
set of NEs for a given population size vector ${\bf m}^1$. Then, 
the set of NEs for another population size vector ${\bf m}^2 = 
\phi \cdot {\bf m}^1$ for some $\phi > 0$ is given by 
$\big\{ \phi \cdot \tilde{\bx} \ | \ \tilde{\bx} \in {\cal NE}^\star \big\}$. 
This in turn means that the set of NEs scaled by the inverse of the 
total population size is the same for all population size vectors with
the identical degree distribution. For this reason, it suffices to study
the NEs for population size vectors whose sum is equal to one, 
i.e., $\sum_{d \in \cD} m_d = 1$. We will make use of this observation
in our analysis in Sections \ref{sec:MainResults} and 
\ref{sec:Numerical}.
\\ \vspace{-0.1in}

\begin{assm} 	\label{assm:0}
We assume that the population size vectors are normalized so that
the total population size is one. 
\\ \vspace{-0.1in}
\end{assm}

Note that Assumption~\ref{assm:0} implies that the population size
vector ${\bf m}$ and its degree distribution $\bff({\bf m})$ are
identical. Hence, a population size
vector also serves as the degree distribution. 
\\ \vspace{-0.1in}

\subsection{Cascades of infection}		\label{subsec:Cascade}

Our model described in the previous subsections aims to capture
the interaction between strategic players in IDS scenarios under the 
assumption that infections typically do not spread 
more than $K$ hops. However, some malwares may disseminate
unnoticed (for example, using so-called zero-day exploits 
\cite{BilgeDumit2012}) or benefit from slow responses by software
developers. When they are allowed to proliferate unhindered 
for an extended period, they may reach a greater
portion of the network than typical infections or malwares can. 
In this subsection, we investigate whether or not such malwares
can spread to a large number of nodes in the network by determining
when {\em cascades} of infection are possible. 

In order to simplify the analysis, we follow an approach similar to
the one employed in 
\cite{Watts2002}. Rather than analyzing a large finite network, we 
consider an infinite network in which the degree
of each node is $d$ with probability $f_d(\bm) = m_d$, $d \in \cD$, 
independently of each other.  By the strong law of large numbers, 
the fraction of nodes with degree $d$ converges to $m_d$ almost
surely for all $d \in \cD$. The elements $x_{d,a}$, $d \in \cD$ and
$a \in \cA$, of the social state $\bx$ can now be interpreted as the 
{\em fraction} of nodes that have degree $d$ and play action $a$. 
Using this model, we look for a condition under which the probability
that the number of infected nodes diverges is strictly positive.
We call this the {\em cascade condition}. 

Fix social state $\bx \in \cX$. 
When there is no confusion, we omit the dependence 
on the social state $\bx$ for notational convenience. 
Suppose that we randomly choose a node, say $i$, 
and then randomly select one of its neighbors, say node $j$. As 
argued in Section~\ref{subsec:PG}, the probability that node $j$ 
is vulnerable, i.e., it will be infected if attacked, is given by
\beqan
\sum_{d \in \cD} w_d \left( g_{d, P} \cdot p_P^i + (1 - g_{d,P}) \ 
	p_U^i \right) = \gamma(\bx) / \beta_{IA}. 
\eeqan

Suppose that we initially infect node $i$, and let the infection 
work its way through the network via indirect attacks (with no 
constraint on $K$). We call the resulting set of all infected nodes
the {\em infected cluster}. When the size of infected cluster is 
infinite, we say that a {\em cascade of infection} took place.

In our model, the nodes are {\em strategic} players and can 
change their actions in response to those of other nodes. 
Therefore, how widely an infection can disseminate starting
with a single infected node, depends on the actions taken by
the nodes at social state $\bx$, 
which are {\em interdependent} via their objectives. 
We are interested in exploring how (a) the probability of 
cascade at NEs and (b) the POA/POS vary 
as we change (i) the node degree distribution, 
(ii) parameter $K$ and (iii) IAP $\beta_{IA}$. 
This study can be carried 
out under assumptions similar to those in \cite{La_TON2013, 
Watts2002, Yagan2012} as explained below.

Following analogous steps as in \cite{Watts2002}, we assume that the 
infected cluster has a tree-like
structure with no cycle. As argued in \cite{Watts2002}, this is a
reasonable approximation when the cluster is sparsely connected.

\begin{figure}[h]
  \centering
  \includegraphics[width=0.35\textwidth]{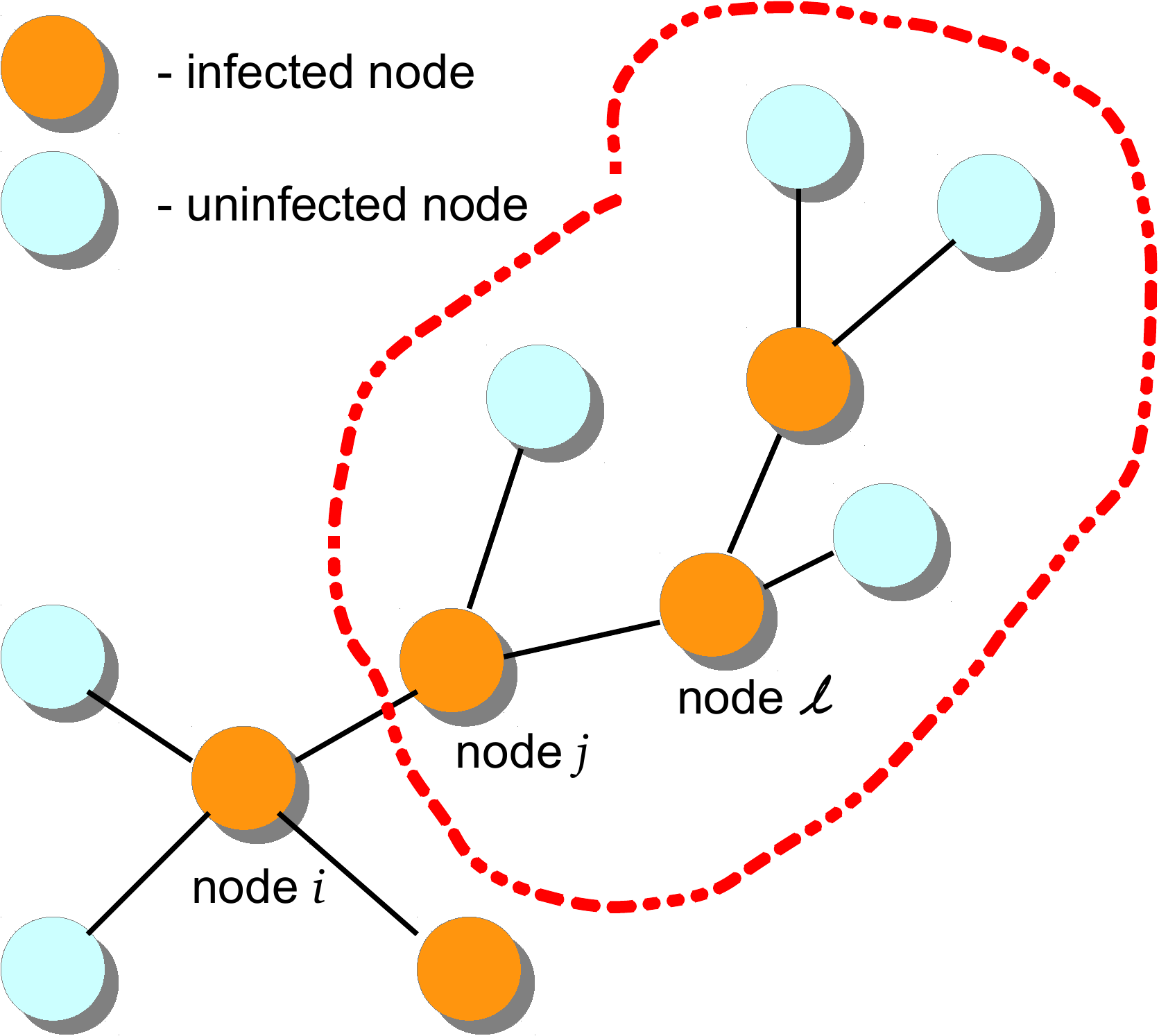}
  \caption{Infected subcluster containing node $j$ and its size $C_j$.}
  \label{fig:cluster}
\end{figure}

Denote the set of node $i$'s neighbors by ${\cal N}_i$. For each $j \in 
{\cal N}_i$, let $C_j$ be the size of the infected subcluster
including node $j$ after removing the remaining cluster connected 
to node $j$ by the edge between nodes $i$ and $j$. An example
is shown in 
Fig.~\ref{fig:cluster}. In the figure, the infected subcluster containing
node $j$ lies inside the dotted red curve.
In this example, $C_j = 3$. 
When neighbor $j$ is not infected, we set $C_j = 0$.
It is clear that a cascade of infection or contagion happens if and only if 
$C_j = \infty$ for some $j \in {\cal N}_i$.

When a neighbor $j$ is infected, the number of $k$-hop neighbors of 
node $j$ in the aforementioned
infected subcluster can be viewed as the size of $k$-th generation in 
Galton-Watson (G-W) model \cite{Grimmett, WatsonGalton}, starting with a single 
individual: Suppose that node $j$ is infected by node $i$ and 
that node $\ell$ is a $k$-hop neighbor of node
$j$ in the infected subcluster that includes node $j$, for some $k \in 
\Z_+$. When $k = 0$, node
$\ell$ is node $j$ itself. Let $N$ denote the number of node $\ell$'s 
infected neighbors that are $k+1$ hops away from node $j$ 
in the same subcluster and contract the 
infection from node $\ell$. 
In the example of Fig.~\ref{fig:cluster}, node $\ell$ is one-hop away from
node $j$ and the rv $N = 1$. 

From its construction, the distribution of $N$ does not depend on 
$k$, and its probability mass function (PMF) ${\bf q}_N : \R \to [0, 1]$ is 
given by
\beqan
&& \myhb {\bf q}_N(n) \lb
\myeq 
\left\{ \begin{array}{l}
	\sum_{d \in \{n+1, \ldots, D_{\max} \} } w^{in}_d {{d-1}\choose{n}} 
		\gamma(\bx)^n (1 - \gamma(\bx))^{d-1-n}  \\
  	\hspace{0.17in} \mbox{ if } n \in \{0, 1, \ldots, D_{\max} - 1\}, \\
	0 \hspace{0.09in} \mbox{ otherwise},
  	\end{array} \right. 
\eeqan
where $\gamma(\bx)$ is the aforementioned probability that the infection of 
a node is transmitted to a neighbor, and ${\bf w}^{in} = (w_d^{in}; \ d \in 
\cD)$ with 
\beqa
w_d^{in} 
\myeq \frac{w_d (g_{d, P} \ p_P^i + g_{d,U} \ p_U^i)}
	{\sum_{d' \in \cD} w_{d'} (g_{d', P} \ p_P^i + g_{d',U} \ p_U^i)} 
	\label{eq:wdin} \\
\myeq \frac{\beta_{IA} \cdot w_d (g_{d, P} \ p_P^i + g_{d,U} \ p_U^i)}
	{\gamma(\bx)}, \ d \in \cD.
	\nonumber
\eeqa
Note that, by definition, $w_d^{in}, 
\ d \in \cD$, is the probability that a neighboring node has degree $d$ 
{\em conditional} on it being a victim of a successful indirect
attack; the numerator of (\ref{eq:wdin}) is the probability that a 
neighbor has a degree $d$ and is vulnerable to infection. 

The number of $k$-hop neighbors in the infected subcluster containing
node $j$, which we denote by $C_j^k, k \in \N$, can now be studied 
using the G-W model. In Fig.~\ref{fig:cluster},  $C_j^k = 1$ for
$k \in \{1, 2\}$ and $C_j^k = 0$ for $k \geq 3$. 
Each individual representing an infected node 
produces $n, \ n \in \{0, 1, \ldots, D_{\max} - 1\}$, offsprings 
according to the PMF ${\bf q}_N$. Consequently, 
the probability $\bP{ C_j < \infty}$ is given by the smallest nonnegative 
root of the equation $Q_N(s) = s$ \cite[p. 173]{Grimmett}, where 
\beqan
Q_N(s) 
\myeq \sum_{n \in \Z_+} {\bf q}_N(n) \ s^n,  \ s \in \R \ \mbox{for which
the sum} \\
&& \hspace{0.9in} \mbox{converges.}
\eeqan
This solution, denoted by $s^\star(\bx)$, always lies in [0, 1]. Moreover, 
$s^\star(\bx) =1$ if (i) $\E{N}
< 1$ or (ii) $\E{N} = 1$ and $q_N(1) \neq 1$. When $\E{N} > 1$,  
we have $s^\star(\bx) < 1$.

By conditioning on the degree of the initial infected 
node, namely node $i$, we obtain
\beqa
&& \myhb \bP{\mbox{cascade takes place at social state } \bx} \lb
\myeq 1 - \sum_{d \in \cD} f_d 
	\big( 1 - \gamma(\bx) (1 - s^\star(\bx)) \big)^d.
	\label{eq:cascade_prob}
\eeqa 
Therefore, assuming $\gamma(\bx) > 0$, $\E{N} > 1$ is a 
sufficient condition for the cascade probability to be strictly positive. 
In addition, except for in uninteresting degenerate cases, 
$\E{N} > 1$ is also a necessary condition. We mention that the task of
determining
whether a cascade is possible or not can be carried out without 
explicitly computing the PMF ${\bf q}_N$ by noting that $\E{N}$ is also 
equal to $\sum_{d \in \cD} w_d^{in} (d-1) \gamma(\bx)$.

Before we proceed, we summarize questions we are interested in exploring 
with help of the population game model described in this section:

\bitem
\item[Q1] Is there a unique NE? If not, what is the structure of 
NEs?

\item[Q2] What is the relation between the degree of a node and its 
equilibrium action? How do the parameters $K$ and $\beta_{IA}$, which
govern the propagation of infections, influence 
the choices of different populations at NEs?

\item[Q3] What is the POA/POS? How do the network properties and 
system parameters affect the POA/POS? 

\item[Q4] How do network properties, in particular the node degree 
distribution and average degree, and system parameters 
shape the resultant probability of cascade at NEs?
\eitem

\section{Main analytical results}	\label{sec:MainResults}

This section aims at providing partial answers to questions Q1 through
Q3 based on analytical findings. 
Before we state our main results, we first state the assumption
we impose throughout this and following sections. 
\\ \vspace{-0.1in}

%

\begin{assm}	\label{assm:1}
The following inequalities hold. 
\begin{enumerate}
\item[a.] $L_P < (1 - \xi_{{\rm cov}}) \ L_U$; and  

\item[b.] $c_P > c_I + ded$. 
\\ \vspace{-0.1in}
\end{enumerate}
\end{assm}

Assumption \ref{assm:1}-a states that when a player is attacked, 
its expected cost is smaller when it is {\em protected} than 
when it is {\em insured}. This implies that the coverage
level is less than 100 percent even when insured. We note that,
in addition to deductibles, coinsurance (i.e., $\xi_{{\rm cov}} < 1$) is 
often used to mitigate the issue of {\em moral hazard} 
\cite{LafMarti} by sharing risk between both the insurer and the 
insured.\footnote{Another way to deal with the issue of 
moral hazard is {\em premium discrimination} that ties 
the insurance premium directly with the security measures adopted
by a player as suggested in \cite{BolotLelarge2008, 
LelargeBolot2009}.} Shetty et al. showed that, in the presence
of informational asymmetry, only a portion of damages would be 
covered by insurance at an equilibrium \cite{Shetty2010b}. Assumption
\ref{assm:1}-b indicates that the investment a player needs to 
make in order to protect itself against possible attacks is
larger than the insurance premium plus the deductible amount. 
We believe that these are reasonable assumptions in many cases. 

We first examine the structure of NEs of the population games
and the effects of parameters $K$ and $\beta_{IA}$ on NEs in Section
\ref{subsec:results-PG}. Then, we investigate the social optimum and
(an upper bound on) the POA/POS as a function of system parameters
in Sections \ref{subsec:SO} and \ref{subsec:POA}, respectively.

\subsection{Population games}	\label{subsec:results-PG}

\begin{theorem} \label{thm:0}
Let ${\bf m} \in \R_+^{D_{\max}}$ be a population size vector and 
${\bf x}^\star \in \cX$ be a corresponding NE for some $K 
\in \N$ and $\beta_{IA} \in (0, 1]$. If $x_{d_1, P}^\star 
> 0$ for some $d_1 \in \{1, 2, \ldots, D_{\max} - 1\}$, then 
$x_{d, P}^\star = m_d$ for all $d > d_1$. 
\\ \vspace{-0.1in}
\end{theorem}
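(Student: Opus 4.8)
The plan is to establish the claimed threshold structure by showing that the cost advantage of playing $P$ over either $N$ or $I$ at $\bx^\star$ is \emph{strictly increasing} in the degree $d$ whenever the exposure $e(\bx^\star)$ is positive, and then read off the conclusion from the Nash condition (\ref{eq:NE}). First, since $x_{d_1,P}^\star>0$, (\ref{eq:NE}) tells us that $P$ is a best response for population $d_1$, i.e. $\bC_{d_1,P}(\bx^\star)=\min_{a'\in\cA}\bC_{d_1,a'}(\bx^\star)$, so in particular $\bC_{d_1,P}(\bx^\star)\le\bC_{d_1,N}(\bx^\star)$ and $\bC_{d_1,P}(\bx^\star)\le\bC_{d_1,I}(\bx^\star)$. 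The goal is then to prove that for every $d>d_1$ with $m_d>0$ both of these inequalities become strict at $\bx^\star$; by (\ref{eq:NE}) this forces $x_{d,N}^\star=x_{d,I}^\star=0$ and hence $x_{d,P}^\star=m_d$ (populations with $m_d=0$ being vacuous).

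For the $P$-versus-$N$ comparison, (\ref{eq:Cost}) gives $\bC_{d,N}(\bx^\star)-\bC_{d,P}(\bx^\star)=\tau_{DA}(1+d\,e(\bx^\star))\,\Delta L-c_P$, which is affine and strictly increasing in $d$ as soon as $e(\bx^\star)>0$, since $\tau_{DA}>0$ and $\Delta L>0$. (Here I would first record that $e(\bx^\star)\ge 0$ in general: $\gamma(\bx^\star)\ge 0$ by (\ref{eq:gamma}), and $\lambda(\bx^\star)\ge 0$ by (\ref{eq:lambda}) because $p_U^i-g_{d,P}\,\Delta p\ge p_U^i-\Delta p=p_P^i\ge 0$, so every term in (\ref{eq:exposure}) is nonnegative.) The $P$-versus-$I$ comparison is the delicate one, because $\bC_{d,I}(\bx^\star)-\bC_{d,P}(\bx^\star)=\tau_{DA}(1+d\,e(\bx^\star))\,\Delta L+c_I-c_P-Ins(\bx^\star,d)$ and the payout $Ins(\bx^\star,d)$ in (\ref{eq:insurance}) also grows with $d$. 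The key move is to write $Ins(\bx^\star,d)=\phi(\bC_{d,N}(\bx^\star))$ with $\phi(u):=\min(Cov_{\max},\,\xi_{{\rm cov}}(u-ded)^+)$; this $\phi$ is nondecreasing and $\xi_{{\rm cov}}$-Lipschitz, and $\bC_{d,N}(\bx^\star)=\tau_{DA}(1+d\,e(\bx^\star))\,L_U$ is nondecreasing in $d$. Hence for $d>d_1$ we get $Ins(\bx^\star,d)-Ins(\bx^\star,d_1)\le\xi_{{\rm cov}}\,\tau_{DA}(d-d_1)\,e(\bx^\star)\,L_U$, so the increment of the $P$-versus-$I$ gap is at least $\tau_{DA}(d-d_1)\,e(\bx^\star)(\Delta L-\xi_{{\rm cov}}L_U)=\tau_{DA}(d-d_1)\,e(\bx^\star)((1-\xi_{{\rm cov}})L_U-L_P)$, which is strictly positive by Assumption \ref{assm:1}-a whenever $e(\bx^\star)>0$. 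Thus both gaps are strictly larger at $d$ than at $d_1$, where they are already $\ge 0$, so they are strictly positive.

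It remains to dispatch the degenerate case $e(\bx^\star)=0$. Because $K\ge 1$, the geometric factor $\sum_{k=1}^{K}\lambda(\bx^\star)^{k-1}$ in (\ref{eq:exposure}) is at least $1$, so $e(\bx^\star)=0$ forces $\gamma(\bx^\star)=0$; then (\ref{eq:gamma}) together with $\beta_{IA}>0$ and $p_U^i>0$ forces $g_{d,U}^\star=0$ for every population with $m_d>0$, i.e. all of them are fully protected, and the conclusion is immediate. Combining the two cases finishes the proof. I expect the insurance payout term in the $P$-versus-$I$ comparison to be the only real obstacle: the whole argument hinges on recognizing its monotone, $\xi_{{\rm cov}}$-Lipschitz dependence on $\bC_{d,N}(\bx^\star)$ and then on Assumption \ref{assm:1}-a, which is precisely what guarantees that the protection advantage keeps growing with $d$.
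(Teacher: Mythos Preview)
Your argument is correct. The paper itself omits the proof of Theorem~\ref{thm:0}, deferring to the analogous Theorem~1 in \cite{La_TON2013}, so there is no in-paper proof to compare against. That said, your approach is precisely the natural one, and it is consistent with the monotonicity machinery the paper deploys elsewhere: your observation that the cost gaps $\bC_{d,N}-\bC_{d,P}$ and $\bC_{d,I}-\bC_{d,P}$ are strictly increasing in $d$ (via the $\xi_{{\rm cov}}$-Lipschitz bound on the payout and Assumption~\ref{assm:1}-a) is the degree-wise analogue of Property~P1 in Appendix~\ref{appen:thm2a}, which records the same ordering with respect to exposure. One very small refinement in your degenerate case: if $p_P^i>0$ then $\gamma(\bx^\star)\ge \beta_{IA}\,p_P^i>0$ always, so $e(\bx^\star)=0$ can only occur when $p_P^i=0$; your conclusion that $g_{d,U}^\star=0$ for all populations then follows exactly as you wrote.
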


The proof of Theorem~\ref{thm:0} is similar to that of Theorem 
1 in \cite{La_TON2013} and is omitted. 

We note that Theorem~\ref{thm:0} also implies the following: If 
$x_{d_2, P}^\star < m_{d_2}$ for some $d_2 \in \{2, \ldots, D_{\max}\}$,
then $x_{d, P}^\star = 0$ for all $d < d_2$.

In practice, the exposure of a node to indirect attacks will depend on many factors, 
including not only its own degree, but also the degrees and protection levels of 
its neighbors. Therefore, even the nodes with the same degree may behave  
differently. However, one would expect that the nodes with larger degrees 
will likely see higher exposures to indirect attacks and, as a result,
have a stronger incentive to invest in protecting themselves against (indirect) attacks. 
Theorem~\ref{thm:0} captures this intuition. 

The following theorem suggests that, although an NE may not be unique, 
the size of each population $d \in \cD$ investing in 
protection is identical at all NEs. Its proof follows from a straightforward
modification of that of Theorem 2 in \cite{La_TON2013}, and is omitted. 
\\ \vspace{-0.1in}

\begin{theorem}	\label{thm:1}
Suppose that ${\bf x}^1$ and ${\bf x}^2$ are two NEs of the same population 
game. Then, $x^1_{d, P} = x^2_{d, P}$ for all $d \in \cD$. 
\\ \vspace{-0.1in}
\end{theorem}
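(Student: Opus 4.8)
The plan is to argue by contradiction using the monotonicity of the risk exposure $e(\bx)$ in the protected masses together with the threshold structure from Theorem~\ref{thm:0}. Suppose $\bx^1$ and $\bx^2$ are two NEs with $x^1_{d,P} \neq x^2_{d,P}$ for some $d$. First I would observe that $e(\bx)$ depends only on the vector $(x_{d,P}; d \in \cD)$ (this is the first observation recorded after~(\ref{eq:exposure})), and that it is strictly decreasing in each $x_{d,P}$ (the second observation there). Hence if one NE has \emph{pointwise} at least as much protection as the other and strictly more somewhere, the two NEs have strictly different exposures; the bulk of the work is to show that, in fact, the two protected-mass vectors must be comparable in this pointwise sense, so that one can then derive a contradiction from the equilibrium conditions.

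The comparison step is where Theorem~\ref{thm:0} does the heavy lifting. By that theorem each NE $\bx^j$ has a threshold degree $d^{NE}_j$ such that every population with degree strictly above the threshold is fully protected ($x^j_{d,P}=m_d$), every population strictly below contributes nothing ($x^j_{d,P}=0$), and at most the threshold population itself is partially protected. So the protected-mass vector of an NE is completely described by the pair $(d^{NE}_j, x^j_{d^{NE}_j,P})$. Order the two NEs so that NE~$1$ has the (weakly) larger total protection; I would then show that this forces $d^{NE}_1 \le d^{NE}_2$ and, population by population, $x^1_{d,P} \ge x^2_{d,P}$ for every $d$ — i.e.\ the two vectors are nested. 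Consequently $e(\bx^1) \le e(\bx^2)$, with strict inequality as soon as the vectors differ anywhere.

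Now I would play the equilibrium conditions~(\ref{eq:NE}) against this strict inequality. From the cost function~(\ref{eq:Cost}), for a population of degree $d$ the incentive to protect rather than to go unprotected is governed by comparing $c_P$ against $\tau_{DA} \, d \, e(\bx) \, \Delta L$ (the gain in expected attack losses from being protected), after accounting for the insurance option via Assumption~\ref{assm:1}, which guarantees that whenever a node would rather be insured than unprotected it would rather still be protected than insured, so the effective binary comparison is $P$ versus $N$. A population of degree $d$ strictly prefers $P$ when $d\, e(\bx) > c_P/(\tau_{DA}\Delta L)$ and strictly prefers not to protect when the reverse strict inequality holds. Take a population $d$ where the two NEs disagree; without loss of generality $x^1_{d,P} > x^2_{d,P} \ge 0$, so $x^1_{d,P}>0$ and $x^2_{d,P}<m_d$. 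The first says population $d$ weakly prefers $P$ at $\bx^1$, hence $d\, e(\bx^1) \ge c_P/(\tau_{DA}\Delta L)$; the second says it weakly prefers not to protect at $\bx^2$, hence $d\, e(\bx^2) \le c_P/(\tau_{DA}\Delta L)$. Combining gives $e(\bx^1) \ge e(\bx^2)$, which together with the nesting inequality $e(\bx^1)\le e(\bx^2)$ forces $e(\bx^1)=e(\bx^2)$ and hence, by strict monotonicity, that the protected-mass vectors coincide — the desired contradiction.

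The main obstacle, and the step I would be most careful about, is the comparison/nesting claim: showing that two NEs cannot ``cross'' — that one cannot protect more of a low-degree population while the other protects more of a high-degree population. This needs the threshold structure of Theorem~\ref{thm:0} applied to \emph{both} NEs simultaneously, plus a short argument ruling out the case where the thresholds coincide but the partial masses at the threshold population differ in a way inconsistent with the shared indifference condition there. Everything else — the reduction of the three-action comparison to the binary $P$-versus-$N$ comparison via Assumption~\ref{assm:1}, and the final monotonicity contradiction — is routine given the observations already established in the excerpt.
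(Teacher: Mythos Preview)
Your overall architecture---use Theorem~\ref{thm:0} to get a threshold description of each NE, observe that two such threshold vectors are necessarily nested, then exploit strict monotonicity of $e(\bx)$ together with the equilibrium best-response conditions to force equality---is exactly the route the paper has in mind (the paper omits the proof but points to the analogous argument in \cite{La_TON2013}, and the same machinery appears in the proof of Theorem~\ref{thm:2a} via Property~P1 in Appendix~\ref{appen:thm2a}). The nesting step and the final contradiction are sound.

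There is, however, a genuine gap in your handling of the insurance action. Your claim that Assumption~\ref{assm:1} ``guarantees that whenever a node would rather be insured than unprotected it would rather still be protected than insured'' is false: with, say, $\tau_{DA}=1$, $L_U=100$, $L_P=10$, $\xi_{\rm cov}=0.8$, $ded=5$, $c_I=10$, $c_P=50$ and small $e(\bx)$, one checks $\bC_{d,I}<\bC_{d,P}<\bC_{d,N}$, so $I\succ N$ yet $P\not\succ I$. Consequently your step ``$x^2_{d,P}<m_d$ implies $d\,e(\bx^2)\le c_P/(\tau_{DA}\Delta L)$'' does not follow when the unprotected mass at $\bx^2$ is playing $I$ rather than $N$. (Incidentally, the $P$-vs-$N$ indifference condition from (\ref{eq:Cost}) is $(1+d\,e(\bx))=c_P/(\tau_{DA}\Delta L)$, not $d\,e(\bx)=c_P/(\tau_{DA}\Delta L)$, though this slip is harmless.)

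The clean fix is to drop the reduction to a single scalar threshold and argue exactly as in Property~P1 of Appendix~\ref{appen:thm2a}: once nesting gives $e(\bx^1)<e(\bx^2)$ strictly, Assumption~\ref{assm:1}-a yields
\[
\bC_{d,P}(\bx^2)-\bC_{d,P}(\bx^1) \ < \ \bC_{d,a}(\bx^2)-\bC_{d,a}(\bx^1)\quad\text{for each }a\in\{I,N\},
\]
so from $x^1_{d,P}>0$ (i.e.\ $\bC_{d,P}(\bx^1)\le\min\{\bC_{d,I}(\bx^1),\bC_{d,N}(\bx^1)\}$) you get $\bC_{d,P}(\bx^2)<\min\{\bC_{d,I}(\bx^2),\bC_{d,N}(\bx^2)\}$ strictly, forcing $x^2_{d,P}=m_d$ and the desired contradiction. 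This replaces your two separate threshold inequalities with a direct monotonicity comparison and handles all three actions at once.
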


The uniqueness of the sizes of protected populations at NEs shown in 
Theorem~\ref{thm:1} is crucial for our study. It implies that the 
cascade probability, which we adopt as a (global) measure of network 
security, is identical at all NEs even when there is more than one NE. 
For this reason, it enables us to examine and compare the network 
security measured using cascade probabilities, as we vary the node degree 
distribution and parameters $K$ and $\beta_{IA}$.

Let us explain briefly why an NE is not necessarily unique. 
Suppose that the expected cost of playing $I$ and $N$ is the same 
and is smaller than that of playing $P$ for some population $d$ at an NE. 
Then, there are uncountably many NEs. This is a consequence of an earlier
observation that a purchase of insurance by a player does not affect 
the costs of other players, hence their (optimal) responses. 

Because the populations choosing to protect remain the same at all
NEs (when more than one NE exist) and the issues of interest to us
depend only on populations investing in protection, 
with a little abuse of notation, 
we use $\bN({\bf m}, K, \beta_{IA}) = (\bN_{d,a}({\bf m}, K, \beta_{IA}); 
\ d \in \cD \mbox{ and } 
a \in \cA)$ to denote {\em any} arbitrary NE corresponding to a 
population size vector ${\bf m}$, $K \in \N$ and $\beta_{IA} \in (0, 1]$, 
where $\bN_{d,a}({\bf m}, K, \beta_{IA})$ 
is the size of population $d$ playing action $a$ at the NE. 

Theorems \ref{thm:0} and \ref{thm:1} state that, for fixed
population size vector $\bm$ and parameters $K$ and $\beta_{IA}$, 
there exists a degree threshold given by
\beqan
\ d^{NE}({\bf m}, K, \beta_{IA}) = \min\{ d \in \cD \ | \ 
	\bN_{d, P}({\bf m}, K, \beta_{IA}) > 0 \} 
\eeqan
such that only the populations with degree greater than or equal to 
the threshold would invest in protection at {\em any} NE. 
When the set on the right-hand side (RHS) is empty, we set 
$d^{NE}({\bf m}, K, \beta_{IA}) = D_{\max} + 1$. 
The existence of a degree threshold also greatly simplifies the 
computation of NEs, which are not always easy to compute in general. 

The following theorem sheds some light on how the degree threshold 
$d^{NE}(\bm, K, \beta_{IA})$ behaves with varying $K$ or $\beta_{IA}$. 
\\ \vspace{-0.1in}

\begin{theorem}	\label{thm:2a}
Suppose $K_1, K_2 \in \N$ with $K_1 \leq K_2$. Then, for any population size
vector $\bm$ and IAP $\beta_{IA} \in (0, 1]$, we have 
\beqan
\sum_{d \in \cD} {\bf N}_{d,P}(\bm, K_1, \beta_{IA}) 
\leq \sum_{d \in \cD} {\bf N}_{d,P}(\bm, K_2, \beta_{IA}). 
\eeqan
Similarly, for any population size vector $\bm$ and $K \in \N$, 
\beqan
\sum_{d \in \cD} {\bf N}_{d,P}(\bm, K, \beta_{IA}^1) 
\leq \sum_{d \in \cD} {\bf N}_{d,P}(\bm, K, \beta_{IA}^2) 
\eeqan
if $0 < \beta_{IA}^1 \leq \beta_{IA}^2 \leq 1$. 
\end{theorem}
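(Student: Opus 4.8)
Fix $\bm$ and prove the $K$-monotonicity; the $\beta_{IA}$-monotonicity is obtained by the same argument with $K$ replaced by $\beta_{IA}$. By Theorem~\ref{thm:1} the masses $x^\star_{d,P}$, and hence the exposure $e(\bx^\star)$, are the same at every NE for a given $(K,\beta_{IA})$; write $e^\star(K,\beta_{IA})$ for this common value and $S(K,\beta_{IA}):=\sum_{d\in\cD}\bN_{d,P}(\bm,K,\beta_{IA})$. The plan is to show that $e^\star(\,\cdot\,,\beta_{IA})$ is nondecreasing in $K$ and that $S$ is monotone in $e^\star$.

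\emph{Reduction to a scalar.} From (\ref{eq:Cost}) and (\ref{eq:insurance}), the cost gaps $\bC_{d,P}(\bx)-\bC_{d,N}(\bx)=c_P-\tau_{DA}(1+d\cdot e(\bx))\Delta L$ and $\bC_{d,P}(\bx)-\bC_{d,I}(\bx)=(c_P-c_I)+Ins(\bx,d)-\tau_{DA}(1+d\cdot e(\bx))\Delta L$ depend on $\bx$ only through the scalar $e(\bx)$, since $Ins(\bx,d)=\min(Cov_{\max},\xi_{\rm cov}(\tau_{DA}(1+d\cdot e(\bx))L_U-ded)^+)$. Using Assumption~\ref{assm:1}-a (which gives $\xi_{\rm cov}L_U-\Delta L<0$), both gaps are continuous, strictly decreasing in $e(\bx)$, and nonincreasing in $d$. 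Hence for each $d$ there is a threshold $e_d^{\rm crit}\in[0,\infty]$, nonincreasing in $d$, such that $P$ is a best response for population $d$ iff $e(\bx)\ge e_d^{\rm crit}$, strictly so when the inequality is strict. Together with Theorems~\ref{thm:0} and~\ref{thm:1}, this pins down the NE profile from $e^\star$: $x_{d,P}^\star=m_d$ if $e^\star>e_d^{\rm crit}$, $x_{d,P}^\star=0$ if $e^\star<e_d^{\rm crit}$, and $x_{d,P}^\star\in[0,m_d]$ for the at most one degree $d$ with $e_d^{\rm crit}=e^\star$ (at most one because $e_d^{\rm crit}$ is strictly decreasing in $d$ wherever $e^\star>0$, while $e^\star=0$ forces $x^\star_{d,P}=m_d$ for all $d$). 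In particular $S(K,\beta_{IA})$ is nondecreasing in $e^\star(K,\beta_{IA})$.

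\emph{Monotonicity of the equilibrium exposure.} For $e\ge0$ let $\Psi(e;K,\beta_{IA})$ be the value $e(\bx)$ produced by the profile with $x_{d,P}=m_d$ for every $d$ with $e_d^{\rm crit}<e$ and $x_{d,P}=0$ otherwise. By (\ref{eq:gamma})--(\ref{eq:exposure}): (i) $e(\bx)$ is strictly decreasing in each $x_{d,P}$ (as noted after (\ref{eq:exposure})), so $\Psi(\,\cdot\,;K,\beta_{IA})$ is nonincreasing in $e$; (ii) $\lambda(\bx)\ge0$ always (because $p_U^i-g_{d,P}\Delta p\ge p_P^i\ge0$) and $\gamma(\bx)\ge0$, so $\sum_{k=1}^{K}\lambda(\bx)^{k-1}$ is nondecreasing in $K$ and $\Psi(e;K,\beta_{IA})$ is nondecreasing in $K$ for each $e$; (iii) $\gamma(\bx)$ and $\lambda(\bx)$ are each $\beta_{IA}$ times a nonnegative $\beta_{IA}$-independent quantity, so $\Psi(e;K,\beta_{IA})$ is also nondecreasing in $\beta_{IA}$ for each $e$. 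Since $e\mapsto e-\Psi(e;K,\beta_{IA})$ is strictly increasing, the nonincreasing step map $\Psi$ meets the diagonal at a single point, and by Theorem~\ref{thm:1} this point is $e^\star$; concretely $e^\star(K,\beta_{IA})=\sup\{e\ge0:\Psi(e;K,\beta_{IA})\ge e\}$. By (ii), $\Psi(e;K_1,\beta_{IA})\le\Psi(e;K_2,\beta_{IA})$ for all $e$ when $K_1\le K_2$, so these sets are nested and $e^\star(K_1,\beta_{IA})\le e^\star(K_2,\beta_{IA})$. Hence $S(K_1,\beta_{IA})\le S(K_2,\beta_{IA})$: directly from the previous paragraph when the two exposures differ, and when they are equal by noting that the two profiles then agree except at the single indifferent degree $d_0$, where equality of the two exposures together with (ii) and the strict monotonicity of $e(\bx)$ in $x_{d_0,P}$ forces the protected mass at $d_0$ to be weakly larger under $K_2$. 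This is the first inequality; the second follows identically from (iii).

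\textbf{Main obstacle.} The crux is that this is a comparative-statics claim for a fixed point of an order-\emph{reversing} best-response map (more protection lowers exposure, which weakens the incentive to protect), so the usual monotone-comparative-statics (Topkis/Tarski) machinery does not apply directly. The plan sidesteps this by collapsing the fixed point to the scalar $e$, where the self-map $\Psi(\,\cdot\,;K,\beta_{IA})$ is merely decreasing and ``raising the map raises the crossing'' still holds; the remaining care is the bookkeeping at the (at most one) indifferent degree — for which Theorem~\ref{thm:1} and the strict monotonicity of $e(\bx)$ in each $x_{d,P}$ are exactly what is needed — and checking, via Assumption~\ref{assm:1}-a, that the $P$-versus-$I$ comparison, which involves the state-dependent payout $Ins(\bx,d)$, remains monotone in $e$.
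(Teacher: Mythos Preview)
Your argument is correct, but it takes a genuinely different route from the paper's. The paper proceeds by a short direct contradiction: writing $\bx^i=\bN(\bm,K_i,\beta_{IA})$ and assuming $\sum_d x^1_{d,P}>\sum_d x^2_{d,P}$, the threshold structure of Theorem~\ref{thm:0} forces $x^2_{d,P}\le x^1_{d,P}$ for all $d$ with strict inequality at $d_1:=\min\{d:x^1_{d,P}>0\}$; hence $e(\bx^1;K_1)<e(\bx^2;K_2)$. A single monotonicity observation (their ``Property~P1'', namely that by Assumption~\ref{assm:1}-a the cost of $P$ rises strictly less than those of $N$ and $I$ as the exposure increases) then upgrades the NE inequality $\bC_{d_1,P}(\bx^1;K_1)\le\min\{\bC_{d_1,N},\bC_{d_1,I}\}$ to a strict one at $\bx^2$, forcing $x^2_{d_1,P}=m_{d_1}$, a contradiction. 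That is the whole proof.

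Your route instead builds a scalar comparative-statics framework: reduce the NE to the one-dimensional exposure $e^\star$, identify it as the crossing of a nonincreasing step map $\Psi(\cdot;K,\beta_{IA})$ with the diagonal, and use that $\Psi$ shifts up in $K$ (and in $\beta_{IA}$). This is more structural and arguably more illuminating about \emph{why} the monotonicity holds---it makes explicit that the $e_d^{\rm crit}$ thresholds are parameter-free and that only the exposure map moves---but it requires more verification (strict monotonicity of $e_d^{\rm crit}$ in $d$, the $\sup$ characterization of $e^\star$, and the separate bookkeeping in the equal-exposure case). The paper's contradiction argument bypasses all of this machinery: it never needs to characterize $e^\star$ as a fixed point, never needs the $e_d^{\rm crit}$ to be strictly ordered, and handles the ``indifferent degree'' automatically through the NE condition. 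What your approach buys is a reusable template for other parameter comparisons and a clearer picture of the equilibrium as a scalar crossing; what the paper's buys is brevity and fewer side conditions to check.
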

\begin{proof} 
A proof is given in Appendix~\ref{appen:thm2a}. 
\end{proof}

Theorem~\ref{thm:2a} is quite intuitive; as $K$ or $\beta_{IA}$ increases, 
the effect of a
successful direct attack is felt by a larger portion of the populations. 
Consequently, for any fixed social state $\bx$, the exposure $e(\bx)$ 
grows with $K$ and $\beta_{IA}$. 
As a result, some of population that would not invest
in protection with smaller $K$ or $\beta_{IA}$ 
will see greater benefits of protecting themselves 
because the cost of action $N$ or $I$ increases faster than that of 
$P$ by Assumption~\ref{assm:1}. Consequently, a larger fraction of
population chooses protection. However, as we will show in Section
\ref{subsec:NR2}, these two parameters have very different effects
on the resulting cascade probability.

\subsection{Social optimum}	\label{subsec:SO}

In this subsection, we consider a scenario where there is a single 
social player (SP) that makes the decisions for all 
populations. The goal of the SP is to minimize the overall social cost
given as the sum of (i) damages/losses from attacks and (ii) the cost
of protection. In other words,  
the social cost at social state $\bx \in \cX$ is given by
\beqa
&& \hspace{-0.5in} SC(\bx) \lb
& \hspace{-0.5in} = & \hspace{-0.3in} 
\sum_{d \in \cD} \Big[ \Big( \sum_{a \in \cA} x_{d, a} \cdot 
	{\bf C}_{d, a}(\bx) \Big) + x_{d, I} \big( Ins(\bx, d) - c_I \big) \Big]
	\label{eq:SocialCost0} \\
& \hspace{-0.5in} = & \hspace{-0.3in} 
\sum_{d \in \cD} \Big( x_{d, P} \cdot {\bf C}_{d, P}(\bx)  
	+ (m_d - x_{d,P}) {\bf C}_{d,N}(\bx) \Big). 
	\label{eq:SocialCost1}
\eeqa
Note that $\sum_{d \in \cD} x_{d,I} \left( Ins(\bx, d) - c_I \right)$
in (\ref{eq:SocialCost0}) is the (net) cost for insurer(s). Hence,
the social cost given by (\ref{eq:SocialCost0}) accounts for 
the costs of all players, including the insurer(s). 

Moreover, it is clear from (\ref{eq:SocialCost1}) that the social cost
depends only on $x_{d,P}, \ d \in \cD$, as insurance simply shifts
some of the risk from the insured to the insurer as pointed out 
earlier. For this reason, we can limit the possible atomic
actions of SP to $\{ P, N \}$ and simplify 
the admissible action space of SP to ${\cal Y} := 
\prod_{d \in \cD} [0, m_d]$.  An SP action $\by = \big( y_d; \ d \in 
\cD \big) \in \cY$ specifies the size of each population $d$ that should 
invest in protection (i.e., $y_{d}$) with an understanding that the 
remaining population $m_d - y_d$ plays $N$. 

Let us define a mapping $\bX: \cY \to \cX$, where
\beqan
\bX_{d,a}(\by)
\myeq \left\{ \begin{array}{cl}
	y_{d} & \mbox{if } a = P, \\
	m_d - y_d  & \mbox{if } a = N,  \\
	0 & \mbox{if } a = I. 
	\end{array} \right. 
\eeqan
Fix an SP action $\by \in \cY$. The social cost associated with
$\by$ is given by a mapping $\overline{SC}: \cY \to \R$, where 
\beqa
&& \hspace{-0.5in} \overline{SC}(\by) 
	= SC\big(\bX(\by) \big) \lb
&& \hspace{-0.5in} 
= \sum_{d \in \cD} \Big( y_{d} \cdot {\bf C}_{d, P}\big( \bX(\by)\big)
	+ (m_d - y_d) {\bf C}_{d, N} \big( \bX(\by) \big) \Big). 
	\label{eq:SocialCost2}
\eeqa

The goal of SP is then to solve the following constrained optimization problem. 
\\ \vspace{-0.1in}

\noindent {\bf SP-OPT:} 
\beqa
\min_{\by \in \cY} \ \overline{SC}(\by)
	\label{eq:optim}
\eeqa

Let $\by^\star  \in \arg\min_{\by \in \cY}  \ \overline{SC}(\by)$ denote any minimizer 
of the social cost.
When we wish to make the dependence of $\by^\star$ on the population size
vector $\bm$, parameter $K$ or IAP $\beta_{IA}$ clear, we shall use 
$\by^\star(\bm, K, \beta_{IA})$. 

The following theorem reveals that, like NEs, any minimizer $\by^\star$ has 
a degree threshold so that only the 
populations with degree greater than or equal to the degree threshold should
protect at the social optimum. 

Let $d^\dagger = \min\{ d \in \cD \ | \ y^\star_d > 0 \}$. 
As before, if the set on the RHS is empty, we set 
$d^\dagger = D_{\max} + 1$. 
\\ \vspace{-0.1in}

\begin{theorem}	\label{thm:5}
If $d^\dagger < D_{\max}$, $y^\star_d = m_d$ for all $d > d^\dagger$. 
\end{theorem}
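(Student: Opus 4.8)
The plan is to mimic the structure of the argument behind Theorem~\ref{thm:0} for NEs, but now exploiting the optimality of $\by^\star$ for the social-cost objective $\overline{SC}$ rather than individual best-response conditions. First I would fix a minimizer $\by^\star$ and suppose, for contradiction, that $d^\dagger < D_{\max}$ but there exists some degree $d_2 > d^\dagger$ with $y^\star_{d_2} < m_{d_2}$; pick $d_1 = d^\dagger$, so $y^\star_{d_1} > 0$. The idea is to construct a ``mass-swap'' perturbation: move an infinitesimal amount $\epsilon$ of population $d_1$ from $P$ to $N$, and simultaneously move an amount $\epsilon' = (d_1/d_2)\,\epsilon$ of population $d_2$ from $N$ to $P$. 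The ratio is chosen so that the \emph{degree-weighted} protected mass $\sum_d d\cdot x_{d,P}$ is unchanged to first order, which by (\ref{eq:gamma}) and (\ref{eq:lambda}) means $\gamma$, $\lambda$, and hence the exposure $e(\bx)$ are unchanged to first order. So the only first-order effect on $\overline{SC}$ comes from the explicit dependence on which masses sit in $P$ versus $N$ at fixed exposure.

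Next I would compute this first-order change. At fixed exposure $e$, moving mass $\delta$ from $N$ to $P$ in population $d$ changes the social cost by $\delta\cdot\big({\bf C}_{d,P}(\bx)-{\bf C}_{d,N}(\bx)\big) = \delta\cdot\big(c_P - \tau_{DA}(1+d\cdot e)\Delta L\big)$, using (\ref{eq:Cost}). Therefore the net first-order change of $\overline{SC}$ under the swap is
\beqan
\Delta \overline{SC} \;=\; \epsilon\big(c_P - \tau_{DA}(1+d_1 e)\Delta L\big)\cdot(-1)
 \;+\; \frac{d_1}{d_2}\epsilon\big(c_P - \tau_{DA}(1+d_2 e)\Delta L\big).
\eeqan
Since $y^\star_{d_1}>0$ is part of an optimal solution, perturbing $d_1$'s protected mass \emph{downward} cannot decrease the cost, which (combined with the ability also to perturb upward when $y^\star_{d_1}<m_{d_1}$, and a separate argument when $y^\star_{d_1}=m_{d_1}$) pins down the sign of $c_P - \tau_{DA}(1+d_1 e)\Delta L$; the key point is that the bracket is strictly decreasing in the degree, so the marginal benefit of protection is strictly larger for $d_2>d_1$ than for $d_1$. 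Plugging this monotonicity into the displayed expression for $\Delta\overline{SC}$ yields $\Delta\overline{SC}<0$, contradicting optimality of $\by^\star$ — unless $y^\star_{d_2}$ was already at its upper bound $m_{d_2}$. Hence every $d>d^\dagger$ has $y^\star_d=m_d$.

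The main obstacle, and the place requiring the most care, is handling the boundary cases of the perturbation cleanly: the swap must keep $\by$ inside $\cY=\prod_d[0,m_d]$, so I need $y^\star_{d_1}>0$ (true by definition of $d^\dagger$) and $y^\star_{d_2}<m_{d_2}$ (the contradiction hypothesis), and I must argue the first-order optimality inequality in the correct one-sided sense depending on whether $y^\star_{d_1}$ is interior or equals $m_{d_1}$. A secondary subtlety is that the exposure $e(\bx)$ is only held fixed \emph{to first order}; one should note that $\overline{SC}$ is smooth in $\by$ (it is a finite sum of products of affine functions of $\by$ composed with the rational functions $\gamma,\lambda$ from (\ref{eq:gamma})--(\ref{eq:lambda}), which are smooth where $\lambda\neq 1$; the $\lambda=1$ case is handled by the $K\cdot\gamma$ branch of (\ref{eq:exposure}), also smooth), so a genuine directional derivative exists and a strictly negative first-order change does contradict minimality. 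I expect the rest to be routine bookkeeping with the cost formula (\ref{eq:Cost}). Alternatively, and perhaps more cleanly, one can avoid derivatives entirely by a finite exchange argument: directly compare $\overline{SC}(\by^\star)$ with $\overline{SC}(\by^\star + \epsilon(\text{swap direction}))$ for small but finite $\epsilon$, expand, and collect the $O(\epsilon)$ term with an explicit $O(\epsilon^2)$ remainder bound — but the derivative version above is the shortest route.
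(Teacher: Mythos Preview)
Your exchange-argument idea is sound in spirit, but the execution has two concrete errors, and the paper's proof takes a simpler route that avoids both.

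First, your claim that the degree-weighted swap $\epsilon' = (d_1/d_2)\epsilon$ keeps $\lambda$ (and hence $e$) fixed is false: by (\ref{eq:lambda}), $\lambda$ depends on $\sum_d d(d-1)\,x_{d,P}$, not on $\sum_d d\,x_{d,P}$, so only $\gamma$ is held constant by your swap. In fact $\lambda$ strictly decreases under it, which happens to help you, but you have not accounted for this contribution.

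Second, and more seriously, your argument for the sign of the ``direct'' term does not go through. With $e$ frozen, your displayed $\Delta\overline{SC}$ simplifies exactly to $-\epsilon\big(1-d_1/d_2\big)\big(c_P-\tau_{DA}\,\Delta L\big)$, whose sign depends on whether $c_P\gtrless\tau_{DA}\,\Delta L$ --- not a standing hypothesis here (it is only assumed in Theorem~\ref{thm:8}). Invoking optimality at $d_1$ does not rescue this: the first-order condition $\partial\overline{SC}/\partial y_{d_1}\le 0$ includes the strictly negative indirect effect through $e$, so it does \emph{not} pin down the sign of the bracket $c_P-\tau_{DA}(1+d_1 e)\Delta L$ by itself. (The argument can be salvaged by instead using optimality at $d_2$: since $y^\star_{d_2}<m_{d_2}$, one gets $\partial\overline{SC}/\partial y_{d_2}\ge 0$, which forces the bracket at $d_2$ to be strictly positive; then monotonicity gives the bracket at $d_1$ positive as well, and your displayed expression is negative. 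Combined with the favorable sign of the omitted $\lambda$-term, the full $\Delta\overline{SC}$ is then negative. But this is not the argument you wrote.)

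The paper's proof sidesteps all of this by using an \emph{equal-mass} swap $\by^\dagger = \by^\star + \epsilon(\bu_{d'}-\bu_{d^\dagger})$ and computing the finite difference $\overline{SC}(\by^\star)-\overline{SC}(\by^\dagger)$ directly. Under this swap both $\gamma$ and $\lambda$ strictly decrease (since $d'>d^\dagger$ and $d'(d'-1)>d^\dagger(d^\dagger-1)$), hence $e$ strictly decreases; the difference then splits into a term proportional to $e(\bx^\star)-e(\bx^\dagger)>0$ and a term $\epsilon\,\tau_{DA}(d'-d^\dagger)\,e(\bx^\dagger)\,\Delta L>0$, both manifestly positive. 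No first-order optimality conditions, no sign hypothesis on $c_P$, and no derivative-versus-remainder bookkeeping are needed.
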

\begin{proof}
A proof is provided in Appendix~\ref{appen:thm5}. 
\end{proof}

We can prove the uniqueness of the minimizer $\by^\star$ by making use of 
Theorem~\ref{thm:5}. 
\\ \vspace{-0.1in}

\begin{theorem}	\label{thm:5a}
There exists a unique solution $\by^\star(\bm, K, \beta_{IA})$ to the SP-OPT problem.
\end{theorem}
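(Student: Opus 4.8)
The plan is to combine the degree-threshold structure from Theorem~\ref{thm:5} with the monotonicity of the exposure $e(\bx)$ in the protection variables, and then argue by contradiction. Suppose $\by^1$ and $\by^2$ are two distinct minimizers of $\overline{SC}$ over $\cY$. By Theorem~\ref{thm:5} each of them has threshold form: there is $d^\dagger_1$ (resp. $d^\dagger_2$) such that $y^j_d = m_d$ for all $d > d^\dagger_j$, $y^j_d = 0$ for all $d < d^\dagger_j$, and only the single ``boundary'' population $d = d^\dagger_j$ may be split. So each minimizer is completely described by the pair (threshold $d^\dagger_j$, amount $y^j_{d^\dagger_j} \in [0, m_{d^\dagger_j}]$ protected in the boundary population). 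Since $\by^1 \ne \by^2$, either the thresholds differ, or the thresholds coincide and the boundary amounts differ. In both cases one checks that the total ``protection mass'' $\sum_{d} d \cdot y^j_d$ — equivalently the quantity $\sum_d d\, x_{d,P}$ appearing in \eqref{eq:gamma} — differs between the two solutions; I would set up the argument so that WLOG $\sum_d d\, y^1_d < \sum_d d\, y^2_d$, hence (using the second observation after \eqref{eq:exposure}, that $e(\bx)$ is strictly decreasing in each $x_{d,P}$) $e(\bX(\by^1)) > e(\bX(\by^2))$ whenever the two solutions are not literally identical in the protection coordinates. Actually, to be careful, I would first reduce to the case of distinct thresholds or distinct boundary mass and show either case forces $\sum_d d\,y_d$ to strictly increase.

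Next I would exploit strict convexity-type structure along a line between the two solutions. The clean route: show that $t \mapsto \overline{SC}\big(\by^1 + t(\by^2 - \by^1)\big)$, or better yet the map from the scalar ``effective protection'' parameter to cost, is strictly convex, so that two distinct minimizers are impossible. Concretely, along the threshold family, if we parametrize candidate SP-actions by the scalar $\theta = \sum_d d\, y_d \in [0, \sum_d d\, m_d]$ (protecting greedily from the top degree down, which by Theorem~\ref{thm:5} is the only shape that can be optimal), then $e$ is an affine-decreasing function of $\theta$ on each linear piece — from \eqref{eq:gamma}, \eqref{eq:lambda}, \eqref{eq:exposure}, $\gamma$ is affine in $\theta$ and $\lambda$ is piecewise affine, so $e = \gamma \cdot \frac{1-\lambda^K}{1-\lambda}$ is a smooth, in fact I expect strictly convex or at least strictly monotone, function of $\theta$ on each piece. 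Plugging into \eqref{eq:SocialCost2}, $\overline{SC}$ as a function of $\theta$ has the form (linear in the protection counts) $\times$ (affine-ish in $e$) plus the protection cost term $c_P \cdot(\text{number protected})$; I would compute $\frac{d}{d\theta}\overline{SC}$ and show it is strictly increasing across the region where any optimum can lie, so the minimizer in $\theta$ is unique. Then unique $\theta^\star$ forces a unique threshold and a unique boundary split, i.e. a unique $\by^\star$.

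The main obstacle I anticipate is handling the piecewise-affine, potentially non-differentiable behavior of $\lambda(\bx)$ — and hence of $e(\bx)$ — as the protection set crosses population boundaries, together with the case $\lambda(\bx) = 1$ where the formula for $e$ switches to $K\gamma(\bx)$. One must check that the one-sided derivatives of $\overline{SC}$ in $\theta$ are still strictly monotone (or that $\overline{SC}$ is strictly quasiconvex) across these kinks, so that a ``flat stretch'' of minimizers cannot occur straddling a boundary. I would deal with this by working piece by piece: on the interior of each piece everything is smooth and one shows strict monotonicity of $\frac{d}{d\theta}\overline{SC}$ directly from \eqref{eq:gamma}–\eqref{eq:SocialCost2} and Assumption~\ref{assm:1}-b (so that marginal protection cost $c_P$ is outweighed or not in a monotone way by the marginal reduction in expected attack losses $\tau_{DA}\, d\, \Delta L$-type terms weighted by the decrease in $e$), and then check continuity of the one-sided slopes across kinks has the right sign pattern. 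An alternative, possibly cleaner, fallback is to argue purely combinatorially: given Theorem~\ref{thm:5}, enumerate the at most $D_{\max}$ candidate thresholds; for a fixed threshold the objective is strictly convex in the single scalar boundary variable $y_{d^\dagger}$ (its second derivative coming from the quadratic dependence of the loss term on $x_{d^\dagger,P}$ through $e$), giving a unique boundary minimizer; then show the optimal-value-versus-threshold sequence is strictly unimodal, so the overall argmin is a single point. I expect the reference to the companion paper's Theorem~2 proof technique (as was done for Theorems~\ref{thm:0} and \ref{thm:1}) to shortcut much of this bookkeeping.
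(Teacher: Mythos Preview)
Your overall strategy is sound, but it takes a different and more laborious route than the paper. You aim to establish a \emph{global} strict-convexity (or strict-quasiconvexity) property of $\overline{SC}$ along the one-parameter threshold family, and you correctly flag that the piecewise dependence of $\lambda(\bx)$ on the protection vector and the $\lambda=1$ branch of \eqref{eq:exposure} make this verification delicate. The paper sidesteps all of that by a purely \emph{local} first-order comparison: assuming two distinct minimizers $\by^1\le\by^2$ (componentwise, which the threshold structure of Theorem~\ref{thm:5} guarantees after relabeling), it picks the single coordinate $d^2=\min\{d:y^2_d>0\}$ and compares the one-sided partial derivatives $\partial\overline{SC}/\partial y_{d^2}$ at $\by^1$ and $\by^2$. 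Optimality of $\by^2$ forces this derivative to be $\le 0$ there, optimality of $\by^1$ forces it to be $\ge 0$ there, and an explicit calculation shows the derivative at $\by^1$ is \emph{strictly smaller} than at $\by^2$ --- a contradiction. The only analytic ingredients are the strict inequalities $e(\bx^2)<e(\bx^1)$ and $\partial e(\bx^1)/\partial y_{d^2}<\partial e(\bx^2)/\partial y_{d^2}$, both of which follow directly from \eqref{eq:gamma}--\eqref{eq:exposure} without any case analysis across kinks or across $\lambda=1$. So the paper trades your global structural claim for a two-point monotonicity of the marginal cost, which is easier to verify; your approach would ultimately prove a stronger statement (strict convexity along the threshold curve) but at the cost of the bookkeeping you anticipated. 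Also note that Assumption~\ref{assm:1}-b plays no role in the paper's argument, so you should not expect to need it.
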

\begin{proof}
Please see Appendix~\ref{appen:thm5a} for a proof.
\end{proof}

While the statements of Theorems~\ref{thm:5} and \ref{thm:5a} are
similar to those of Theorems 4 and 5 in \cite{La_TON2013}, the
proofs in \cite{La_TON2013} do not apply to the settings in this
paper.

The following theorem tells us that the protected population size is
never smaller at the social optimum than at an NE. Its proof is similar
to that of Theorem 7 in \cite{La_TON2013} and is omitted.
\\ \vspace{-0.1in} 

\begin{theorem}	\label{thm:7}
Fix a population size vector $\bm$, $K \in \N$ and $\beta_{IA} \in (0, 1]$. 
Let $\bx^\star = \bN(\bm, K, \beta_{IA})$ and $\by^\star = \by^\star(\bm, 
K, \beta_{IA})$. Then, $\sum_{d \in \cD} x^\star_{d, P} 
\leq \sum_{d \in \cD} y^\star_d$. 
\\ \vspace{-0.1in}
\end{theorem}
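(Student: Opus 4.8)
The strategy is the classical positive‑externality argument: protection strictly lowers the exposure $e(\bx)$ felt by \emph{every} node, so self‑interested nodes under‑provide it relative to a planner who internalizes this effect. I will make this precise through a marginal‑agent (first‑order) comparison. First reduce to the nontrivial case: if $d^{NE}(\bm,K,\beta_{IA})=D_{\max}+1$ then $\sum_{d}x^\star_{d,P}=0$ and there is nothing to prove, and if $\by^\star$ is full protection ($y^\star_d=m_d$ for all $d$) then $\sum_d y^\star_d=\sum_d m_d\ge\sum_d x^\star_{d,P}$; so assume neither and argue by contradiction, supposing $\sum_d y^\star_d<\sum_d x^\star_{d,P}$.

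Next I claim this forces $y^\star_d\le x^\star_{d,P}$ for \emph{every} $d\in\cD$. Both $\bx^\star$ and $\by^\star$ have the ``protect‑from‑the‑top'' threshold structure — Theorem~\ref{thm:0} for the NE (threshold $d^{NE}$) and Theorem~\ref{thm:5} for the optimum (threshold $d^\dagger$) — so each is determined by a threshold degree below which nothing is protected, above which everything is protected, with a possibly partial mass at the threshold itself. A short case analysis on the relative order of $d^{NE}$ and $d^\dagger$ shows that whenever the NE protects strictly more \emph{total} mass, the only possibility consistent with the contradiction hypothesis is $d^\dagger\ge d^{NE}$, in which case the two protected profiles are nested and $y^\star_d\le x^\star_{d,P}$ componentwise; in particular there is a degree $\tilde d$ with $y^\star_{\tilde d}<x^\star_{\tilde d,P}\le m_{\tilde d}$ (e.g.\ $\tilde d=d^{NE}$). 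Note $x^\star_{\tilde d,P}>0$, so the NE condition (\ref{eq:NE}) applies at population $\tilde d$, and $y^\star_{\tilde d}<m_{\tilde d}$, so protection of population $\tilde d$ can be increased feasibly at the optimum. Since $e(\bx)$ is nonincreasing in each $x_{d,P}$ by (\ref{eq:gamma})–(\ref{eq:exposure}) and $\bX(\by^\star)\le\bx^\star$ in the $P$‑coordinates, we get $e(\bX(\by^\star))\ge e(\bx^\star)$.

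Finally, contradict optimality of $\by^\star$. Using the explicit costs (\ref{eq:Cost}) and Assumption~\ref{assm:0}, $\overline{SC}$ is a polynomial (hence smooth) function of $\by$, and a direct differentiation gives
\[
\frac{\partial \overline{SC}}{\partial y_{\tilde d}}(\by^\star)
=\Big(\bC_{\tilde d,P}(\bX(\by^\star))-\bC_{\tilde d,N}(\bX(\by^\star))\Big)
+\tau_{DA}\,\frac{\partial e}{\partial y_{\tilde d}}(\bX(\by^\star))\,\Big(L_U\,d_{\avg}-\Delta L\textstyle\sum_{d\in\cD}d\,y^\star_d\Big).
\]
The first (private‑cost) bracket equals $c_P-\tau_{DA}(1+\tilde d\,e(\bX(\by^\star)))\Delta L$; the NE condition at the protecting population $\tilde d$ gives $c_P\le\tau_{DA}(1+\tilde d\,e(\bx^\star))\Delta L$, and since $e(\bX(\by^\star))\ge e(\bx^\star)$ and $\Delta L>0$ this bracket is $\le0$. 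In the second term $\partial e/\partial y_{\tilde d}<0$ strictly (the exposure is strictly decreasing in each $x_{d,P}$), and the remaining factor obeys $L_U d_{\avg}-\Delta L\sum_d d\,y^\star_d\ge L_P\,d_{\avg}$, which is strictly positive (as $L_P>0$, or, if $L_P=0$, because $\by^\star$ is not full protection so $\sum_d d(m_d-y^\star_d)>0$). Hence $\partial\overline{SC}/\partial y_{\tilde d}(\by^\star)<0$, so a small increase of $y_{\tilde d}$ strictly lowers $\overline{SC}$, contradicting that $\by^\star$ solves SP‑OPT. Therefore $\sum_d y^\star_d\ge\sum_d x^\star_{d,P}$.

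I expect the main obstacle to be the bookkeeping in the middle step: deducing componentwise domination of the NE's protection profile over the optimum's purely from the two threshold structures, while carefully ruling out the degenerate sub‑cases (empty or full protected sets, and the $L_P=0$ corner in which the externality factor could a priori vanish). The first‑order computation itself is routine, given that $\gamma$ and $\lambda$ are affine in the coordinates $x_{d,P}$ and $e$ has the closed form (\ref{eq:exposure}).
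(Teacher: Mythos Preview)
Your argument is correct. The paper omits its own proof of Theorem~\ref{thm:7}, deferring to the analogous result in \cite{La_TON2013}, so there is no direct line-by-line comparison to make; however, your first-order/marginal-agent approach is exactly the machinery the paper itself deploys in Appendix~\ref{appen:thm5a} (see eq.~(\ref{eq:appen5a-2})), so the route you take is the natural and expected one. The reduction to componentwise domination $y^\star_d\le x^\star_{d,P}$ via the two threshold structures (Theorems~\ref{thm:0} and \ref{thm:5}) is clean, the partial derivative formula is correct, and your handling of the strictness of the externality term---including the $L_P=0$ corner, where you fall back on $\by^\star$ not being full protection---closes the argument properly. One cosmetic remark: in verifying that the externality factor $L_U\,d_{\avg}-\Delta L\sum_d d\,y^\star_d$ is bounded below by $L_P\,d_{\avg}$, you are implicitly invoking Assumption~\ref{assm:0} (so that $\sum_d d\,m_d=d_{\avg}$); it is worth making that explicit.
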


Theorem~\ref{thm:7} tells us that the damages/losses due to 
attacks are higher at NEs than at the system optimum. 
Hence, because the system optimum is unique, the savings from 
smaller investments in protection at NEs
(compared to system optimum) are outweighed by the increases
in damages.
Thus, the network security degrades as a result of selfish nature of 
the players as suggested in \cite{LelargeBolot2008, LelargeBolot2009}. 
This naturally leads to our next question: How efficient are NEs in 
comparison to the social optimum?

\subsection{Price of anarchy}	\label{subsec:POA}

Inefficiency of NEs is well documented, e.g., \cite{Dubey}, \cite{JohaTsit}, 
\cite[Chap. 17-21]{AGT}, \cite{Rough2005}. In particular, 
the {\em Prisoner's Dilemma}
illustrates this clearly~\cite{Pound}. However, in some cases, the 
inefficiency of NEs can be bounded by finite POA/POS~\cite{KoutPapa}. 

Recall that because all NEs achieve the same social cost in the population 
games we consider by virtue of Theorem~\ref{thm:1}, the POA and POS are 
identical. We are interested in investigating the relation between system 
parameters, including degree distribution 
${\bff}$, $K$ and $\beta_{IA}$, and the POA. 
\\ \vspace{-0.1in}

\begin{theorem}	\label{thm:8}
Let $\bm$ be a population size vector and $d_{\avg}$ be the average degree
of the populations. Suppose $c_P \geq \Delta L \cdot \tau_{DA}$. Then, 
for any $K \in \N$ and $\beta_{IA} \in (0, 1]$,
\beqa
\frac{ SC\big(\bN(\bm, K, \beta_{IA}) \big) }
	{ \tC\big(\by^\star(\bm, K, \beta_{IA})\big) }
	 \leq 1 + d_{\avg} \cdot e_{\max}(\bm, K, \beta_{IA}),  
	\label{eq:thm8}
\eeqa
where
\beqan
e_{\max}(\bm, K, \beta_{IA}) 
\myeq \beta_{IA} \ p_U^i \sum_{k=0}^{K-1} \left( \frac{\beta_{IA} \ p_U^i}{d_{\avg}}
	\sum_{d \in \cD} d (d-1) m_d \right)^k
\eeqan
is the largest possible exposure nodes can see when no population invests in 
protection, i.e., $x_{d, P} = 0$ for all $d \in \cD$. 
\end{theorem}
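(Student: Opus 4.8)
The plan is to bound the numerator $SC\big(\bN(\bm,K,\beta_{IA})\big)$ and the denominator $\tC\big(\by^\star(\bm,K,\beta_{IA})\big)$ separately, arranging both estimates to pivot on the common baseline value $\tau_{DA}L_U$, i.e.\ the per-unit-mass cost attributable to direct attacks alone, after which the claimed ratio drops out immediately.

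First I would estimate $SC\big(\bN(\bm,K,\beta_{IA})\big)$ from above. Fix any NE $\bx^\star$. Starting from the social-cost identity $SC(\bx^\star)=\sum_{d\in\cD}\big(x^\star_{d,P}\,\bC_{d,P}(\bx^\star)+(m_d-x^\star_{d,P})\,\bC_{d,N}(\bx^\star)\big)$ in (\ref{eq:SocialCost1}), the key observation is that population $d$'s contribution is at most $m_d\,\bC_{d,N}(\bx^\star)$: this is an equality when $x^\star_{d,P}=0$, and when $x^\star_{d,P}>0$ the equilibrium condition (\ref{eq:NE}) forces $\bC_{d,P}(\bx^\star)\le\bC_{d,N}(\bx^\star)$, so replacing $\bC_{d,P}$ by $\bC_{d,N}$ only inflates the term. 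Using the form of $\bC_{d,N}$ in (\ref{eq:Cost}) with $\sum_d m_d=1$ (Assumption~\ref{assm:0}) and $\sum_d d\,m_d=d_{\avg}$, this gives $SC(\bx^\star)\le\tau_{DA}L_U\big(1+d_{\avg}\,e(\bx^\star)\big)$. It then remains to bound $e(\bx^\star)$, and here I would invoke the monotonicity noted after (\ref{eq:exposure}): $\gamma(\bx)$ is strictly decreasing and $\lambda(\bx)$ nonincreasing in each $x_{d,P}$, and both are nonnegative (every factor $(d-1)\ge 0$ since $d\ge 1$). Since $e(\bx)=\gamma(\bx)\sum_{k=0}^{K-1}\lambda(\bx)^k$ is thus coordinatewise nonincreasing in $(x_{d,P};d\in\cD)$, it is maximized on $\cX$ at $x_{d,P}=0$ for all $d$; substituting $x_{d,P}=0$ into (\ref{eq:gamma}) and (\ref{eq:lambda}) and using $w_d=d\,m_d/d_{\avg}$ yields exactly $e(\bx^\star)\le e_{\max}(\bm,K,\beta_{IA})$. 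Combining, $SC\big(\bN(\bm,K,\beta_{IA})\big)\le\tau_{DA}L_U\big(1+d_{\avg}\,e_{\max}(\bm,K,\beta_{IA})\big)$.

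Next I would bound $\tC(\by^\star)$ from below. Writing $\tC(\by^\star)=\sum_{d\in\cD}\big(y^\star_d\,\bC_{d,P}(\bX(\by^\star))+(m_d-y^\star_d)\,\bC_{d,N}(\bX(\by^\star))\big)$ and using that $e(\cdot)\ge 0$, we get $\bC_{d,N}(\bX(\by^\star))\ge\tau_{DA}L_U$ and $\bC_{d,P}(\bX(\by^\star))\ge\tau_{DA}L_P+c_P$. This is exactly where the hypothesis $c_P\ge\Delta L\cdot\tau_{DA}$ is used: it gives $\tau_{DA}L_P+c_P\ge\tau_{DA}L_P+\tau_{DA}\Delta L=\tau_{DA}L_U$, so $\bC_{d,P}(\bX(\by^\star))\ge\tau_{DA}L_U$ as well. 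Hence every summand is at least $m_d\,\tau_{DA}L_U$, so $\tC(\by^\star)\ge\sum_d m_d\,\tau_{DA}L_U=\tau_{DA}L_U$. Dividing the bound of the previous paragraph by this one gives (\ref{eq:thm8}).

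I expect no genuine analytic obstacle here; the argument is a one-shot two-sided estimate. The only point requiring care is making the two bounds meet: the numerator estimate is driven by discarding protection costs and raising every node's exposure to the worst case $e_{\max}$, which produces the factor $\tau_{DA}L_U(1+d_{\avg}e_{\max})$, while the denominator must be bounded below by the bare direct-attack cost $\tau_{DA}L_U$ — and this is valid precisely because $c_P\ge\Delta L\cdot\tau_{DA}$ guarantees that even a protected node at the social optimum pays at least as much as an unprotected node facing only direct attacks. Recognizing that this hypothesis is exactly what closes the gap between the two sides is the one insight the proof needs.
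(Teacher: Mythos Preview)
Your argument is correct, and it is genuinely different from the paper's. The paper does not bound $SC(\bx^\star)$ directly; instead it expands $SC(\bx^\star)-\tC(\by^\star)$, uses Theorem~\ref{thm:7} (which in turn relies on the structural Theorems~\ref{thm:1} and~\ref{thm:5}) together with $c_P\ge\Delta L\,\tau_{DA}$ to discard the $(c_P-\Delta L\,\tau_{DA})\big(\sum_d x^\star_{d,P}-\sum_d y^\star_d\big)$ term, and then runs a two-case analysis on the sign of $e(\by^\star)\sum_d d\,y^\star_d - e(\bx^\star)\sum_d d\,x^\star_{d,P}$ to show $SC(\bx^\star)-\tC(\by^\star)\le \tau_{DA}L_U\,d_{\avg}\,e_{\max}$. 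The lower bound $\tC(\by^\star)\ge\tau_{DA}L_U$ is obtained the same way you do it.

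Your route is more elementary: by invoking the NE condition~(\ref{eq:NE}) you get $SC(\bx^\star)\le\sum_d m_d\,\bC_{d,N}(\bx^\star)=\tau_{DA}L_U(1+d_{\avg}e(\bx^\star))\le\tau_{DA}L_U(1+d_{\avg}e_{\max})$ in one stroke, bypassing any comparison with the social optimum and hence avoiding Theorems~\ref{thm:5} and~\ref{thm:7} and the case split entirely. The paper's approach, because it tracks the actual gap $SC-\tC$, could in principle be sharpened to a bound that is sensitive to how close $\bx^\star$ and $\by^\star$ are, whereas your bound on the numerator forgets $\by^\star$ completely; but for the stated inequality your argument is cleaner and fully sufficient.
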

\begin{proof}
A proof is given in Appendix~\ref{appen:thm8}. 
\end{proof}

The assumption $c_P \geq \Delta L \cdot \tau_{DA}$ in the theorem 
is reasonable because
it merely requires that the insurance premium is at least the 
difference in the expected losses sustained only from a 
{\em direct} attack, not including any additional expected losses a player 
may incur from {\em indirect} attacks. Since a private insurer 
will likely charge a premium high enough to recoup 
the average insurance payout for insured players, the premium 
will need to be higher than $\xi_{{\rm cov}} 
\big( \tau_{DA} (1 + {d}^{ins}_{\avg} \cdot e(\bx))
L_U - ded \big)$, where $d^{ins}_{\avg}$ is the average degree of insured
players. Therefore, assuming that $\xi_{{\rm cov}}$ is not too small and/or the
deductible is not too large, the premium 
is likely to be at least $\Delta L \cdot \tau_{DA}$.

The upper bound on POA in Theorem~\ref{thm:8} is tight in the
sense that there are examples where the POA is equal to the bound. 
We will provide a numerical example in Section \ref{subsec:numerical_POA},
for which the POA is close to our bound in the theorem.

\section{Numerical results}	\label{sec:Numerical}

In this section, we use numerical examples to i) verify our findings in 
Theorems~\ref{thm:0}, \ref{thm:2a} and \ref{thm:8} 
and ii) illustrate how cascade probability is shaped by 
system parameters. For the first three examples in Sections
\ref{subsec:NR1} through \ref{subsec:NR3}, we use a family of (truncated)
power law degree
distributions given by $\big\{ \bm^{\alpha}; \ \alpha \in [0, 3] 
\big\}$, where $m^\alpha_d \propto d^{-\alpha}$, $d \in \cD$.
Over the years, it has been suggested that many of both natural 
and engineered networks have a power law degree distribution 
(e.g., \cite{Albert2000,Lakhina2003}). 
Using Lemma 3 in Appendix D of \cite{La_TON2013}, 
one can easily show that the degree distribution $\bff(\bm^\alpha)$ 
becomes smaller in the usual stochastic order \cite{ShakedShan}
with increasing $\alpha$. 
This implies that the average degree decreases with $\alpha$, which
ranges from 1.33 (for $\alpha = 3$) to 10.5 (for $\alpha = 0$) with
$D_{\max} = 20$. 
For the last example in Section~\ref{subsec:NR4}, we adopt a family
of (truncated) Poisson degree distributions parameterized by $\lambda
\in [1.1, 10.6]$. 

Also, we would like to mention that, although we assume $p_P^i  = 0$
and $p_U^i = 1$ for our numerical examples presented here, similar 
qualitative results hold when other values satisfying $p_P^i < p_U^i$
are used. 

\subsection{Cascade probability, degree threshold, and protected 
population size}	\label{subsec:NR1}

The parameter values used in the first example 
are provided in Table~\ref{table:1}. 

\begin{table}[h]	
\begin{center}
\begin{tabular}{ | c | c || c | c || c | c || }
  \hline                        
  Parameter & Value & Parameter & Value & Parameter & Value \\ \hline
  $\tau_{DA}$ & 0.95 & $p_U^i$ & 1.0 & $p_P^i$ & 0 \\
  $\xi_{{\rm cov}}$ & 0.8 & $ded$ & 20 & $Cov_{\max}$ & 500 \\
  $L_P$ & 10 & $L_U$ & 100 & $\beta_{IA}$ & 0.85 \\ 
  $c_P$ & 300 & $c_I$ & 40 & $D_{\max}$ & 20 \\
  \hline  
\end{tabular}
\end{center}
\caption{Parameter values for first numerical example.}
	\label{table:1}
\end{table}

\begin{figure}[h]
  \centering
  \includegraphics[width=0.24\textwidth]{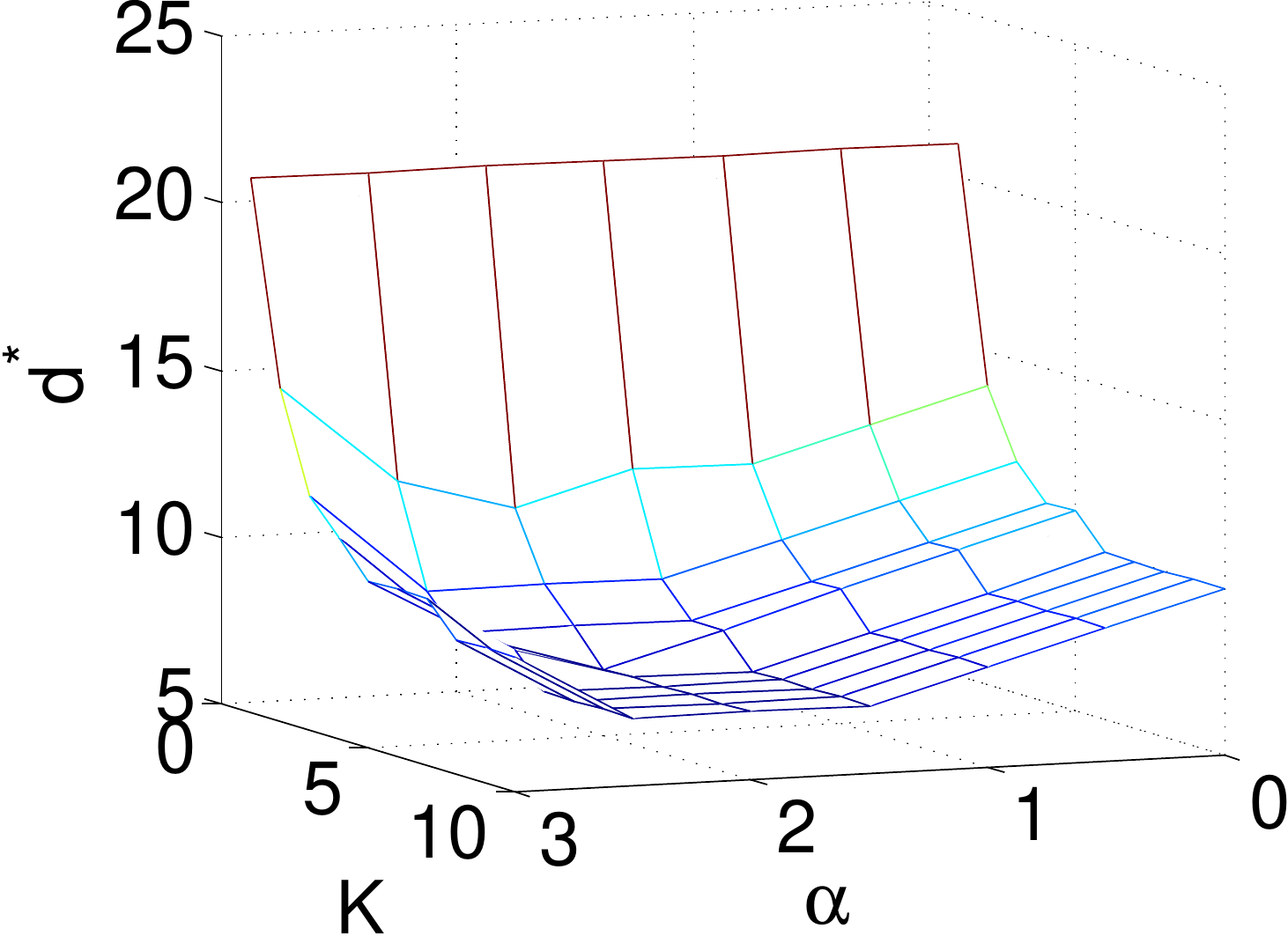}
  \includegraphics[width=0.24\textwidth]{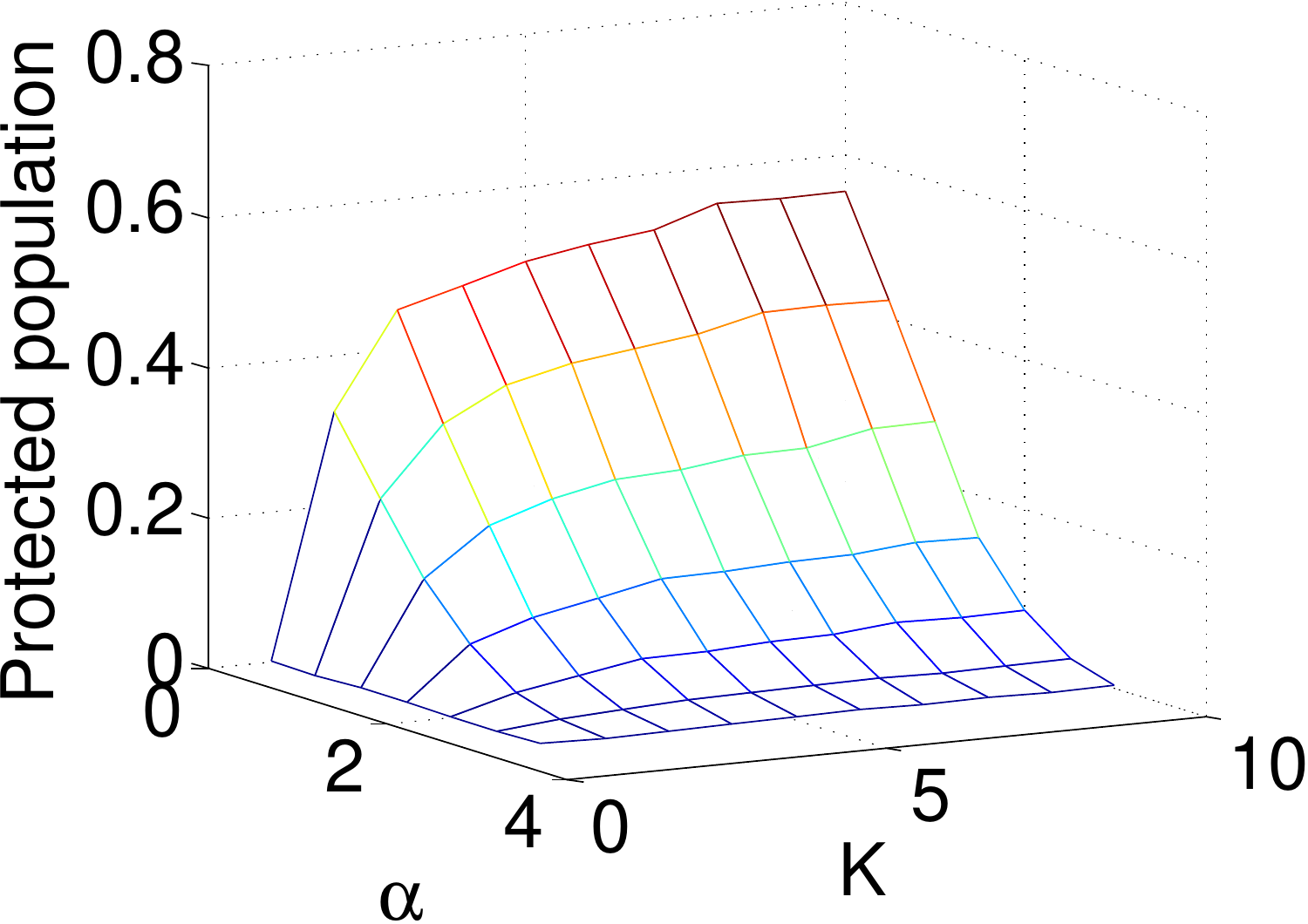}
  \centerline{(a) \hspace{1.4in} (b)}
  \includegraphics[width=0.165\textwidth, angle=90]{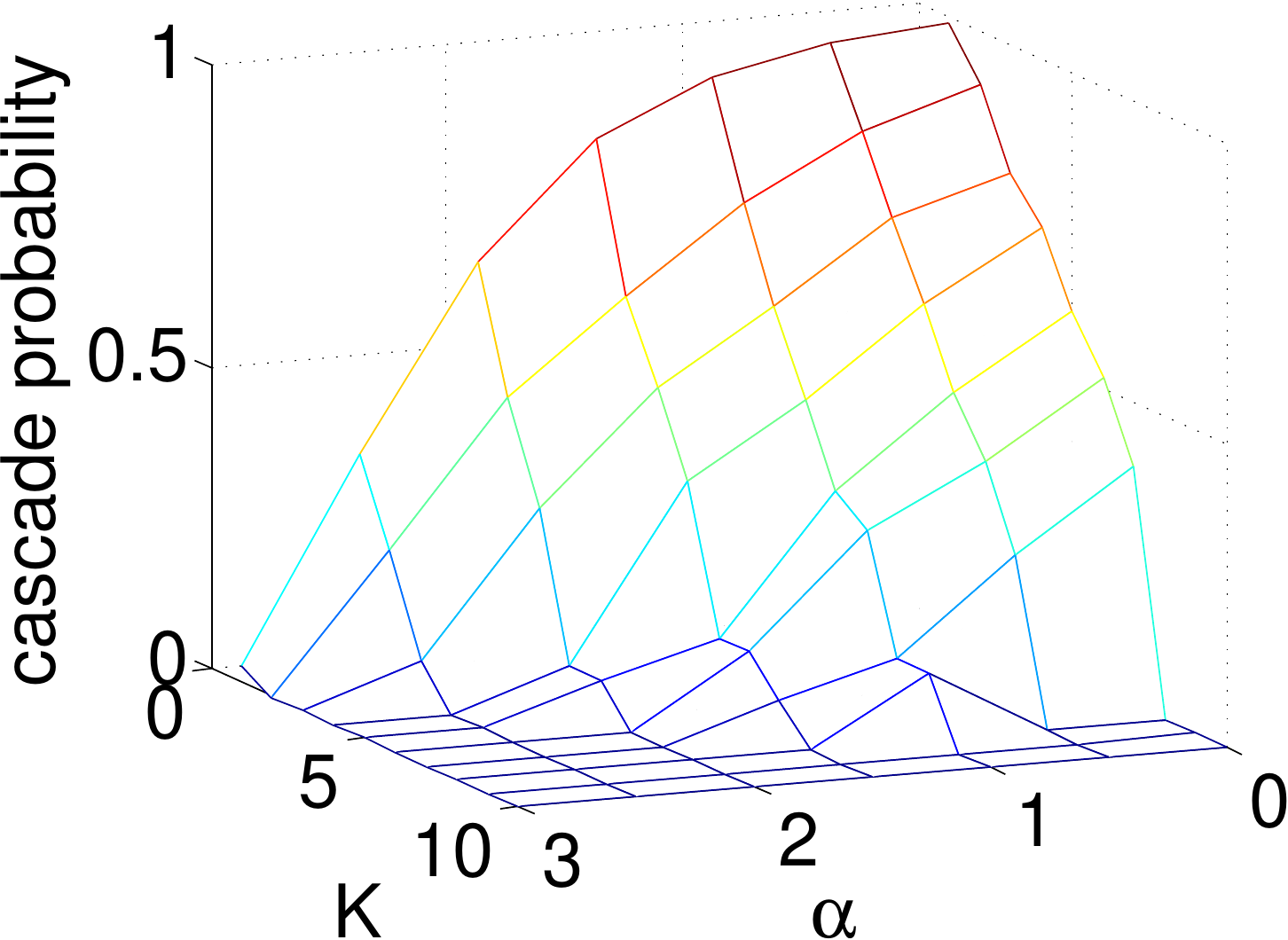}
  \includegraphics[width=0.23\textwidth]{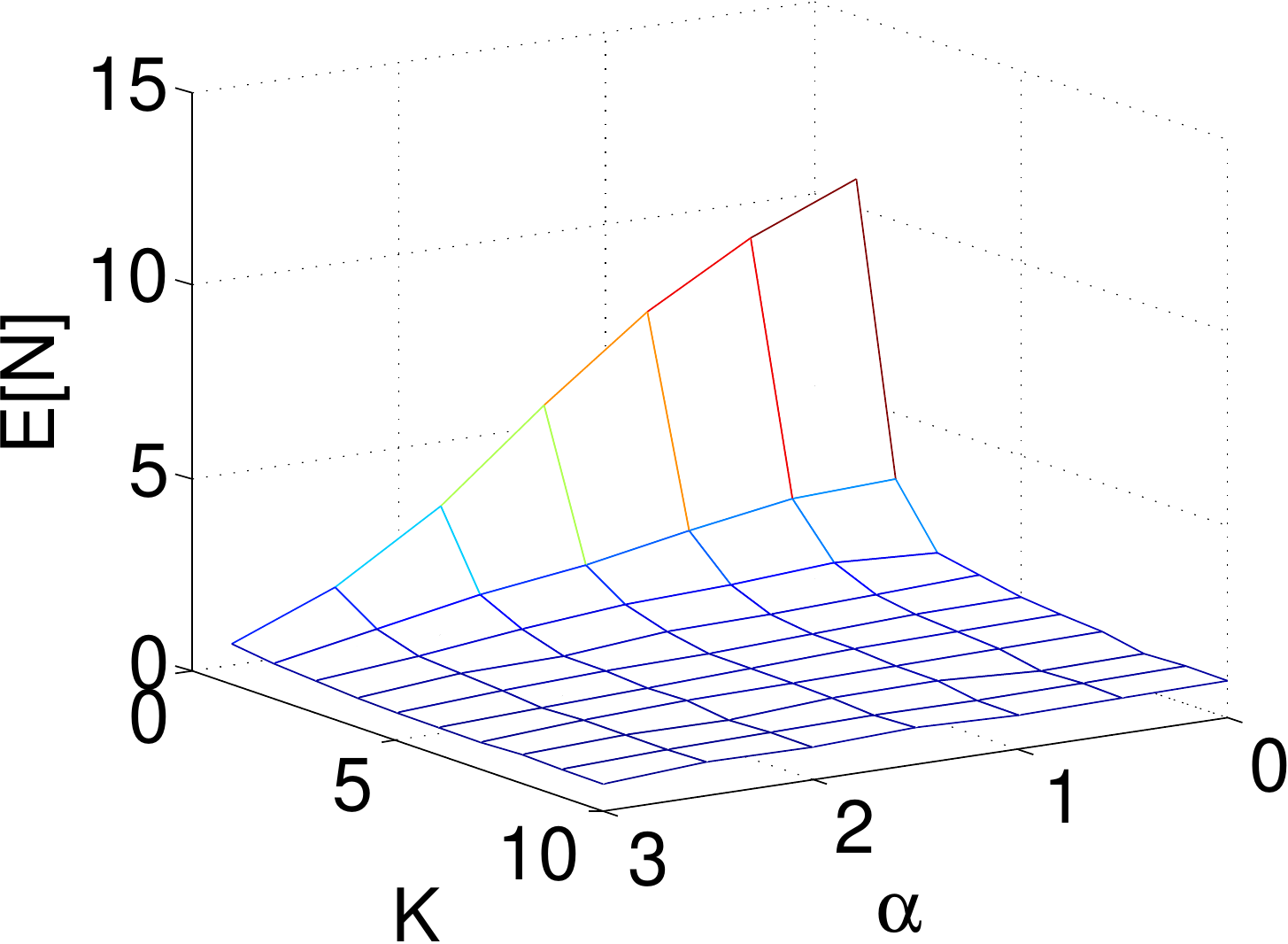}
  \centerline{(c) \hspace{1.4in} (d)}
  \caption{Plot of (a) degree threshold $d^{NE}(\bm^\alpha, K, \beta_{IA})$,
  (b) fraction of protected populations, 
  (c) probability of cascade, and (d) $\E{N}$.}
  \label{fig:1}
\end{figure}

Fig.~\ref{fig:1} plots (a) degree threshold $d^{NE}(\bm^\alpha, 
K, \beta_{IA})$, (b) the fraction of total population that 
invests in protection at NEs, (c) cascade probability given by 
(\ref{eq:cascade_prob}), and (d) $\E{N}$
(discussed in Section~\ref{subsec:Cascade}) 
as a function of the parameter $K$ and 
the power law parameter $\alpha$. It is clear from Fig.~\ref{fig:1}(a)
that, with other parameters fixed, the degree threshold 
$d^{NE}(\bm^\alpha, K, \beta_{IA})$ is nonincreasing in $K$. This leads to a 
larger fraction of total population investing in protection 
(Fig.~\ref{fig:1}(b)) with increasing $K$ 
as proved in Theorem~\ref{thm:2a}. In addition, 
Fig.~\ref{fig:1}(c) shows diminishing cascade probability with 
increasing $K$. 

Fig.~\ref{fig:1}(b) also suggests that the fraction of protected 
population in general goes up with an increasing average degree (or, 
equivalently, decreases with the power law parameter $\alpha$). From
this observation, one might expect the cascade probability to diminish
with the average degree. Surprisingly, Fig.~\ref{fig:1}(c)
indicates that the cascade probability climbs with 
an increasing average degree at the same time. 

We suspect that this somewhat 
counterintuitive observation is a consequence of
what we see in Fig.~\ref{fig:1}(a): Over the parameter settings
where the cascade probability is nonzero, the degree threshold 
$d^{NE}(\bm^\alpha, K, \beta_{IA})$
generally rises with the average degree. This suggests that,
even though more of the population invests in protection,
because nodes with increasing degrees, but smaller than the
degree thresholds are still unprotected and vulnerable, it becomes
easier for an infection 
to propagate throughout the network with the help of such vulnerable 
nodes with increasing degrees.

\subsection{Effects of indirect attack probability $\beta_{IA}$}
	\label{subsec:NR2}

In the second example, we vary IAP 
$\beta_{IA}$ while keeping the values of other
parameters the same as in the first example. Our aim is
to investigate how the IAP influences the 
cascade probability and the fraction of protected population
and compare it to the effects of parameter $K$.  

\begin{figure}[h]
  \centering
  \includegraphics[width=0.24\textwidth]{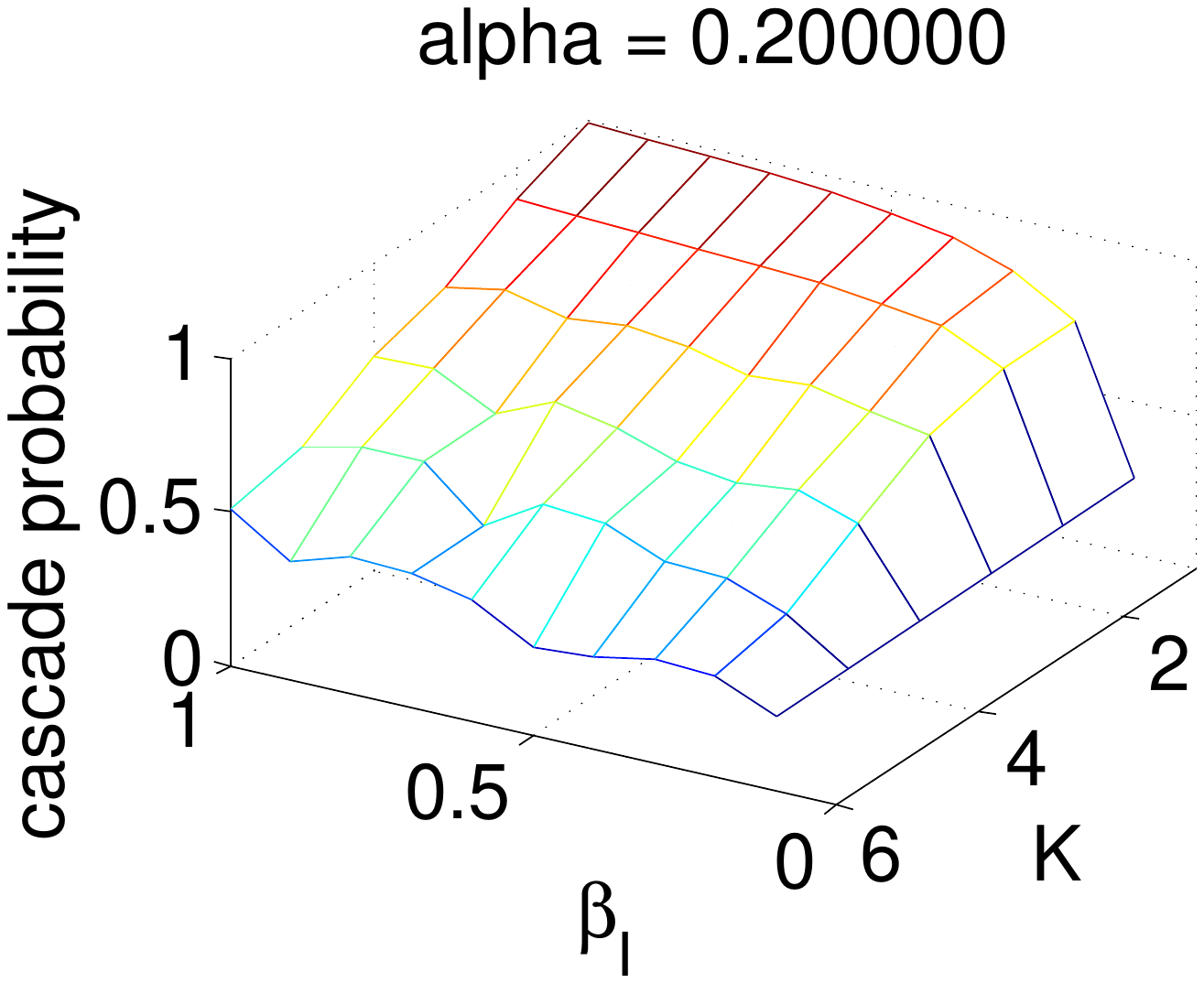}
  \includegraphics[width=0.24\textwidth]{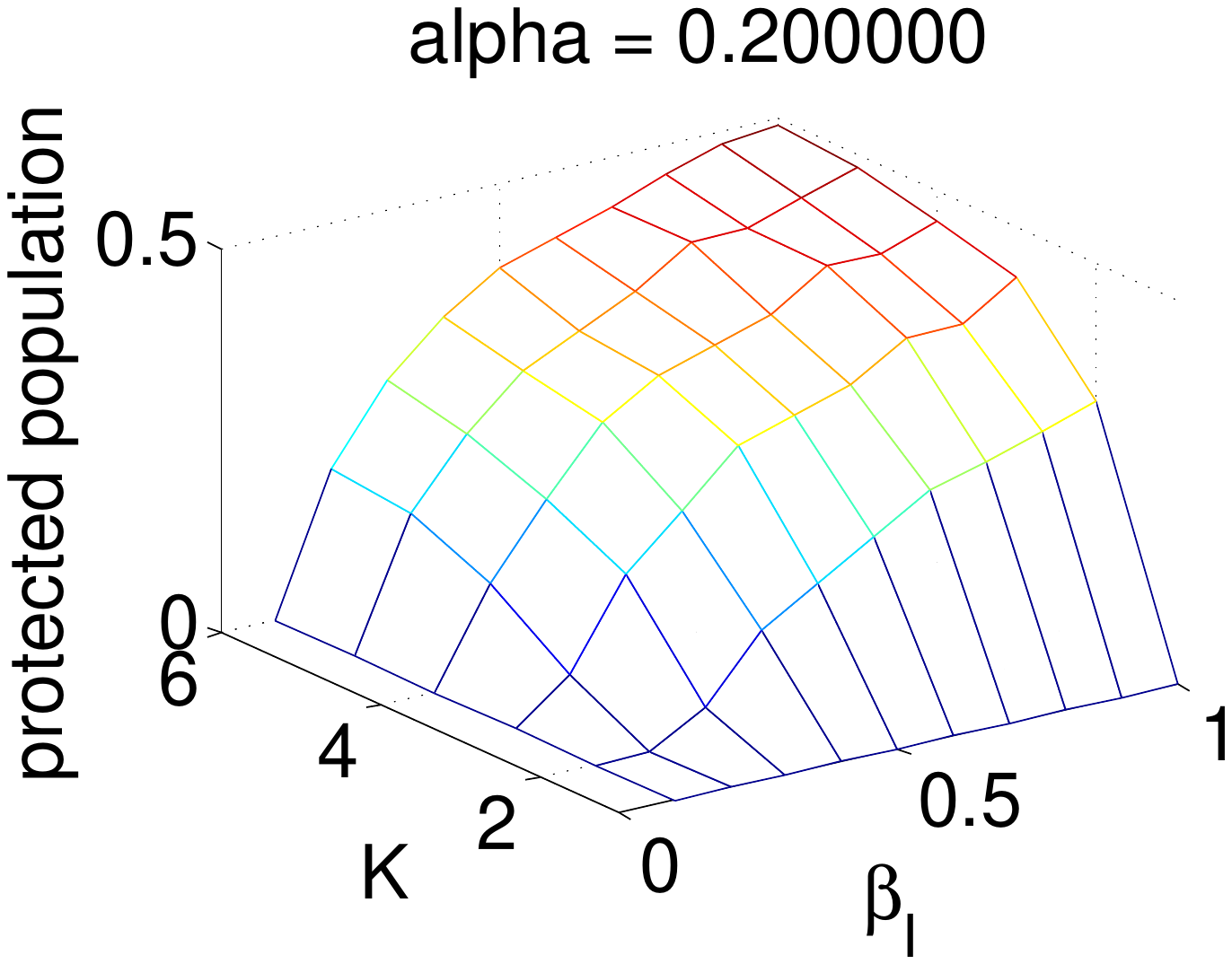}
  \centerline{(a)}
  \includegraphics[width=0.24\textwidth]{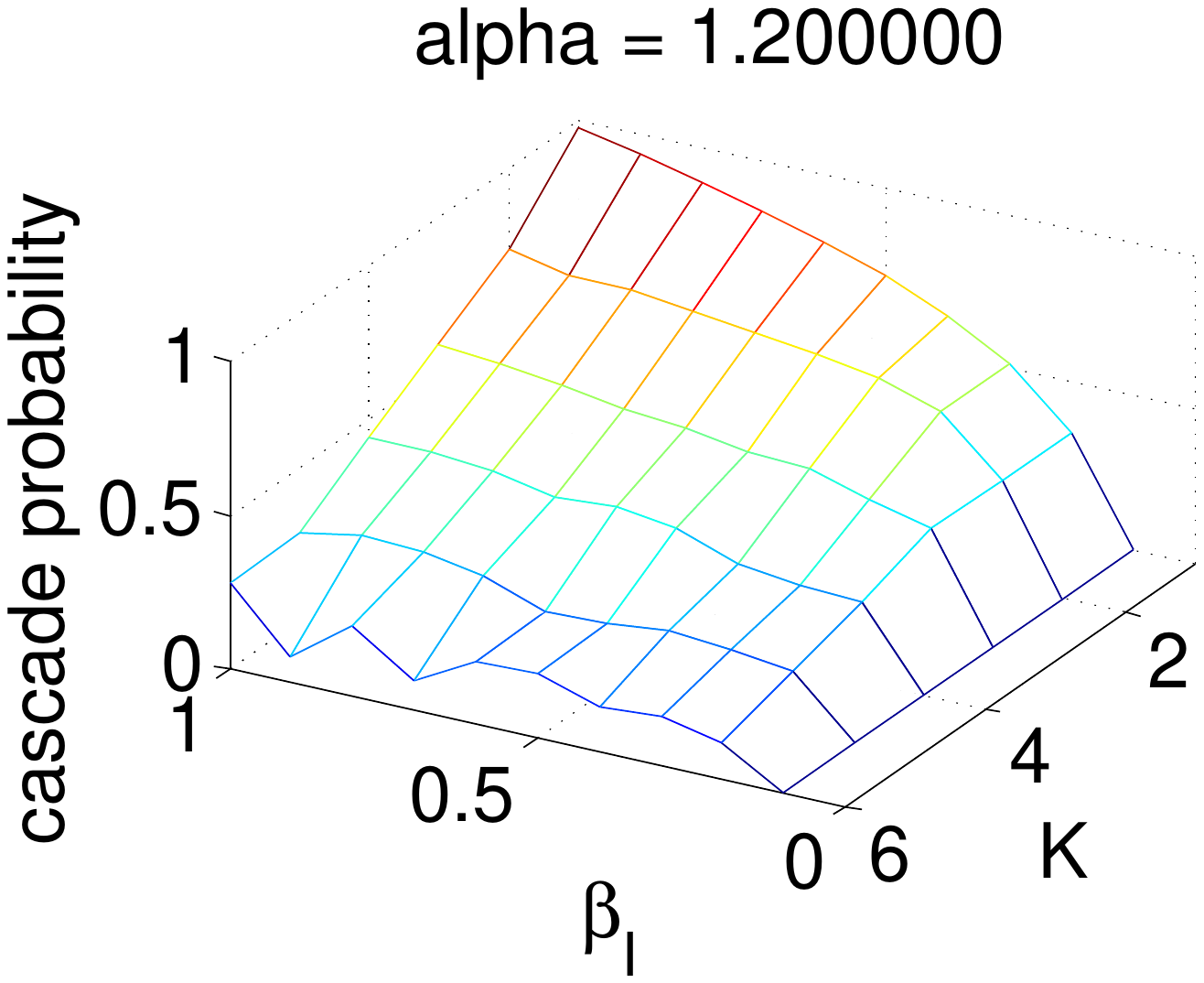}
  \includegraphics[width=0.24\textwidth]{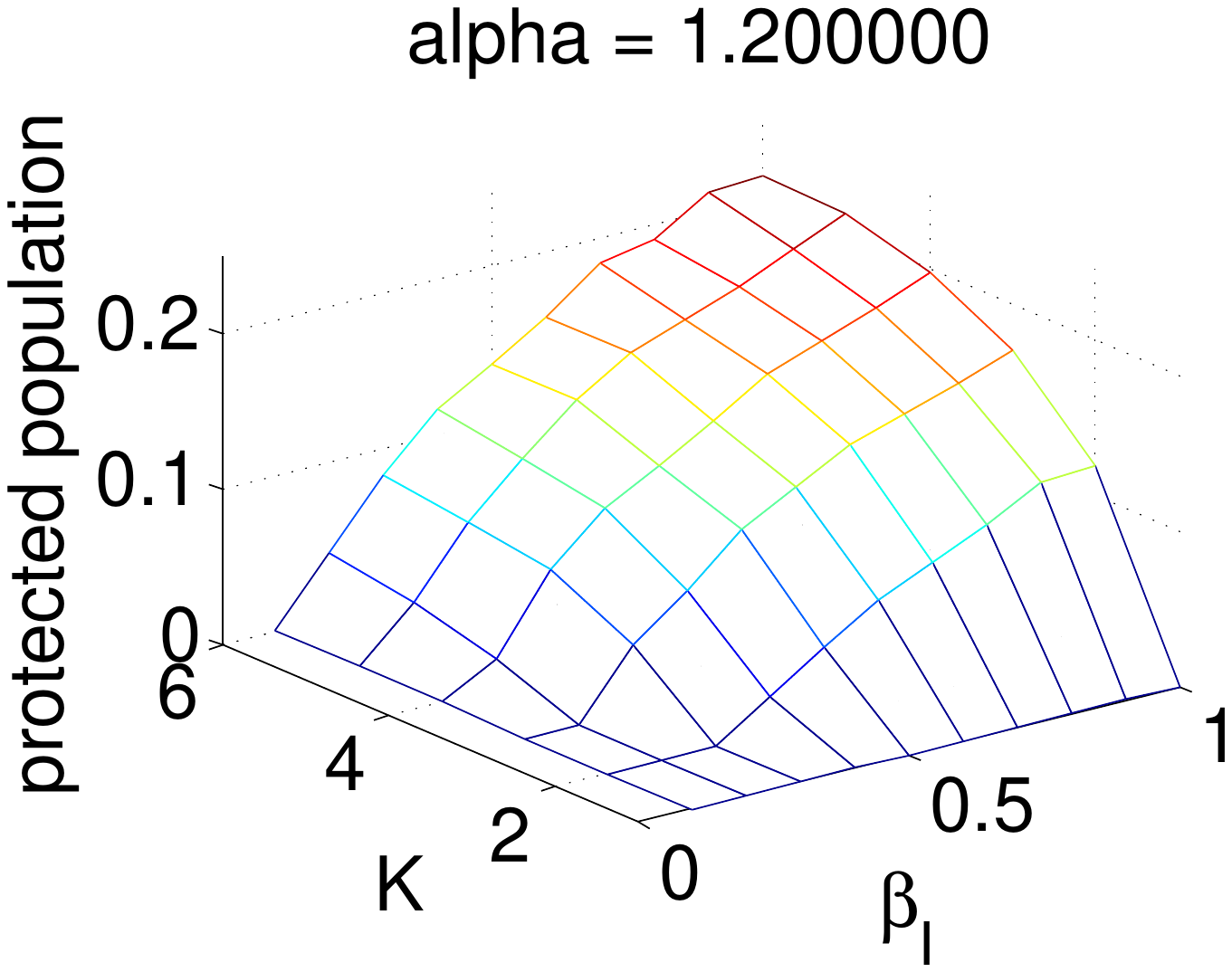}
  \centerline{(b)}
  \includegraphics[width=0.24\textwidth]{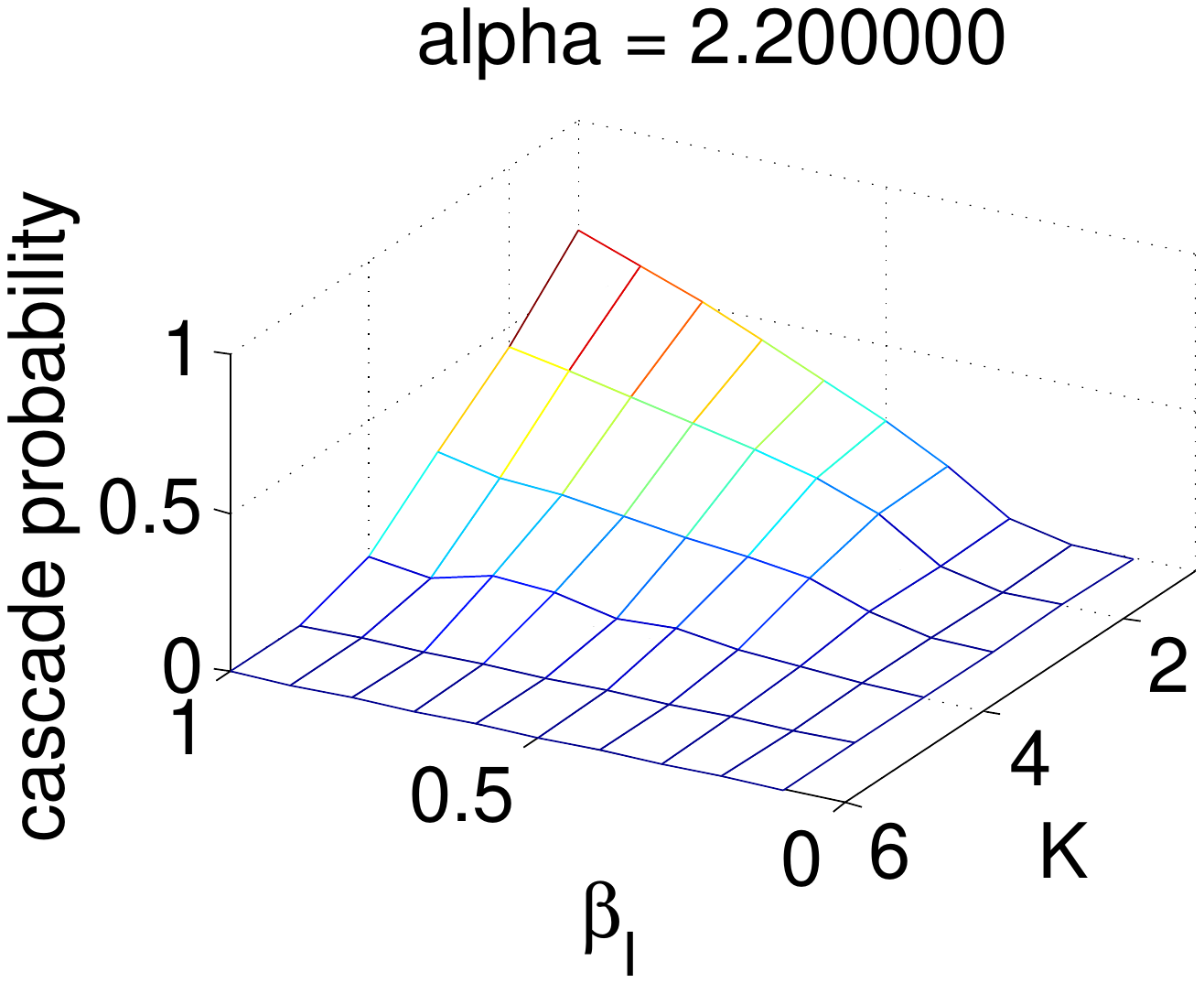}
  \includegraphics[width=0.24\textwidth]{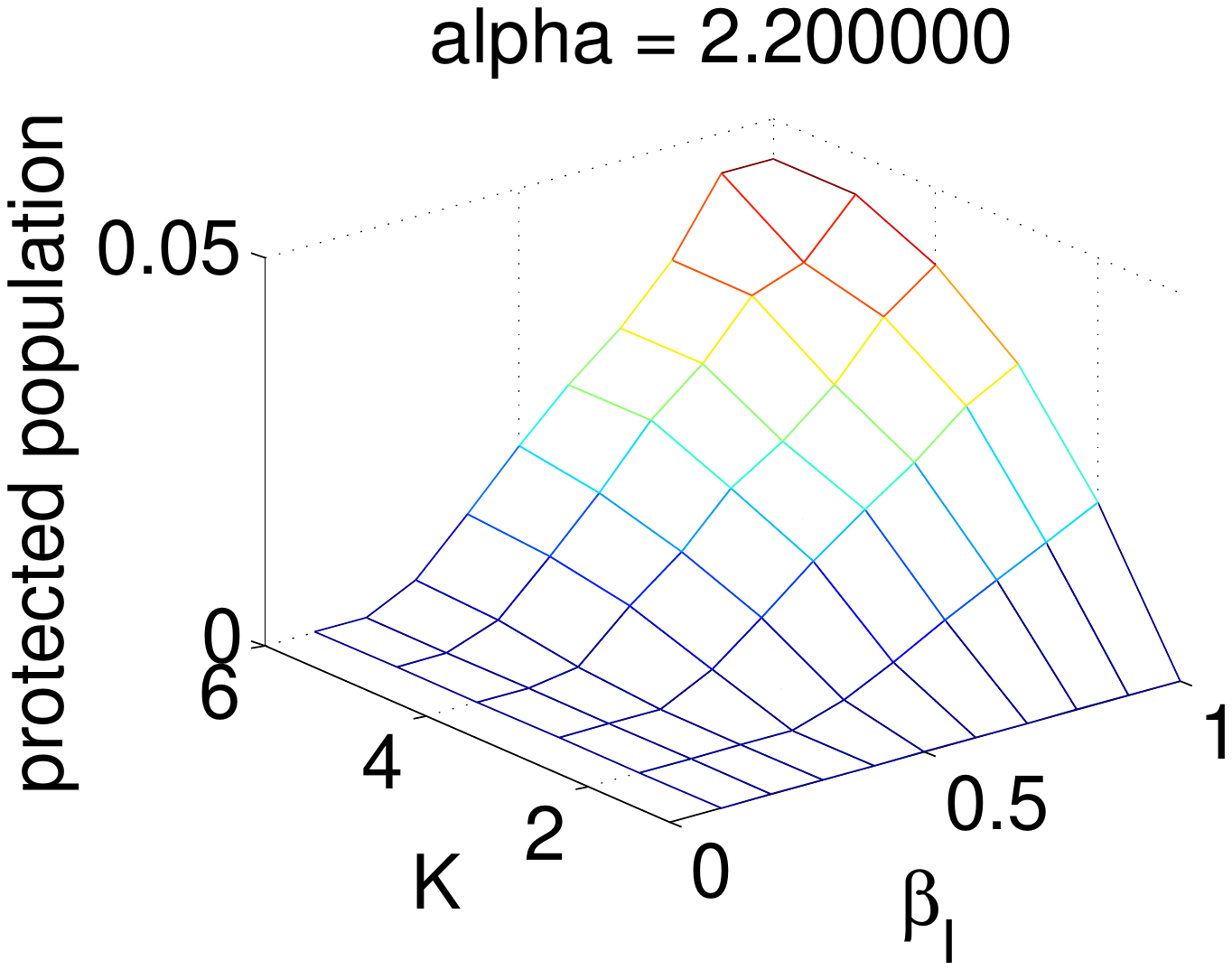}
  \centerline{(c)}
  \caption{Plot of the cascade probability and the protected
  population size. (a) $\alpha = 0.2$, 
  (b) $\alpha = 1.2$, and (c) $\alpha = 2.2$.}
  \label{fig:2}
\end{figure}

Fig.~\ref{fig:2} shows the cascade probability and 
the fraction of population investing in protection as 
IAP $\beta_{IA}$ and parameter $K$
are varied for three different values of $\alpha$ ($\alpha 
=$ 0.2, 1.2 and 2.2). 

We point out three observations. First, as alluded to in the 
first example, the cascade
probability decreases with $\alpha$, as does the fraction of
protected population. As mentioned in Section~\ref{sec:Related}, 
this observation is in sharp contrast with the findings by 
Watts \cite{Watts2002}. 
Figs.~\ref{fig:1} and \ref{fig:2} suggest that when the nodes
are strategic and can choose to protect themselves to reduce
the probability of infection, at least for
certain parameter regimes, the network becomes less 
stable in that the cascade probability rises as the average
degree increases (i.e., $\alpha$ decreases) 
and the second (phase) transition observed in 
\cite{Watts2002} and described in Section
\ref{sec:Related} is missing.

Second, it is clear from Fig.~\ref{fig:2} that, although a 
larger fraction of population invests in protection with 
increasing $\beta_{IA}$ as proved in Theorem~\ref{thm:2a}, the 
cascade probability also rises. What may be surprising at
first sight is how
differently the parameters $K$ and $\beta_{IA}$ affect cascade
probability in spite of the similarity in the way they influence
the portion of protected populations as illustrated in 
Fig.~\ref{fig:2}; 
while raising $K$ results in diminished 
cascade probability, increasing $\beta_{IA}$
leads to rising cascade probability. 

This can be explained as follows: Once other parameters and
social state are fixed, cascade probability does not
depend on $K$. Hence, increasing the protected population size 
reduces cascade probability. On the other hand, with other
parameters and social state fixed, cascade probability climbs
with $\beta_{IA}$. Thus, although the fraction of protected 
population increases with $\beta_{IA}$, because the nodes are 
strategic, they do not invest enough in protection to keep
cascade probability from rising. This can be partially inferred
from growing inefficiency of NEs as hinted by the upper bound
on POA in Theorem~\ref{thm:8}.

Third, Fig.~\ref{fig:2} indicates that the effect of IAP 
is more pronounced when the average degree is larger
in the sense that cascade probability rises more quickly with
the IAP (when it is small).
This is intuitive; when the network
is highly connected, it provides
an infection with a greater number of paths through which the
infection can spread. Hence, even when the IAP
is relatively small, it will be able to propagate
throughout the network more easily.

\subsection{Price of anarchy} \label{subsec:numerical_POA}
	\label{subsec:NR3}

In the next example, we examine the POA as the average degree
of nodes varies. We set $K = 5$ for this example. The values
of other parameters are listed in Table \ref{table:2}. For this
example, we purposely choose parameter values so that the 
POA is close to its upper bound. 

\begin{table}[h]	
\begin{center}
\begin{tabular}{ | c | c || c | c || c | c || }
  \hline                        
  Parameter & Value & Parameter & Value & Parameter & Value \\ \hline
  $\tau_{DA}$ & 0.9 & $p_U^i$ & 1.0 & $p_P^i$ & 0 \\
  $\xi_{{\rm cov}}$ & 0.95 & $ded$ & 5 & $Cov_{\max}$ & 500 \\
  $L_P$ & 5 & $L_U$ & 95 & $\beta_{IA}$ & 0.1 \\ 
  $c_P$ & 88 & $c_I$ & 80 & $D_{\max}$ & 20 \\
  \hline  
\end{tabular}
\end{center}
\caption{Parameter values for first numerical example.}
	\label{table:2}
\end{table}

Fig.~\ref{fig:3} plots (a) degree threshold
$d^{NE}(\bm^\alpha, 5, 0.1)$, (b) fraction of protected 
population, and (c) POA and its bound
in Theorem~\ref{thm:8}. 
We change the $x$-axis to average degree so that it is easier to see 
the effect of average degree on the realized POA and the bound. 
Recall that the average degree decreases with increasing 
$\alpha$.

\begin{figure}[h]
  \centering
  \includegraphics[width=0.155\textwidth]{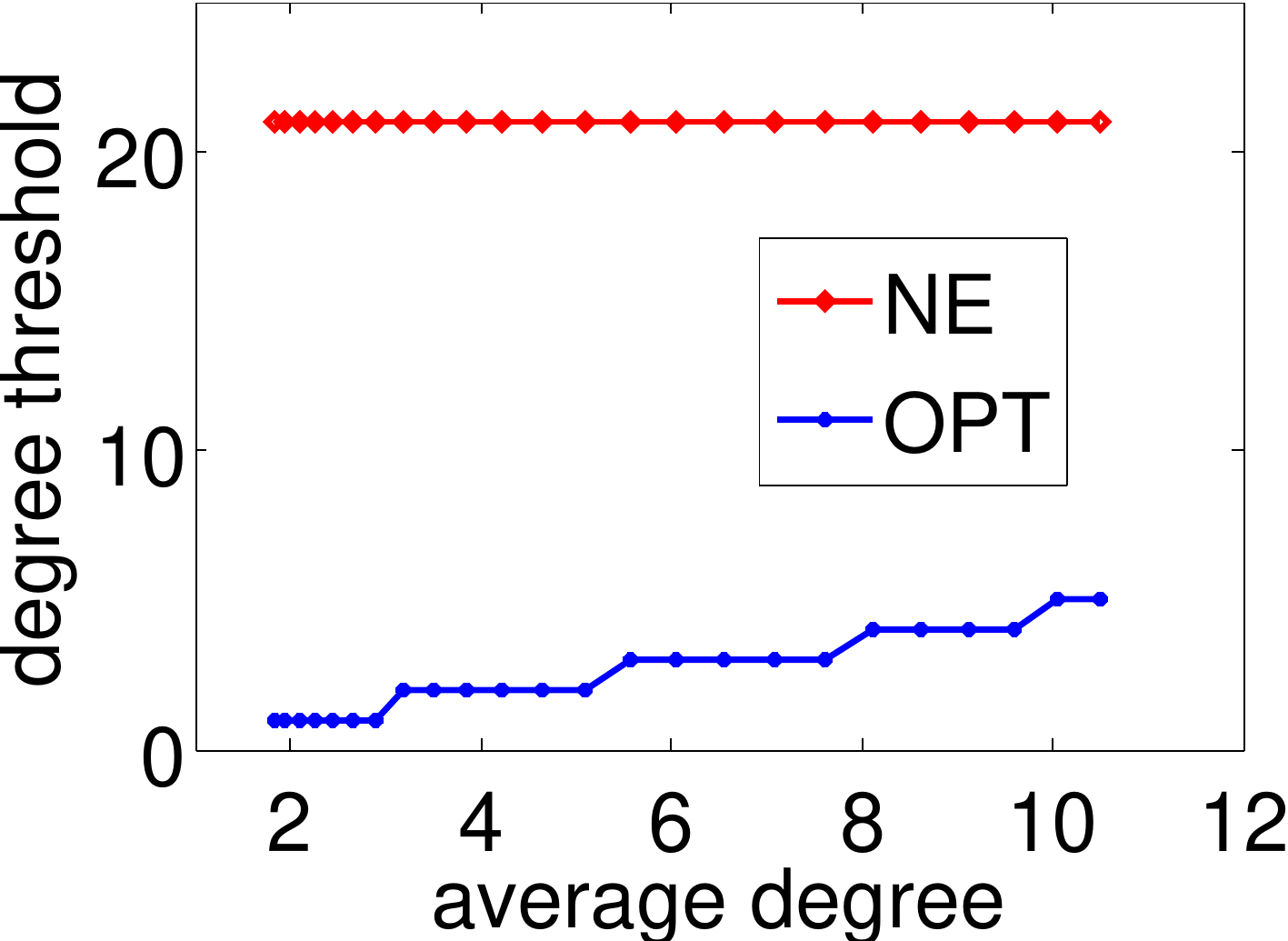}
  \includegraphics[width=0.155\textwidth]{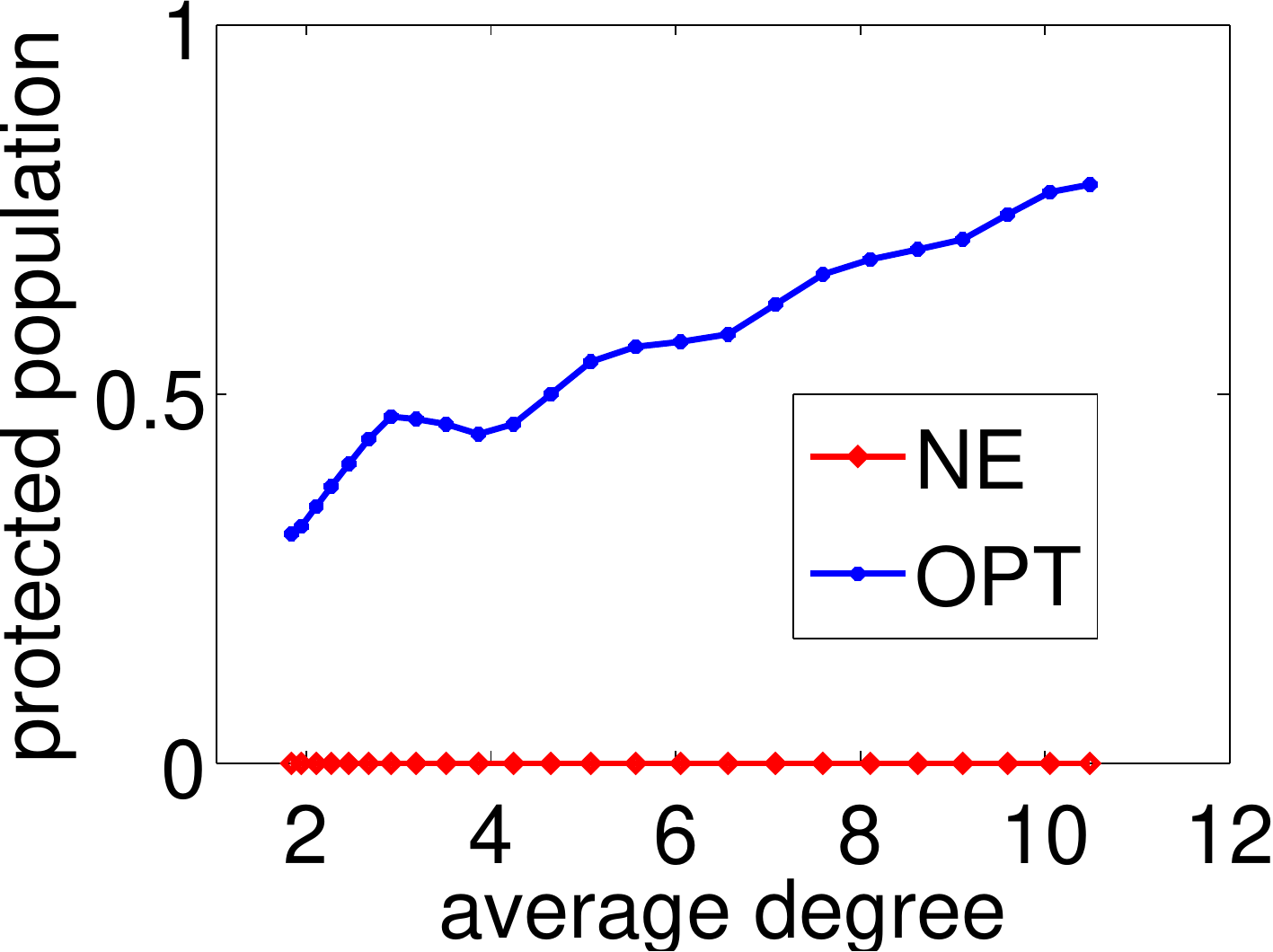}
  \includegraphics[width=0.155\textwidth]{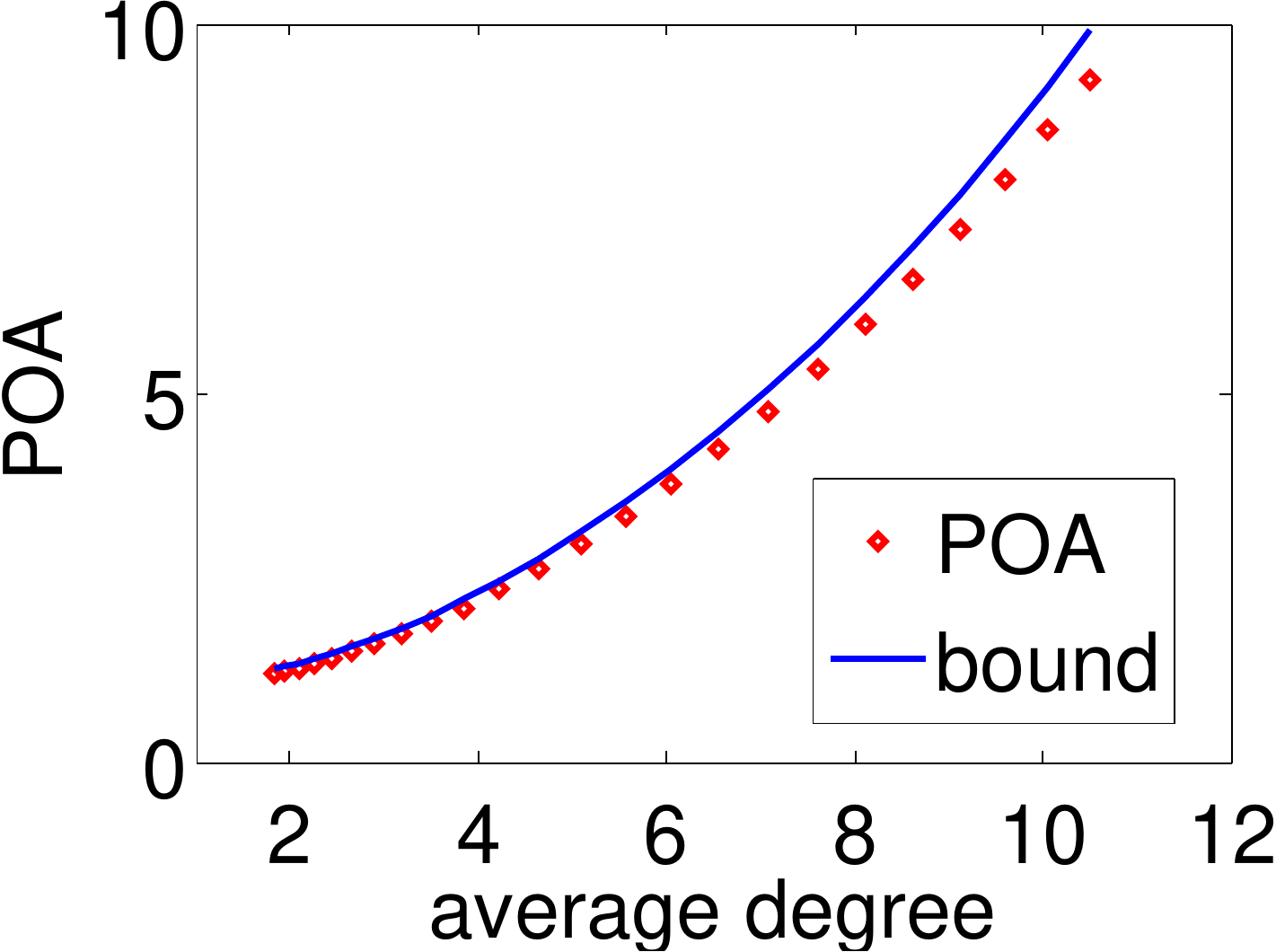}
  \centerline{(a) \hspace{0.9in} (b)\hspace{0.9in} (c)}
  \caption{Plot of (a) degree thresholds, (b) fraction of
  protected population, and (c) POA.}
  \label{fig:3}
\end{figure}

First, it is obvious from Fig.~\ref{fig:3}(c) 
that both the POA and the bound grow much
faster than linearly. Hence, while a greater portion of 
populations elects to protect when $\alpha$ is smaller 
(hence, the average degree is larger) as shown
in previous subsections, the cascade probability rises with an 
increasing average degree, and so does the POA. These findings 
suggest that when the nodes are strategic entities interested only 
in minimizing their own costs, for keeping the cascade probability
small, it is better to have less evenly distributed node degrees 
with fewer large-degree nodes. 

Second, Figs.~\ref{fig:3}(b) and \ref{fig:3}(c) 
tell us the following interesting story. When 
$\alpha$ is small (i.e., the degree distribution is more even), 
although nodes with degrees less than five, which account for about 
20-30 percent of total population, do not invest in protection
at the system optimum, the
POA closely tracks the bound and rises rapidly with the average
degree. Therefore, there is an interesting trade-off one can 
observe: When the degree distribution is less evenly distribution, 
the network is held together by nodes with high degrees. Such
networks are shown to be robust against random attacks, but
are more vulnerable to coordinated attacks targeting high-degree
nodes \cite{Cohen1, Cohen2}. 
One possible way to mitigate the vulnerability is to 
increase the connectivity of the network, hence, the average degree. 
However, Fig.~\ref{fig:3}(a) indicates that increasing network 
connectivity not only leads to higher cascade probability as 
illustrated in previous subsections, but also results in a 
higher social cost and greater POA, which is undesirable.

\subsection{Poisson degree distribution}
	\label{subsec:NR4}

In the last example, we consider a family of (truncated) Poisson 
degree distribution $\big\{ \bm^\lambda; \ \lambda \in [1.1, 10.6] 
\big\}$, where $m^\lambda_d \propto \lambda^d / d!, \ d \in \cD$.
The remaining parameters are 
identical to those in Table~\ref{table:1} of Section~\ref{subsec:NR1}. 
 
\begin{figure}[h]
  \centering
  \includegraphics[width=0.24\textwidth]{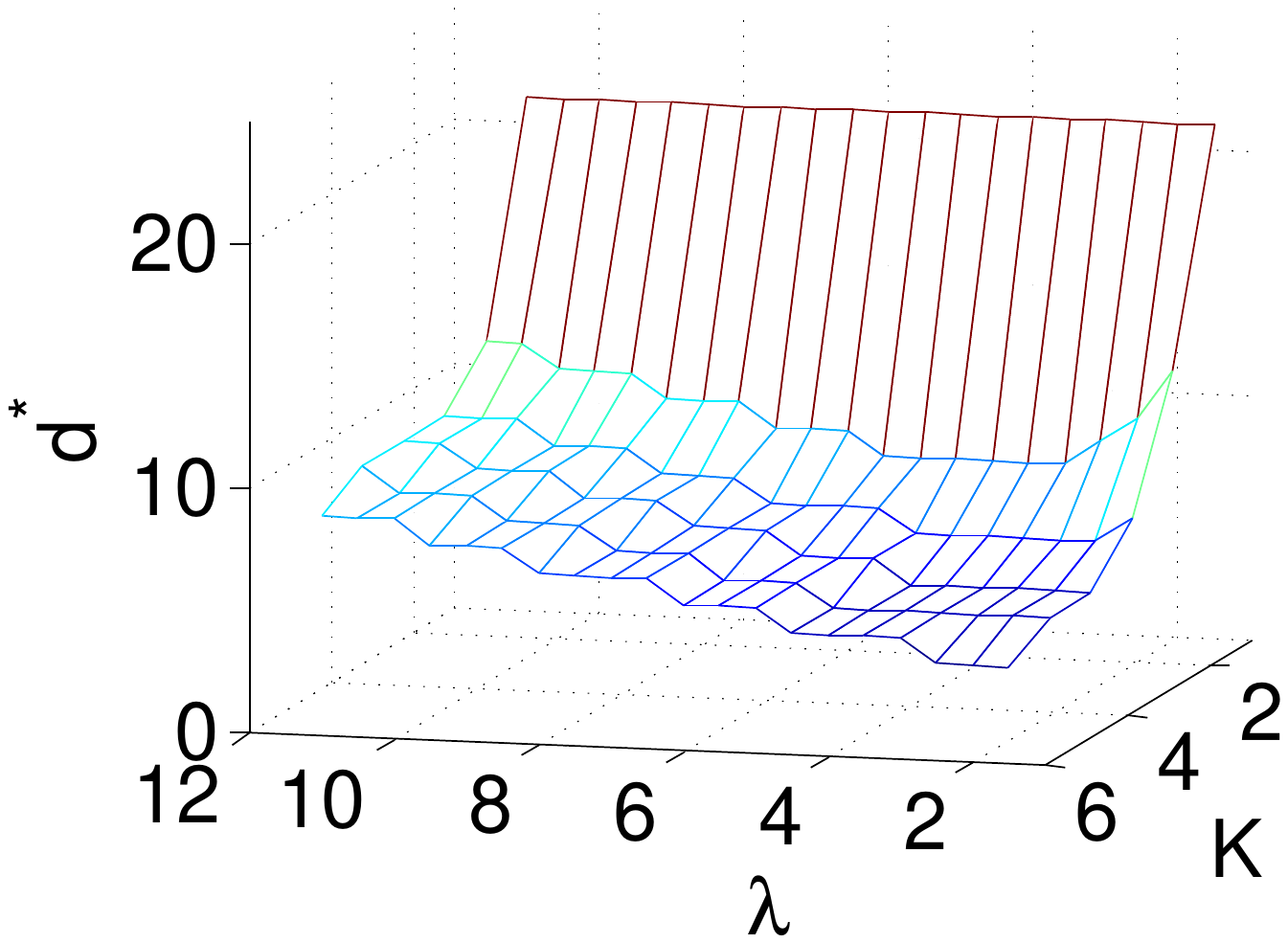}
  \includegraphics[width=0.17\textwidth, angle=90]{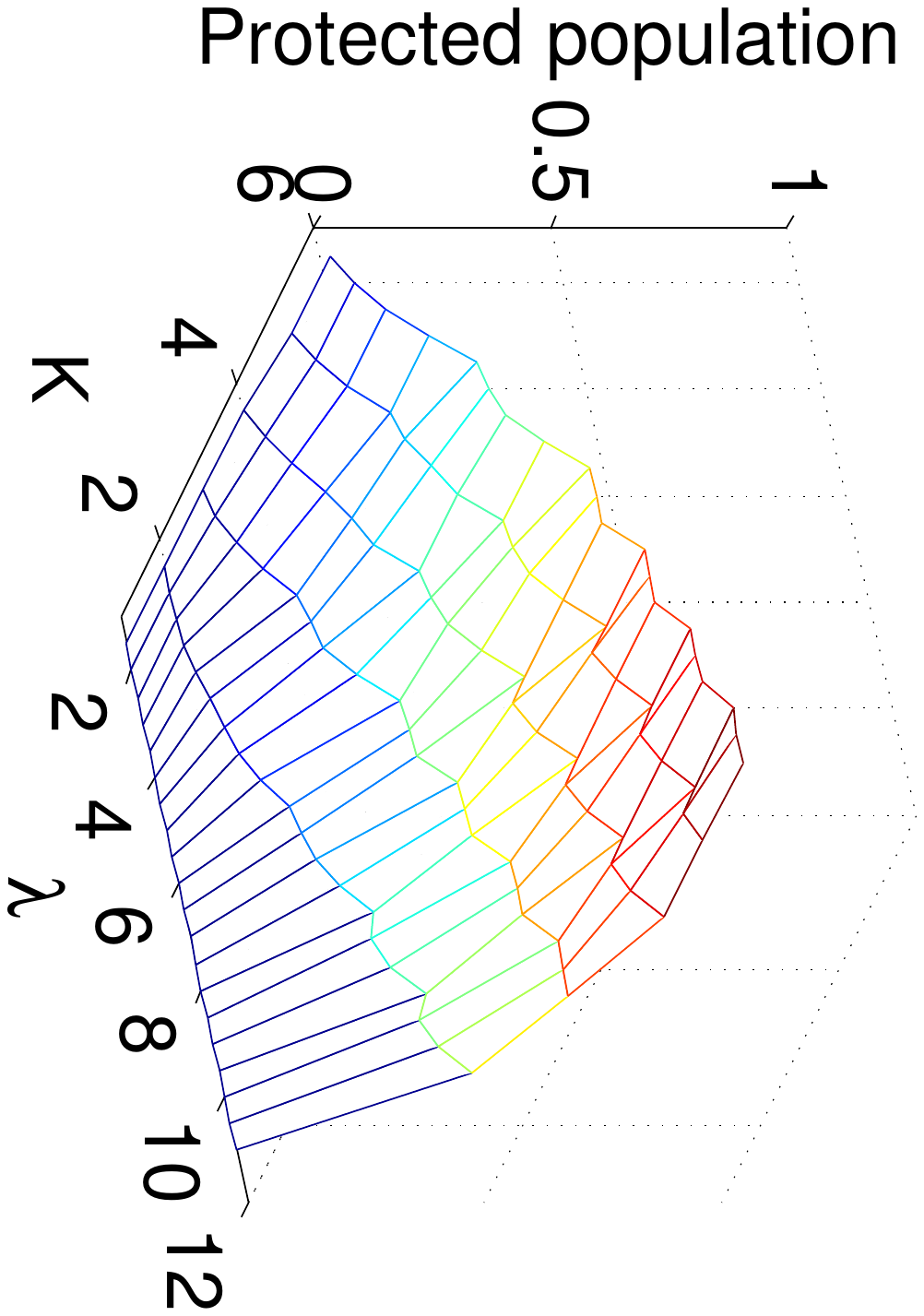}
  \centerline{(a) \hspace{1.4in} (b)}
  \caption{Plot of (a) degree threshold $d^{NE}(\bm^\lambda, 
	K, \beta_{IA})$, and (b) fraction of protected populations.}
  \label{fig:4}
\end{figure}

Fig.~\ref{fig:4} shows (a) the degree threshold $d^{NE}(\bm^\lambda, 
K, \beta_{IA})$ and (b) the fraction of protected populations as a function of
$\lambda$ and $K$. Clearly, the percentage of 
protected population tends to increase with the average degree
(although there is no strict monotonicity), which is consistent with
an earlier observation with power law degree distributions. In addition, 
the degree threshold $d^{NE}(\bm^\lambda, K, \beta_{IA})$ tends to 
climb with the average degree. 

\begin{figure}[h]
  \centering
  \includegraphics[width=0.24\textwidth]{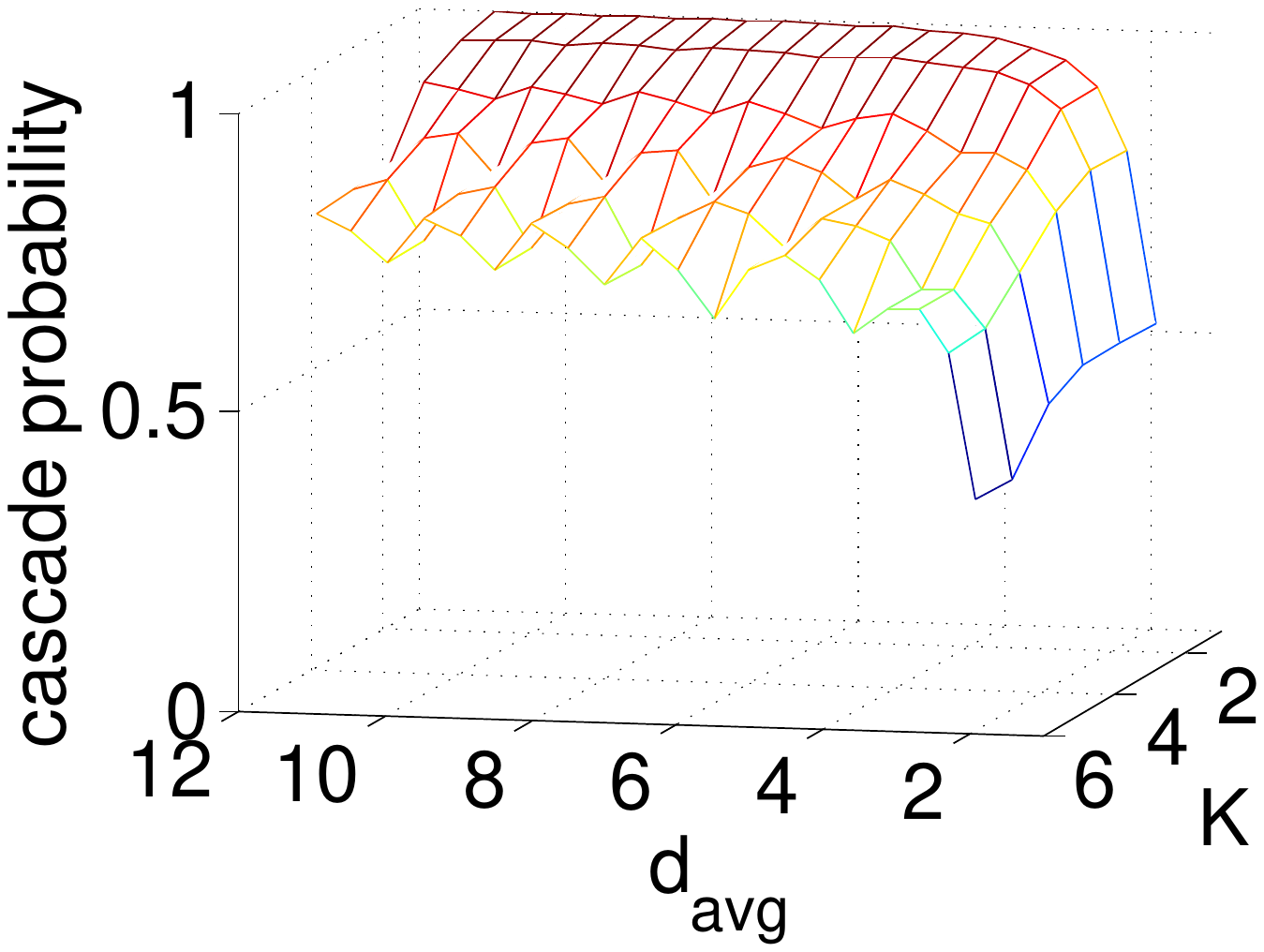}
  \includegraphics[width=0.24\textwidth]{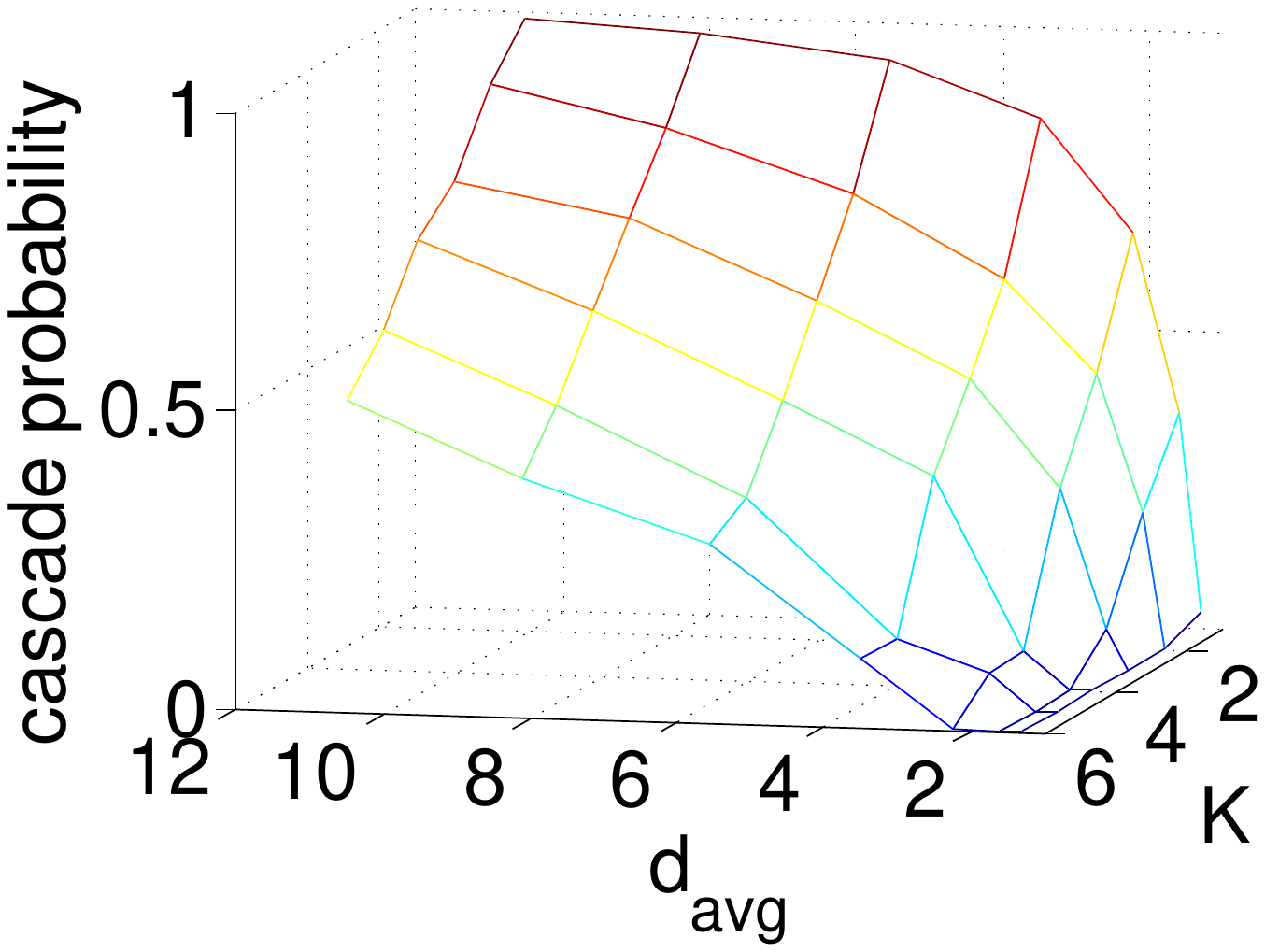}
  \centerline{(a) \hspace{1.4in} (b)}
  \caption{Plot of cascade probability. (a) Poisson distributions, 
  (b) power laws.}
  \label{fig:5}
\end{figure}

In Fig.~\ref{fig:5}, we plot the cascade probability for both (a) Poisson
distributions and (b) power laws. There are two observations we would
like to point out. First, in the case of Poisson distributions, the 
cascade probability shows a cyclic behavior. While similar cyclic patterns
exist with power law degree distributions as well, they are more pronounced
with Poisson degree distributions. These cycles shown in 
Fig.~\ref{fig:5}(a) coincide with the degree threshold
$d^{NE}(\bm^\lambda, K, \beta_{IA})$ in Fig.~\ref{fig:4}(a); the 
dips in the cascade probability 
occur while the degree threshold remains constant. We suspect that 
these cycles are a side effect of a population game model that is 
a deterministic model. 
Nonetheless, the two plots show similar general trends and the
cascade probability reveals an increasing trend with the average degree
for both power law and Poisson degree distributions.

Second, the cascade probability exhibits higher sensitivity with 
respect to the average degree, especially when the average degree is
small, in case of Poisson distributions. In other words, as the average
degree rises, the cascade probability increases more rapidly with
Poisson distributions than with power laws. This suggests that, even for 
a fixed average degree, the cascade probability is likely to depend 
very much on the underlying degree distribution.

\section{Conclusions}	\label{sec:Conclusion}

We studied interdependent security with strategic agents. In 
particular, we examined how various system parameters and network
properties shape the decisions of strategic agents and resulting
system security and social cost. We established the existence
of a degree threshold at both Nash equilibria and social optima. 
Furthermore, we demonstrated the uniqueness of social cost
at Nash equilibria, although there could be more than one 
Nash equilibrium. In addition, we derived an upper bound on 
the POA, which increases superlinearly with the {\em average} 
degree of nodes in general, and demonstrated that the bound is 
tight. Finally, our study suggests that as the average degree 
increases, despite a higher fraction of nodes investing 
in protection at Nash equilibria, cascade probability also rises.


%
%

%
%
%

\appendices

\section{Proof of Theorem~\ref{thm:2a}}	\label{appen:thm2a}

We only prove the first part of the theorem as the second part follows
from essentially an identical argument.
Suppose that the theorem is false for some population size vector $\bm$, 
IAP $\beta_{IA}$,  
and two distinct $K_1$ and $K_2$ satisfying $1 \leq K_1 < K_2$. We will
show that this results in a contradiction. In order to make the dependence
on the parameter $K$ explicit, we denote the cost function ${\bf C}(\bx)$ 
and the exposure $e(\bx)$ by
${\bf C}(\bx ; K)$ and $e(\bx; K)$, respectively. Moreover, for notational
simplicity, we denote ${\bf N}(\bm, K_i, \beta_{IA}), \ i = 1, 2,$ by $\bx^i$ 
in this section. 

We first state a property that will be used shortly.
\\ \vspace{-0.1in}

{\bf {\em Property P1:}} Suppose that $\bar{\bx}^1$ and 
$\bar{\bx}^2$ are two social states such that $e(\bar{\bx}^1) < 
e(\bar{\bx}^2)$. Then, the following inequalities 
hold, which follow directly from the cost function given in 
(\ref{eq:gamma}) - (\ref{eq:insurance}) and Assumption~\ref{assm:1}-a: 
For all $d \in \cD$, 
\beqan
0 < {\bf C}_{d,P}(\bar{\bx}^2) - {\bf C}_{d,P}(\bar{\bx}^1) 
\myl {\bf C}_{d,I}(\bar{\bx}^2) - {\bf C}_{d,I}(\bar{\bx}^1) \lb
\myleq {\bf C}_{d,N}(\bar{\bx}^2) - {\bf C}_{d,N}(\bar{\bx}^1). 
\eeqan
We point out that this property continues to hold even when we compare
social states for two different population sizes $\bm^1$ and
$\bm^2$ or for different values of parameter $K$ or $\beta_{IA}$, 
which satisfy the inequality in the exposures. 
\\ \vspace{-0.1in}

Let $d_i = \min\{ d \in \cD \ | \ x^i_{d,P} > 0\}$, $i = 1, 2$, with an 
understanding $d_i = D_{\max} + 1$ if $x^i_{d,P} = 0$ for all $d \in \cD$. 
Then, from the 
above assumption and Theorem~\ref{thm:0}, we must have $x_{d_1,P}^2
< x_{d_1,P}^1$. By the definition of an NE,
\beqa
{\bf C}_{d_1, P}(\bx^1; K_1) 
\leq \min\{ {\bf C}_{d_1, I}(\bx^1; K_1), \ {\bf C}_{d_1, N}(\bx^1; K_1) \}.
	\label{eq:appen2a-1} 
\eeqa

From the assumption $x_{d,P}^2 \leq x_{d,P}^1$ for all $d \in \cD$
and $x_{d_1, P}^2 < x_{d_1, P}^1$, we get $e(\bx^1; K_1) < 
e(\bx^2; K_2)$. Therefore, property P1 tells us
\beqa
&& \myhb {\bf C}_{d_1, P}(\bx^2; K_2) - {\bf C}_{d_1, P}(\bx^1; K_1) \lb
& \myb < & \myb  \min_{a \in \{N, I\}} \big( {\bf C}_{d_1, a}(\bx^2; K_2) 
	- {\bf C}_{d_1, a}(\bx^1; K_1) \big). 
	\label{eq:appen2a-2}
\eeqa
Together with (\ref{eq:appen2a-1}), the inequality in (\ref{eq:appen2a-2}) 
yields
\beqan
{\bf C}_{d_1, P}(\bx^2; K_2) 
< \min\{ {\bf C}_{d_1, N}(\bx^2; K_2), \ {\bf C}_{d_1, I}(\bx^2; K_2)\}. 
\eeqan
Obviously, this implies $x_{d_1,P}^2 = m_{d_1}$ and, hence, contradicts 
the assumption $x_{d_1,P}^2 < x_{d_1,P}^1 \leq m_{d_1}$.

\section{Proof of Theorem~\ref{thm:5}}	\label{appen:thm5}

We prove the theorem by contradiction. Assume that there exists
$d' > d^\dagger$ such that $y_{d'}^\star < m_{d'}$. 
Suppose that $\epsilon$ is a constant satisfying $0 < \epsilon < 
\min\{ y_{d^\dagger}^\star, \ m_{d'} - y_{d'}^\star \}$ and $\bu_d$ is a 
zero-one vector whose only non-zero element is the $d$-th entry. 
Let
$\by^\dagger = \by^\star + \epsilon \big( \bu_{d'} - \bu_{d^\dagger} \big)$. 
We will show that $\tC(\by^\dagger) < \tC(\by^\star)$,  contradicting 
the assumption that $\by^\star$ is a minimizer of the social cost. 
For notational simplicity, we write $\bx^\dagger$ and $\bx^\star$ 
in place of  
$\bX(\by^\dagger)$ and $\bX(\by^\star)$, respectively, throughout this
section. 

After a little algebra, 
\beqa
&& \myhb \tC(\by^\star) - \tC(\by^\dagger) \lb
\myeq \tau_{DA} \left( e(\bx^\star) - e(\bx^\dagger) \right)
	\sum_{d \in \cD} d \left( y_d^{\star} \ L_P 
		+ (m_d - y_d^{\star}) L_U \right) \lb
&& \myb  + \ \epsilon \ \tau_{DA} (d' - d^\dagger) e(\bx^\dagger) \Delta L.
	\label{eq:appen5-2}
\eeqa
It is clear that the second term in (\ref{eq:appen5-2})
is strictly positive because $d' > d^\dagger$, $e(\bx^\dagger) > 0$ and
$\Delta L > 0$. Thus, to show that (\ref{eq:appen5-2}) is nonnegative, 
it suffices to prove $e(\bx^\dagger) < e(\bx^\star)$. To this end, 
we demonstrate $\gamma(\bx^\dagger) < \gamma(\bx^\star)$ and 
$\lambda(\bx^\dagger) < \lambda(\bx^\star)$. From (\ref{eq:exposure}), 
these inequalities imply $e(\bx^\dagger) < e(\bx^\star)$.

First, from (\ref{eq:gamma}), 
\beqan
\gamma(\bx^\star) - \gamma(\bx^\dagger)
\myeq \frac{ \epsilon \ \beta_{IA} \ \Delta p}{d_{\avg}} \left( d'
	- d^{\dagger} \right) > 0 . 
\eeqan
Second, from (\ref{eq:lambda}), we get
\beqan
\lambda(\bx^\star) - \lambda(\bx^\dagger) 
\myeq \frac{ \epsilon \ \beta_{IA} \ \Delta p}{d_{\avg}} \big( d' (d'-1)
	- d^{\dagger} (d^{\dagger} - 1) \big) > 0 . 
\eeqan

\section{Proof of Theorem~\ref{thm:5a}}	\label{appen:thm5a} 

Suppose that the theorem is not true and there exist two distinct minimizers
$\by^1$ and $\by^2$. By Theorem~\ref{thm:5}, without loss of generality, 
we assume i) $y_d^1 \leq y_d^2$ for all $d \in \cD$ 
and ii) $y_d^1 < y_d^2$ for at least one $d \in \cD$. 
We will show that this leads to a contradiction. Throughout this section, 
we denote ${\bf X}(\by^i), \ i = 1, 2,$ by $\bx^i$ for notational simplicity. 

Let $d^i = \min\{ d \in \cD \ | \ y^i_d > 0\}, \ i = 1, 2,$ with the convention 
$d^i = D_{\max} + 1$ if $\by^i = {\bf 0}$. Note that $d^2 \leq D_{\max}$
by assumption. Since $y^2_{d^2} > 0$, with a little abuse of notation, 
the one-sided partial derivative 
of the social cost with respect to $y_{d^2}$ at $\by^2$ satisfies
\beqan
\frac{\partial}{\partial y_{d^2}} \overline{SC}(\by^2)
:= \lim_{\delta \downarrow 0} \frac{ \overline{SC}(\by^2) - \overline{SC}(\by^2 
	- \delta \cdot \bu_{d^2})}{\delta} \leq 0
\eeqan
with the equality holding 
when $y^2_{d^2} < m_{d^2}$. Define another one-sided
partial derivative of the social cost with respect to $y_{d^2}$ at $\by^1$
to be 
\beqan
\frac{\partial}{\partial y_{d^2}} \overline{SC}(\by^1)
:= \lim_{\delta \downarrow 0} \frac{ \overline{SC}(\by^1 + \delta \cdot \bu_{d^2})) 
	- \overline{SC}(\by^1)}{\delta}. 
\eeqan
Since $\by^1$ minimizes the social cost,
$\partial  \overline{SC}(\by^1) / \partial y_{d^2} \geq 0$. However, 
we will show that $\partial  \overline{SC}(\by^1) / \partial y_{d^2}
< \partial  \overline{SC}(\by^2) / \partial y_{d^2} \leq 0$, leading to 
a contradiction. 

From (\ref{eq:SocialCost2}), 
\beqa
&& \myhb \frac{\partial}{\partial y_{d^2}} \overline{SC}(\by^i)
= {\bf C}_{d^2, P}(\bx^i) - {\bf C}_{d^2, N}(\bx^i) 
	\label{eq:appen5a-1} \\
&& \myhb +  \sum_{d \in \cD} \left( y_d^i \frac{\partial}{\partial y_{d^2}}
	{\bf C}_{d, P}(\bx^i) + (m_d - y_d^i) \frac{\partial}{\partial y_{d^2}}
		{\bf C}_{d, N}(\bx^i) \right). 
	\nonumber
\eeqa
Using the cost function in (\ref{eq:Cost}), we obtain
\beqan
\frac{\partial}{\partial y_{d^2}} {\bf C}_{d, P}(\bx^i)
\myeq \tau_{DA} \ L_P \ d \frac{\partial}{\partial y_{d^2}} e(\bx^i),  \mbox{ and } 
	\lb
\frac{\partial}{\partial y_{d^2}} {\bf C}_{d, N}(\bx^i)
\myeq \tau_{DA} \ L_U \ d \frac{\partial}{\partial y_{d^2}} e(\bx^i).
\eeqan
Here, $\partial e(\bx^i) / \partial y_{d^2}$ and
$\partial {\bf C}_{d, a}(\bx^i) / \partial y_{d^2}$, $a \in \{P, N\}$, are
appropriate one-sided partial derivatives. 
Substituting these in (\ref{eq:appen5a-1}), 
\beqa
&& \myhb \frac{\partial}{\partial y_{d^2}} \overline{SC}(\by^i)
= c_P - \tau_{DA} \left( 1 + d^2 \ e(\bx^i) \right) \Delta L
	\label{eq:appen5a-2}  \\
&& + \sum_{d \in \cD} \tau_{DA} \ d \left( y_d^i \ L_P  + (m_d - y_d^i) L_U \right)
	\frac{\partial}{\partial y_{d^2}} e(\bx^i). 
	\nonumber 
\eeqa

We rewrite $\partial \overline{SC}(\by^1) / \partial y_{d^2}$ in a more
convenient form for our purpose. 
\beqan
&& \myhb \frac{\partial}{\partial y_{d^2}} \overline{SC}(\by^1)
=  c_P - \tau_{DA} \left( 1 + d^2 \ e(\bx^1) \right) \Delta L \lb
&& + \sum_{d \in \cD} \tau_{DA} \ d  \left( y_d^2 \ L_P  + (m_d - y_d^2) L_U 
	+ (y_d^2 - y_d^1) \Delta L \right) \lb
&&  \hspace{0.7in} \times \frac{\partial}{\partial y_{d^2}} e(\bx^1)
\eeqan
Using the above expression, 
\beqa
&& \myhb \frac{\partial}{\partial y_{d^2}} \overline{SC}(\by^2)
	- \frac{\partial}{\partial y_{d^2}} \overline{SC}(\by^1) \lb
\myeq \tau_{DA} \ d^2 \ \Delta L \big( e(\bx^1) - e(\bx^2) \big) \lb
&& - \sum_{d \in \cD} \tau_{DA} \ d  \Big[ \left( y_d^2 \ L_P  + (m_d - y_d^2) L_U  \right) 
	 \label{eq:appen5a-3} \\
&& \hspace{0.7in} \times \left( \frac{\partial}{\partial y_{d^2}} e(\bx^1)
	 - \frac{\partial}{\partial y_{d^2}} e(\bx^2) \right)  \lb
&& \hspace{0.7in} + (y_d^2 - y_d^1) \Delta L \frac{\partial}{\partial y_{d^2}}
	e(\bx^1) \Big]. 
	\nonumber
\eeqa

Because $\by^1 \leq \by^2$, where the inequality is element-wise, and 
$y_{d^2}^2 > y_{d^2}^1$, we have $e(\bx^2) < e(\bx^1)$ and the
first term in (\ref{eq:appen5a-3}) is positive. Moreover, from the 
definition of the exposure in (\ref{eq:exposure}), it is clear 
$\partial e(\bx^1) / \partial y_{d^2} < 0$. Thus, 
in order to prove $(\ref{eq:appen5a-3}) > 0$, it suffices to show 
$\partial e(\bx^1) / \partial y_{d^2} < \partial e(\bx^2) / \partial y_{d^2}$. 
\beqa
&& \myhb \frac{\partial}{\partial y_{d^2}} e(\bx^i) \lb
\myeq \left( \frac{\partial}{\partial y_{d^2}} \gamma(\bx^i) \right)
	\sum_{k=0}^{K-1} \lambda(\bx^i)^k 
	+ \gamma(\bx^i) \sum_{k=0}^{K-1} \frac{\partial}{\partial y_{d^2}} 
		 \lambda(\bx^i)^k \lb
\myeq - \frac{\beta_{IA} \ \Delta p \ d^2}{d_{\avg}} \left[ 
	\sum_{k=0}^{K-1} \lambda(\bx^i)^k \right. 
	\label{eq:appen5a-4} \\
&& \hspace{0.7in} \left. + \gamma(\bx^i) (d^2 - 1)
	\sum_{k=1}^{K-1} k \ \lambda(\bx^i)^{k-1}  \right].
	\nonumber
\eeqa
As $\gamma(\bx^2) < \gamma(\bx^1)$ and $\lambda(\bx^2)
< \lambda(\bx^1)$, we have from (\ref{eq:appen5a-4}) the desired
inequality 
$\partial e(\bx^1) / \partial y_{d^2} < \partial e(\bx^2) / \partial y_{d^2}$.

\section{Proof of Theorem~\ref{thm:8}}	\label{appen:thm8}

Let $\bx^\star = \bN(\bm, K, \beta_{IA})$ be an NE and $\by^\star 
= \by^\star(\bm, K, \beta_{IA})$ for notational convenience. Also, 
we write $e(\by^\star)$ in place of $e(\bX(\by^\star))$. 
By slightly rewriting the social costs given by  (\ref{eq:SocialCost1}) and 
(\ref{eq:SocialCost2}), we obtain
\beqan
\tC(\by
^\star)
\myeq (c_P - \Delta L \ \tau_{DA}) \sum_{d \in \cD} y_d^{\star} 
	+ \tau_{DA} \ L_U  \lb
&& \myhb + \tau_{DA} \ L_U \ d_{\avg} \ e(\by^\star)  
 	- \Delta L \ \tau_{DA} \ e(\by^\star) \sum_{d \in \cD} d \cdot y_d^{\star}
\eeqan
and 
\beqan
SC( \bx^\star ) 
\myeq \big( c_P - \Delta L \ \tau_{DA} \big) 
	\sum_{d \in \cD} x_{d,P}^\star + \tau_{DA} \ L_U  \lb
&& \myhb + \tau_{DA} \ L_U \ d_{\avg} \ e(\bx^\star) 
 	- \Delta L \ \tau_{DA} \ e(\bx^\star) \sum_{d \in \cD} d \cdot 
	x_{d, P}^\star. 
\eeqan
We first derive an upper bound on the difference $SC(\bx^\star) - \tC(\by^\star)$
followed by a lower bound on $\tC(\by^\star)$. 

Subtracting $\tC(\by^\star)$ from $SC(\bx^\star)$, 
\beqan
&& \myhb SC(\bx^\star) - \tC(\by^\star) \lb
\myeq \big( c_P - \Delta L \ \tau_{DA} \big) \Big(  \sum_{d \in \cD} x_{d,P}^\star 
	- \sum_{d \in \cD} y_d^{\star} \Big) \lb
&& \myb + \Delta L \ \tau_{DA} \Big( e(\by^\star) \sum_{d \in \cD} ( y_d^{\star} \cdot d) 
	- e(\bx^\star) \sum_{d \in \cD} ( x_{d,P}^\star \cdot d)   \Big) \lb
&& \myb + \tau_{DA} \ L_U \ d_{\avg} \big( e(\bx^\star) - e(\by^\star)  \big).
\eeqan
From Theorems \ref{thm:1}, \ref{thm:5} and \ref{thm:7}, we know 
$x^\star_{d,P} \leq y^\star_d$. Hence, together with the assumption 
$c_P \geq \Delta L \ \tau_{DA}$, we get
\beqa
&& \myhb SC(\bx^\star) - \tC(\by^\star) \lb
\myleq \Delta L \ \tau_{DA} \Big( e(\by^\star)  \sum_{d \in \cD} (y_d^{\star} \cdot d)
	- e(\bx^\star) \sum_{d \in \cD} (x_{d,P}^\star \cdot d) \Big) \lb
&& + \tau_{DA} \ L_U \ d_{\avg} \big( e(\bx^\star) - e(\by^\star) \big). 
	\label{eq:appen8-0}
\eeqa

We consider the following two cases.
\\ \vspace{-0.1in}

{\em Case 1: $e(\by^\star)  \sum_{d \in \cD} (y_d^{\star} \cdot d)
	\geq e(\bx^\star) \sum_{d \in \cD} (x_{d,P}^\star \cdot d)$} -- In this case, 
we have 
\beqa
(\ref{eq:appen8-0})
\myleq L_U \  \tau_{DA} \Big( e(\by^\star)  \sum_{d \in \cD} (y_d^{\star} \cdot d)
	- e(\bx^\star) \sum_{d \in \cD} (x_{d,P}^\star \cdot d) \Big) \lb
&& + \tau_{DA} \ L_U \ d_{\avg} \big( e(\bx^\star) - e(\by^\star) \big) \lb
\myeq L_U \  \tau_{DA} \Big[ e(\bx^\star)  \Big( d_{\avg} - 
	\sum_{d \in \cD} (x_{d, P}^\star \cdot d) \Big)  \lb
&& \hspace{0.4in} 
	- e(\by^\star)  \Big( d_{\avg} - \sum_{d \in \cD} (y_{d}^\star \cdot d) \Big) 
		\Big].
	\label{eq:appen8-3}
\eeqa
From (\ref{eq:appen8-3}) it is clear that the maximum is achieved when
$\bx^\star = {\bf 0}$ and $\by^\star = \bm$. Hence, 
\beqan
(\ref{eq:appen8-0})
\myleq  \tau_{DA} \ L_U \ d_{\avg} \ e_{\max}(\bm, K, \beta_{IA}).
\eeqan

{\em Case 2: $e(\by^\star)  \sum_{d \in \cD} (y_d^{\star} \cdot d)
	< e(\bx^\star) \sum_{d \in \cD} (x_{d,P}^\star \cdot d)$} -- Under the 
assumption, it is obvious
\beqan
(\ref{eq:appen8-0})
\myleq \tau_{DA} \ L_U \ d_{\avg} \big( e(\bx^\star) - e(\by^\star) \big) \lb
\myleq \tau_{DA} \ L_U \ d_{\avg} \ e_{\max}(\bm, K, \beta_{IA}). 
\eeqan

From these two cases, it is clear that $\tau_{DA} \ L_U \ d_{\avg} \
e_{\max}(\bm, K, \beta_{IA})$ 
is an upper bound for $SC(\bx^\star) - \tC(\by^\star)$. 

Since we assume $c_P \geq \Delta L \ \tau_{DA}$,
we get the following lower bound on $\tC(\by^\star)$. 
\beqan
\tC(\by^\star)
\mygeq \tau_{DA} \ L_U 
+  \tau_{DA} \ e(\by^\star) \big( L_U \cdot d_{\avg} 
	- \Delta L \sum_{d \in \cD} d \cdot y_d^{\star} \big) \lb
\mygeq \tau_{DA} \ L_U, 
\eeqan
where the second inequality is a consequence of $L_U \cdot d_{\avg} 
\geq \Delta L \sum_{d \in \cD} d \cdot y_d^{\star}$. 

Using the above upper bound on $SC(\bx^\star) - \tC(\by^\star)$ and the lower 
bound on $\tC(\by^\star)$, 
\beqan
\frac{SC(\bx^\star)}{\tC(\by^\star)}
\myeq 1 + \frac{SC(\bx^\star) - \tC(\by^\star)}{\tC(\by^\star)} \lb
\myleq 1 + \frac{ \tau_{DA} \ L_U \ d_{\avg} \ 
	e_{\max}(\bm, K, \beta_{IA})}{  \tau_{DA} \ L_U } 
	\lb
\myeq 1 + d_{\avg} \ e_{\max}(\bm, K, \beta_{IA}). 
\eeqan

\end{document}